\documentclass[10pt]{article}

\usepackage{multirow}
\usepackage{booktabs}
\usepackage{tabularx}
\usepackage{ltablex}
\usepackage{array}
\usepackage{tablefootnote}

\usepackage{amssymb,amsmath,amsthm}
\usepackage{nicefrac,bbm}
\theoremstyle{definition}   
\usepackage{apptools,mathtools,chngcntr}

\usepackage{color,xcolor}
\usepackage{tikz,pgfplots}
\usepgfplotslibrary{colormaps,fillbetween,groupplots}
\usepackage{pgfplotstable}
\usepackage{subcaption}
\usepackage[font=small]{caption}
\usepackage{wrapfig}
\pgfplotsset{compat=newest}
\usetikzlibrary{patterns,shapes,arrows.meta,patterns.meta}
\tikzset{>=latex} 
\usepackage{pgfgantt}

\usepackage[maxbibnames=50, maxcitenames=2, uniquelist=false,
    natbib=true, style=authoryear-comp,
    isbn=false, url=false,giveninits=true,dashed=false]{biblatex}   
\bibliography{bibliography}
\AtBeginBibliography{} 

\usepackage[utf8]{inputenc}
\usepackage[T1]{fontenc}
\usepackage[a4paper,left=25mm,right=25mm,top=25mm,bottom=25mm]{geometry}
\usepackage[hidelinks]{hyperref}
\usepackage{pdfpages}
\usepackage[capitalise]{cleveref}
\crefname{section}{Section}{Sections}
\crefname{appendix}{Appendix}{Appendices}
\crefname{table}{Table}{Tables}
\usepackage{titlesec, blindtext}
\usepackage{marginnote}

\newtheorem{theorem}{Theorem}[section]

\newtheorem{corollary}{Corollary}[section]
\newtheorem{remark}{Remark}[section]

\usepackage[affil-it]{authblk}

\usepackage{enumitem}

\newcommand{\fo}{\mathfrak{o}}
\newcommand{\fp}{\mathfrak{p}}
\newcommand{\fq}{\mathfrak{q}}
\newcommand{\fr}{\mathfrak{r}}
\newcommand{\fs}{\mathfrak{s}}
\newcommand{\ft}{\mathfrak{t}}

\newcommand{\ba}{\boldsymbol{a}}
\newcommand{\bb}{\boldsymbol{b}}

\newcommand{\be}{\boldsymbol{e}}

\newcommand{\bu}{\boldsymbol{u}}
\newcommand{\bn}{\boldsymbol{n}}

\newcommand{\bw}{\boldsymbol{w}}
\newcommand{\bx}{\boldsymbol{x}}

\newcommand{\bA}{\boldsymbol{A}}
\newcommand{\bB}{\boldsymbol{B}}
\newcommand{\bC}{\boldsymbol{C}}

\newcommand{\bF}{\boldsymbol{F}}
\newcommand{\bG}{\boldsymbol{G}}
\newcommand{\bH}{\boldsymbol{H}}
\newcommand{\bI}{\boldsymbol{I}}

\newcommand{\bM}{\boldsymbol{M}}

\newcommand{\bP}{\boldsymbol{P}}
\newcommand{\bQ}{\boldsymbol{Q}}
\newcommand{\bR}{\boldsymbol{R}}

\newcommand{\bU}{\boldsymbol{U}}

\newcommand{\bW}{\boldsymbol{W}}

\newcommand{\bphi}{\boldsymbol{\phi}}

\newcommand{\bxi}{\boldsymbol{\xi}}

\newcommand{\cC}{{\mathcal{C}}}
\newcommand{\cD}{{\mathcal{D}}}

\newcommand{\cF}{{\mathcal{F}}}
\newcommand{\cG}{{\mathcal{G}}}

\newcommand{\cI}{{\mathcal{I}}}

\newcommand{\cK}{{\mathcal{K}}}
\newcommand{\cL}{{\mathcal{L}}}
\newcommand{\cM}{{\mathcal{M}}}

\newcommand{\cO}{{\mathcal{O}}}
\newcommand{\cP}{{\mathcal{P}}}

\newcommand{\cS}{{\mathcal{S}}}
\newcommand{\cT}{{\mathcal{T}}}

\newcommand{\bbA}{{\mathbb{A}}}

\newcommand{\bbM}{{\mathbb{M}}}
\newcommand{\bbN}{{\mathbb{N}}}

\newcommand{\bbR}{{\mathbb{R}}}

\newcommand{\bnull}{\boldsymbol{0}}

\newcommand{\bcC}{\hat{\boldsymbol{{c}}}}

\newcommand{\bcI}{\boldsymbol{\mathcal{I}}}

\newcommand{\bcP}{\boldsymbol{\mathcal{P}}}

\newcommand{\RL}{\mathcal{RL}}
\newcommand{\SP}{\mathcal{SP}}

\newcommand{\GL}{\text{GL}}

\newcommand{\SO}{\text{SO}}
\newcommand{\SYM}{\text{SYM}}

\newcommand{\tr}{\operatorname{tr}}

\newcommand{\cof}{\operatorname{cof}}

\newcommand{\norm}[1]{\left\lVert#1\right\rVert}
\newcommand{\normm}[1]{\big\lVert#1\big\rVert}

\newcommand{\Cross}{\mathbin{\tikz [x=1.4ex,y=1.4ex,line width=.25ex] \draw (0.1,0.1) -- (0.9,0.9) (0.1,0.9) -- (0.9,0.1);}}

\newcommand{\bMti}{\bM^{\text{ti}}}
\newcommand{\bMcub}{\bbM^{\text{cub}}}
\newcommand{\bMtet}{\bbM^{\text{tet}}}
\newcommand{\bMmonone}{\bM^{\text{mon}}_{(1)}}
\newcommand{\bMmontwo}{\bM^{\text{mon}}_{(2)}}
\newcommand{\bMmononePC}{\widetilde{\bM}{^{\text{mon}}_{(1)}}}
\newcommand{\bMmontwoPC}{\widetilde{\bM}{^{\text{mon}}_{(2)}}}
\newcommand{\bMrho}{\bM^{\text{rho}}}
\newcommand{\bMrhoONE}{\widetilde{\bM}{^{\text{rho}}_1}}
\newcommand{\bMrhoTWO}{\widetilde{\bM}{^{\text{rho}}_2}}

\newcommand{\JoneISO}{J_1^{\text{iso}}}
\newcommand{\JtwoISO}{J_2^{\text{iso}}}
\newcommand{\JthreeISO}{J_3^{\text{iso}}}

\newcommand{\IoneISO}{I_1^{\text{iso}}}
\newcommand{\ItwoISO}{I_2^{\text{iso}}}
\newcommand{\IthreeISO}{I_3^{\text{iso}}}
\newcommand{\IfourISO}{I_4^{\text{iso}}}

\newcommand{\IoneCUB}{I_1^{\text{cub}}}
\newcommand{\ItwoCUB}{I_2^{\text{cub}}}
\newcommand{\IthreeCUB}{I_3^{\text{cub}}}
\newcommand{\IfourCUB}{I_4^{\text{cub}}}
\newcommand{\IfiveCUB}{I_5^{\text{cub}}}
\newcommand{\IsixCUB}{I_6^{\text{cub}}}

\newcommand{\JoneCUB}{J_1^{\text{cub}}}
\newcommand{\JtwoCUB}{J_2^{\text{cub}}}
\newcommand{\JthreeCUB}{J_3^{\text{cub}}}
\newcommand{\JfourCUB}{J_4^{\text{cub}}}
\newcommand{\JfiveCUB}{J_5^{\text{cub}}}
\newcommand{\JsixCUB}{J_6^{\text{cub}}}

\newcommand{\IoneTET}{I_1^{\text{tet}}}
\newcommand{\ItwoTET}{I_2^{\text{tet}}}
\newcommand{\IthreeTET}{I_3^{\text{tet}}}
\newcommand{\IfourTET}{I_4^{\text{tet}}}
\newcommand{\IfiveTET}{I_5^{\text{tet}}}

\newcommand{\JoneTET}{J_1^{\text{tet}}}
\newcommand{\JtwoTET}{J_2^{\text{tet}}}
\newcommand{\JthreeTET}{J_3^{\text{tet}}}
\newcommand{\JfourTET}{J_4^{\text{tet}}}
\newcommand{\JfiveTET}{J_5^{\text{tet}}}

\newcommand{\IoneTI}{I_1^{\text{ti}}}
\newcommand{\ItwoTI}{I_2^{\text{ti}}}
\newcommand{\IthreeTI}{I_3^{\text{ti}}}
\newcommand{\IfourTI}{I_4^{\text{ti}}}

\newcommand{\JoneTI}{J_1^{\text{ti}}}
\newcommand{\JtwoTI}{J_2^{\text{ti}}}

\newcommand{\IoneMON}{I_1^{\text{mon}}}
\newcommand{\ItwoMON}{I_2^{\text{mon}}}
\newcommand{\IthreeMON}{I_3^{\text{mon}}}
\newcommand{\IfourMON}{I_4^{\text{mon}}}
\newcommand{\IfiveMON}{I_5^{\text{mon}}}

\newcommand{\JoneMON}{J_1^{\text{mon}}}
\newcommand{\JtwoMON}{J_2^{\text{mon}}}
\newcommand{\JthreeMON}{J_3^{\text{mon}}}
\newcommand{\JfourMON}{J_4^{\text{mon}}}
\newcommand{\JfiveMON}{J_5^{\text{mon}}}

\newcommand{\IoneRHO}{I_1^{\text{rho}}}
\newcommand{\ItwoRHO}{I_2^{\text{rho}}}
\newcommand{\IthreeRHO}{I_3^{\text{rho}}}
\newcommand{\IfourRHO}{I_4^{\text{rho}}}

\newcommand{\JoneRHO}{J_1^{\text{rho}}}
\newcommand{\JtwoRHO}{J_2^{\text{rho}}}
\newcommand{\JthreeRHO}{J_3^{\text{rho}}}
\newcommand{\JfourRHO}{J_4^{\text{rho}}}

\newcommand{\Cone}{c_1}
\newcommand{\Ctwo}{c_2}
\newcommand{\Cthree}{c_3}
\newcommand{\Cfour}{c_4}
\newcommand{\Cfive}{c_5}
\newcommand{\Csix}{c_6}
\newcommand{\Cseven}{c_7}
\newcommand{\Ceight}{c_8}
\newcommand{\Cnine}{c_9}
\newcommand{\Cten}{c_{10}}
\newcommand{\Celeven}{c_{11}}
\newcommand{\Ctwelve}{c_{12}}
\newcommand{\Cthirteen}{c_{13}}
\newcommand{\Cfourteen}{c_{14}}

\newcommand{\Kone}{k_1}
\newcommand{\Ktwo}{k_2}
\newcommand{\Kthree}{k_3}
\newcommand{\Kfour}{k_4}
\newcommand{\Kfive}{k_5}
\newcommand{\Ksix}{k_6}

\definecolor{CPSgreen}{RGB}{22,164,138}
\definecolor{CPSlightblue}{RGB}{104,143,198}
\definecolor{CPSdarkblue}{RGB}{67,83,132}
\definecolor{CPSgrey}{RGB}{204, 204, 204}
\definecolor{CPSorange}{RGB}{246,163,21}
\definecolor{CPSred}{RGB}{194,76,76}
\definecolor{CPSdarkgrey}{RGB}{90, 90, 90}

\definecolor{MAXgrey}{RGB}{225,225,225}

\pgfplotscreateplotcyclelist{colorlistBCCPc}{%
only marks,solid, mark options={solid,fill=CPSred,fill opacity=1,mark size=4pt},mark=square*,CPSred\\
only marks,solid, mark options={solid,fill=CPSorange,fill opacity=1,mark size=4pt},mark=square*,CPSorange\\
only marks,solid, mark options={solid,fill=CPSdarkblue,fill opacity=1,mark size=4pt},mark=square*,CPSdarkblue\\
only marks,solid, mark options={solid,fill=CPSlightblue,fill opacity=1,mark size=4pt},mark=square*,CPSlightblue\\
CPSred, dashed, very thick,mark=*, mark options={solid}, mark repeat = 15\\
CPSorange, dashed, very thick,mark=*, mark options={solid} , mark repeat = 15\\
CPSdarkblue, dashed, very thick ,mark=*, mark options={solid}, mark repeat = 15\\
CPSlightblue, dashed, very thick ,mark=*, mark options={solid}, mark repeat = 15\\
CPSred, very thick \\
CPSorange, very thick \\
CPSdarkblue, very thick \\
CPSlightblue, very thick \\
}

\pgfplotscreateplotcyclelist{mycolorlist2}{%
only marks,solid, mark options={solid,fill=color11,fill opacity=1,mark size=4pt},mark=square*,color11\\
only marks,solid, mark options={solid,fill=color12,fill opacity=1,mark size=4pt},mark=square*,color12\\
only marks,solid, mark options={solid,fill=color13,fill opacity=1,mark size=4pt},mark=square*,color13\\
only marks,solid, mark options={solid,fill=color21,fill opacity=1,mark size=4pt},mark=square*,color21\\
only marks,solid, mark options={solid,fill=color22,fill opacity=1,mark size=4pt},mark=square*,color22\\
only marks,solid, mark options={solid,fill=color23,fill opacity=1,mark size=4pt},mark=square*,color23\\
only marks,solid, mark options={solid,fill=color31,fill opacity=1,mark size=4pt},mark=square*,color31\\
only marks,solid, mark options={solid,fill=color32,fill opacity=1,mark size=4pt},mark=square*,color32\\
only marks,solid, mark options={solid,fill=color33,fill opacity=1,mark size=4pt},mark=square*,color33\\
color11 \\
color11 \\
color11, dashed, very thick,mark=*, mark options={solid}, mark repeat = 15\\
color12, dashed, very thick ,mark=*, mark options={solid}, mark repeat = 15\\
color13, dashed, very thick ,mark=*, mark options={solid}, mark repeat = 15\\
color21, dashed, very thick ,mark=*, mark options={solid}, mark repeat = 15\\
color22, dashed, very thick ,mark=*, mark options={solid}, mark repeat = 15\\
color23, dashed, very thick ,mark=*, mark options={solid}, mark repeat = 15\\
color31, dashed, very thick,mark=*, mark options={solid} , mark repeat = 15\\
color32, dashed, very thick ,mark=*, mark options={solid}, mark repeat = 15\\
color33, dashed, very thick,mark=*, mark options={solid} , mark repeat = 15\\
color11, very thick \\
color12, very thick \\
color13, very thick \\
color21, very thick \\
color22, very thick \\
color23, very thick \\
color31, very thick \\
color32, very thick \\ 
color33, very thick \\
color11 \\
color11 \\
}

\pgfplotscreateplotcyclelist{mycolorlist123}{%
only marks,solid, mark options={solid,fill=color11,fill opacity=1,mark size=4pt},mark=square*,color11\\
only marks,solid, mark options={solid,fill=color22,fill opacity=1,mark size=4pt},mark=square*,color22\\
only marks,dashed, mark options={solid,fill=color33,fill opacity=1,mark size=4pt},mark=square*,color33\\
color11, very thick,dashed, mark options={solid}, mark repeat = 15\\
color22, very thick , mark options={solid}, mark repeat = 15\\
color33, very thick,dashed , mark options={solid}, mark repeat = 15\\
}

\pgfplotscreateplotcyclelist{colorlistBCCintro}{%
CPSdarkblue,  very thick, loosely dashed , mark options={solid}, mark repeat = 15, mark=*\\
CPSdarkblue,  ultra thick \\
CPSlightblue, ultra thick \\
}

\pgfplotscreateplotcyclelist{colorlistBCCNonPc}{%
only marks,solid, mark options={solid,fill=CPSred,fill opacity=1,mark size=4pt},mark=square*,CPSred\\
only marks,solid, mark options={solid,fill=CPSorange,fill opacity=1,mark size=4pt},mark=square*,CPSorange\\
only marks,solid, mark options={solid,fill=CPSred,fill opacity=1,mark size=4pt},mark=square*,CPSred\\
only marks,solid, mark options={solid,fill=CPSorange,fill opacity=1,mark size=4pt},mark=square*,CPSorange\\
CPSorange,  very thick,   loosely dashed , mark options={solid}, mark repeat = 15, mark=*\\
CPSred,  very thick,  loosely dashed , mark options={solid}, mark repeat = 15, mark=*\\
CPSorange,  ultra thick \\
CPSred, ultra thick \\
}

\title{
Advances in polyconvex anisotropic hyperelasticity
}

\author[1,*]{Dominik~K.~Klein}
\author[2]{Karl~A.~Kalina}
\author[3]{Rogelio~Ortigosa}
\author[3]{\\Jes\'us~Mart\'inez-Frutos}
\author[2]{Markus~K\"astner}
\author[1]{Oliver~Weeger}
\affil[1]{\footnotesize Cyber-Physical Simulation, 
Department of Mechanical Engineering, TU Darmstadt, 64293 Darmstadt, Germany}
\affil[2]{\footnotesize Institute of Solid Mechanics, TU Dresden, 01062 Dresden, Germany}
\affil[3]{\footnotesize Multiphysics Simulation and Optimization, TU Cartagena, Campus~Muralla~del~Mar, 30202, Cartagena (Murcia), Spain}
\affil[*]{\footnotesize Corresponding author, email: klein@cps.tu-darmstadt.de}

\date{May 26, 2026}
 
\begin{document}

\maketitle 

\vspace{-1ex}
\par\noindent\rule{\textwidth}{0.4pt}
\begin{abstract}

A key challenge in material theory is the formulation of models that satisfy all common mechanical constitutive conditions while retaining sufficient flexibility. In this context, several important modeling aspects remain unresolved for polyconvex anisotropic hyperelasticity. We address some of these challenges and apply our results for physics-augmented neural network (PANN) constitutive modeling. The \textbf{main contributions} of this paper are as follows: \textbf{(1)} We propose a new polyconvex PANN constitutive model for anisotropic hyperelasticity based on triclinic invariants and group symmetrization. For finite symmetry groups, this model fulfills all common mechanical constitutive conditions a priori. \textbf{(2)} We propose a group symmetrization-based method for the construction of polyconvex invariants for finite symmetry groups. Based on this, we derive a new integrity basis for a tetragonal symmetry group and a new functional basis for a cubic symmetry group. To the best of our knowledge, these are the first polyconvex integrity or functional bases for symmetry groups characterized by structural tensors of order higher than two. \textbf{(3)} We provide an extensive introduction to the construction of polyconvex integrity and functional bases, which form the basis of polyconvex invariant-based constitutive models. We discuss polyconvex bases for triclinic, isotropic, transversely isotropic, monoclinic, rhombic, tetragonal, and cubic symmetry groups. \textbf{(4)} We benchmark the polyconvex PANN constitutive models with highly nonlinear homogenization data of cubic metamaterials. 

\end{abstract}
\vspace*{2ex}
{\textbf{Key words:} polyconvexity, hyperelasticity, invariants, integrity basis, functional basis, physics-augmented neural networks, microstructured materials}
\par\noindent\rule{\textwidth}{0.4pt}

\section{Introduction}\label{chap:intro}

Constitutive modeling is one of the pillars of continuum solid mechanics and enables the mathematical description of the behavior of different materials. A fundamental prerequisite of constitutive modeling is the consideration of constitutive conditions, such as thermodynamic consistency and objectivity \parencite{TruesdellNoll}. By including these requirements in the model formulation, mechanically reasonable model predictions can be ensured. Moreover, by prescribing, to some extent, how the material model should behave, these conditions serve as an inductive bias \parencite{haussler1988}. This enables to calibrate constitutive models on sparse data while still accurately predicting the material behavior for previously unseen loading conditions. This paradigm has also been adopted in the emerging field of neural network (NN) constitutive modeling, where it is referred to as {thermodynamics-based} \parencite{masi2021}, mechanics-informed \parencite{ASAD2023116463}, physics-based \parencite{aldakheel2025}, {physics-constrained} \parencite{kalina2022b}, {physics-augmented} \parencite{klein2022b}, or constitutive artificial \parencite{Linka2020} NN modeling.\footnote{Throughout this work, we adopt the terminology {physics-augmented neural networks} (PANNs). The paradigm of combining machine learning methods with scientific knowledge is widely used in many scientific fields \parencite{rueden2021,peng2021,karniadakis2021,Kumar2022,kannapinn}.} One important constitutive condition is polyconvexity. Although research on polyconvex constitutive modeling is already highly advanced, several important aspects remain unresolved, some of which we address in this work.


\subsection{Polyconvex constitutive modeling}

In finite elasticity theory, convexity of the potential $W$ in the deformation gradient $\bF$ alone would be overly restrictive and incompatible with a mechanically reasonable material behavior. Specifically, this would be incompatible with growth and objectivity conditions, and for strict convexity, it would fail to capture relevant phenomena like buckling \parencite{ciarlet1988}. In the seminal work by \textcite{Ball1976,Ball1977}, the \textbf{polyconvexity} condition was introduced. Polyconvex potentials are convex in an extended set of arguments, including the deformation gradient $\bF$, its cofactor $\bH=\cof\bF$, and its determinant $J=\det\bF$, making this convexity condition compatible with aforementioned physical considerations. In constitutive modeling, we employ polyconvexity for two reasons. First, it implies ellipticity (or material stability) \parencite{Schroeder_Neff_Balzani_2005}, meaning a stable and robust behavior when applying the constitutive model in numerical applications such as the finite element method. Second, polyconvexity serves as a strong inductive bias that can improve the generalization behavior of the constitutive model, particularly for extrapolation away from the calibration data \parencite{Klein2026a,kalina2024a}. Moreover, from a mathematical perspective, polyconvexity is linked to existence theorems in finite elasticity theory.

\medskip

Formulating material models which are not only polyconvex but also fulfill the remaining constitutive conditions is a very ambitious task. For instance, objective deformation measures easily violate the polyconvexity condition (cf.~\cref{eq:C12}). This does not mean that different constitutive conditions contradict each other, but it illustrates the challenge of fulfilling them all at once. For almost three decades after its initial conception, polyconvex constitutive modeling was practically limited to isotropic material behavior, with models formulated in terms of the main invariants of the right Cauchy-Green tensor $\bC$ or in terms of principal stretches \parencite{ciarlet1988,Ball1976}. In the landmark work of \textcite{Schroeder2003}, for the first time, a polyconvex constitutive model applicable for anisotropic material behavior was proposed, followed by \textcite{Hartmann2003}, \textcite{Itskov2004}, \textcite{Kambouchev2007}, \textcite{Ehret2007}, and \textcite{Schroeder2008}. These approaches are based on polyconvex invariants of $\bC$ and a tuple of structural tensors. Making use of second- and fourth-order structural tensors, \textbf{polyconvex invariants} are constructed for different material symmetry groups. By that, the models fulfill objectivity and material symmetry a priori. Then, by a suitable construction of functions of these invariants, the remaining conditions of hyperelasticity are included in the model, i.e., normalization and growth conditions, while preserving the polyconvexity of the invariants in the model formulations.\footnote{Notably, the polyconvexity condition has inspired a variety of numerical methods~\parencite{schroeder2011,betsch2018,bonet2015,franke2023}, and was also extended to electro-magneto-mechanical material behavior\parencite{gil2016,Ortigosa_Gil_2016_hyperbol,silhavy2018}.}

\medskip

A key prerequisite of polyconvex invariant-based constitutive modeling is the formulation of polyconvex invariants and integrity or functional bases. The latter roughly means that a complete set of invariants is available to fully represent the material behavior for the considered symmetry group in an invariant-based framework. Polyconvex invariants for a variety anisotropies have been proposed in several publications around twenty years ago \parencite{Schroeder2003,Kambouchev2007,Schroeder2008,Schroeder2010a}.\footnote{We provide a detailed discussion of polyconvex invariants and integrity or functional bases in~\cref{sec:inv_bases}.} To the best of our knowledge, polyconvex integrity or functional bases have been proposed only for the following symmetry groups, all of which can be fully characterized by structural tensors of order up to two \parencite{riemer2025}:\footnote{Throughout this work, we employ the Schoenflies notation to denote material symmetry groups.} the isotropic $\cK$ \parencite{Ball1976}, transversely isotropic $\cD_{\infty}$\parencite{Schroeder2003}, triclinic $\cC_1$ \parencite{Schroeder2008}, monoclinic $\cC_{2}$ \parencite{Schroeder2008}, and rhombic $\cD_{2}$ \parencite{Schroeder2003} groups. In particular, we are not aware of polyconvex integrity or functional bases for symmetry groups that require structural tensors of order higher than two, such as the triclinic $\cD_4$ or the cubic $\cO$ group, which both require structural tensors of third or fourth order \parencite[Tab.~7]{riemer2025}. Thus, for most symmetry groups, polyconvex invariant-based constitutive models are based on incomplete sets of invariants. This leads to a loss of information and limits the flexibility of the models. Moreover, to the best of our knowledge, general construction principles for polyconvex integrity bases have not yet been discussed.


\subsection{PANN constitutive modeling}

Shifting the focus to the representation of material models, in recent years, PANN constitutive models have gained considerable attention. NNs are highly flexible functions, in fact, some even have universal approximation properties \parencite{Hornik1991,chen2019optimalcontrolneuralnetworks}. Consequently, using NNs as ansatz functions for the constitutive model equations results in very flexible models that can represent highly nonlinear material behavior.\footnote{Apart from NNs, a variety of sophisticated constitutive modeling approaches has emerged in recent years. For instance, constitutive model equations have been formulated based on Gaussian process regression \parencite{Frankel2020,ELLMER2024116547} or splines \parencite{WIESHEIER2024117208}. Moreover, methods have been proposed to automatically identify suitable constitutive models out of a wide range of predefined models \parencite{flaschel2021,FLASCHEL2026118573,abbasi2026}.} Flexibility is the key feature that differentiates PANN constitutive models from conventional modeling approaches: the former are potentially way more flexible than the latter. By that, PANN constitutive models find application where conventional models tend to fail. First of all, they can accurately represent the highly nonlinear homogenized behavior of microstructured materials, enabling efficient sequential multiscale simulations~\parencite{kalina2022b,gaertner2021,MASI2022115190}. Moreover, PANNs offer a unified modeling framework applicable to a wide range of different material behaviors~\parencite{Peirlinck_Linka_Hurtado_Holzapfel_Kuhl_2025,Klein2026a}. This alleviates the challenging process of selecting an appropriate conventional constitutive model from the broad range available~\parencite{Ricker_Wriggers_2023,Steinmann_Hossain_Possart_2012,Hossain_Steinmann_2013}.

\medskip

\textbf{Polyconvex PANN} constitutive models were first introduced by~\textcite{klein2022a}. The approach is based on input-convex neural networks (ICNNs), which were originally proposed in a different context by~\textcite{Amos2017}. Inspired from conventional constitutive models, polyconvex PANN models can be formulated in terms of invariants. By using polyconvex invariants as inputs for a NN with suitable monotonicity and convexity constraints, a polyconvex NN potential can be constructed~\parencite{klein2022a}. This constitutive modeling framework was later extended by invariant-based polyconvex stress normalization terms, resulting in a PANN constitutive model that fulfills all common mechanical conditions of compressible hyperelasticity by construction~\parencite{linden2023}. As an alternative approach, in~\textcite{klein2022a}, a polyconvex anisotropic PANN model formulated directly in the coordinates of the deformation gradient was proposed. Based on the deformation gradient directly, the latter approach is not objective by construction but only learns to approximate this condition through data augmentation in the calibration process. For isotropy, a variety of polyconvex PANN models based on principal stretches have emerged~\parencite{VIJAYAKUMARAN2024106015,STPIERRE2023116236,mcculloch2024}. Notably, in~\textcite{Geuken_Kurzeja_Wiedemann_Mosler_2025}, a PANN model with universal approximation properties for polyconvex isotropic material behavior is proposed, which is based on the signed singular values of the deformation gradient (which are closely related to principal stretches)~\parencite{wiedemann2023characterizationpolyconvexisotropicfunctions}. Apart from ICNN-based formulations, polyconvex PANN models based on neural ordinary differential equations were proposed \parencite{tac2022}.

While the aforementioned works fulfill polyconvexity in an \emph{exact} fashion, \emph{relaxed} convexity conditions have also been investigated, including strategies such as convexity-promoting loss terms together with a relaxed local version of polyconvexity~\parencite{kalina2024a}, monotonicity of the NN potential~\parencite{Klein2026a}, and convexity with respect to the coordinates of the right Cauchy-Green tensor $\bC$~\parencite{asad2022,Zheng_Kochmann_Kumar_2024}, see~\cref{rem:conv_C} for a discussion of the latter.
Moreover, polyconvex PANN models have been extended to parametrized~\parencite{Klein_Roth_Valizadeh_Weeger_2023}, multiphysical~\parencite{klein2022b,kalina2024a,Fuhg_Jadoon_Weeger_Seidl_Jones_2024,ORTIGOSA2025117741}, and inelastic material behavior~\parencite{boes2025accountingplasticityextensioninelastic,zlatic2024,KALINA2026118892,HOLTHUSEN2026106337}.\footnote{Polyconvexity is related to existence theorems in finite elasticity theory, and, strictly speaking, does not apply to inelastic constitutive models. However, the formal convexity conditions can still be employed for inelastic modeling, e.g., by using energy potentials that are polyconvex functions in the sense of Ball.}

\medskip

To the best of our knowledge, polyconvex \emph{anisotropic} PANN approaches that fulfill all common constitutive conditions a priori are exclusively based on invariants of $\bC$ and structural tensors. While these models perform exceptionally well in a variety of applications \parencite{klein2024a}, for some material behaviors, they tend to fail \parencite{klein2022a,kalina2024a}. For polyconvex \emph{isotropic} hyperelasticity, it has been reported that the latter case can be addressed with alternative model formulations, e.g., using signed singular values instead of main invariants \parencite{Geuken2026}. Even if different polyconvex PANN models are similar in that they satisfy the same constitutive constraints a priori, their capability to represent certain material behaviors can vary significantly \parencite{wollner2026}. This suggests that, also in the anisotropic case, the development of alternative model formulations may lead to an improved predictive performance, and address potential drawbacks of existing approaches.

\subsection{Main contributions and outline}

The main contributions of this work are as follows:\footnote{Parts of the research presented in this manuscript have been published in the first author's doctoral thesis \parencite{kleinDiss}.} \textbf{(1)} We propose a new polyconvex PANN constitutive model for anisotropic hyperelasticity based on triclinic invariants and group symmetrization. For finite symmetry groups, this model fulfills all common mechanical constitutive conditions a priori. \textbf{(2)} We propose a group symmetrization-based method for the construction of polyconvex invariants for finite symmetry groups. Based on this, we derive a new integrity basis for a tetragonal symmetry group and a new functional basis for a cubic symmetry group. To the best of our knowledge, these are the first polyconvex integrity or functional bases for symmetry groups characterized by structural tensors of order higher than two. \textbf{(3)} We provide an extensive introduction to the construction of polyconvex integrity and functional bases, which form the basis of polyconvex invariant-based constitutive models. We discuss polyconvex bases for triclinic, isotropic, transversely isotropic, monoclinic, rhombic, tetragonal, and cubic symmetry groups. \textbf{(4)} We benchmark the polyconvex PANN constitutive models with highly nonlinear homogenization data of cubic metamaterials. 

The outline of the remaining manuscript is as follows. In~\cref{chap:basics}, we introduce the fundamentals of finite elasticity theory. In~\cref{chap:PANN_inv}, we discuss the theoretical foundations of polyconvex invariant-based constitutive modeling. Next, in~\cref{sec:PANNs}, we discuss polyconvex PANN modeling approaches based on invariants constructed via structural tensors, as well as approaches based on triclinic invariants and group symmetrization. Subsequently, in~\cref{chap:application_micro}, we apply the PANN models to homogenization data of cubic microstructured materials. This is followed by the conclusion in~\cref{sec:conc}. The notation employed in this work is summarized in Appendix~\ref{app:notation}. In Appendix~\ref{app:inv_relations}, we provide analytical relations between different invariant bases.

\section{Fundamentals of hyperelasticity}\label{chap:basics}

In this section, the fundamentals of hyperelasticity are introduced. In \cref{sec:kinematics}, the relevant kinematics are outlined. The constitutive conditions of hyperelasticity are discussed in \cref{sec:const_cond}.

\subsection{Kinematics}\label{sec:kinematics}
Consider a body in its reference configuration $\mathcal{B}_0\subset\bbR^3$ at the time $t_0\in\bbR$ and its current configuration $\mathcal{B}\subset\bbR^3$ at the time $t\in\mathcal{T}:=\{\tau\in\bbR\,\rvert\,\tau\geq t_0\}$. The mapping $\bphi:\mathcal{B}_0\times\mathcal{T}\rightarrow\mathcal{B}$ links material particles $\bx_0\in\mathcal{B}_0$ to $\bx=\bphi(\bx_0,\,t)\in\mathcal{B}$. The particles $\bx_0$ and $\bx$ can furthermore be related via the displacement field $\bu_0(\bx_0,\,t)=\bphi(\bx_0,\,t)-\bx_0$. Associated with $\bphi$, the \textbf{deformation gradient} $\bF\in\text{GL}^+(3)$, its \textbf{cofactor} $\bH\in\text{GL}^+(3)$, and its \textbf{determinant} $J\in\bbR^+$ are defined as \parencite{Haupt2002,bonet2015}
\begin{equation}\label{eq:def_grad}
\bF=\nabla_0\bphi\,,\qquad \bH=\cof\bF=J\bF^{-T}=\frac{1}{2}\bF\Cross\bF \,,  \qquad  J=\det\bF=\frac{1}{6}\bF:(\bF\Cross\bF)\,.
\end{equation}
The kinematical quantities in \cref{eq:def_grad} have the following geometrical interpretation \parencite{Bonet_Wood_2008}. The deformation gradient $\bF$ maps infinitesimal line elements $d\bx_0$ from the reference configuration to infinitesimal line elements $d\bx$ from the current configuration via $d\bx=\bF \cdot d\bx_0$. The cofactor of the deformation gradient $\bH$ maps the oriented area element in the reference configuration $d\ba_0=\bn_0 da_0$ to the corresponding oriented area element in the current configuration $d\ba=\bn da$ via $d\ba=\bH \cdot d\ba_0$. Here, both $\bn_0$ and $\bn$ are unit vectors. The determinant of the deformation gradient $J$ maps infinitesimal volume elements in the reference configuration $dv_0$ to infinitesimal volume elements in the current configuration $dv$ via $dv=Jdv_0$. Using the polar decomposition \parencite{Holzapfel2000}, the deformation gradient and its cofactor can be expressed as $\bF=\bR\cdot\bU$ and $\bH=\bR\cdot\cof\bU$, respectively, where $\bU\in\SYM^+(3)$ is the right stretch tensor and $\bR\in\SO(3)$ is the rotation tensor. Note that the determinant of the deformation gradient is invariant under rotation since $\det\bR\cdot\bU=\det\bR\det\bU=\det\bU$. As an alternative deformation measure, we introduce the \textbf{right Cauchy-Green tensor} $\bC=\bF^T\cdot\bF\in\SYM^+(3)$ and its cofactor $\bG=\cof\bC\in\SYM^+(3)$, which are independent of $\bR$. Furthermore, scalar \textbf{invariants} as strain measures are introduced in~\cref{chap:PANN_inv}.

\subsection{Constitutive conditions}\label{sec:const_cond}

In hyperelastic constitutive modeling, we want to find the relation between the deformation at a material point in a body and the stress it evokes. This can be done by introducing the \textbf{hyperelastic potential}
\begin{equation}
    W:\GL^+(3)\rightarrow\bbR\,,\qquad \bF\mapsto W(\bF)\,,
\end{equation}
which corresponds to the Helmholtz free energy density of the body \parencite{Holzapfel2000,Haupt2002}. Throughout this work, we assume that every energy potential is sufficiently continuously differentiable. The \textbf{first Piola-Kirchhoff stress tensor} is given as the gradient field
\begin{equation}\label{eq:PK1}
    \bP=\partial_{\bF}W(\bF)\,.
\end{equation}
Thus, as a first constitutive condition, \textbf{thermodynamic consistency} in the form of the energy balance $\dot{W}-\bP:\dot{\bF}=0$ is fulfilled a priori. Furthermore, the material behavior should be independent on the choice of observer, which is formalized as the \textbf{objectivity condition}
\begin{equation}\label{eq:obj}
    W(\bQ\cdot\bF)=W(\bF)\quad\forall(\bF,\,\bQ)\in\GL^+(3)\times\SO(3)\,.
\end{equation}
Note that in hyperelasticity, objectivity implies fulfillment of the balance of angular momentum \parencite[Proposition~8.3.2]{Silhavy2014}. The \textbf{material symmetry} condition 
\begin{equation}\label{eq:symm}
    W(\bF\cdot\bQ^T)=W(\bF) \quad\forall(\bF,\,\bQ)\in\GL^+(3)\times\mathcal{G}\,,
\end{equation}
takes into account the material's underlying (an-)isotropy, where $\mathcal{G}\subseteq\operatorname{O}(3)$ denotes the symmetry group of the considered material \parencite{riemer2025}. 
The \textbf{stress normalization} condition
\begin{equation}\label{eq:stress_norm_comp}
    \bP(\bF)\big\rvert_{\bF=\bI}=\bnull\,,
\end{equation}
ensures that the stress vanishes in the reference configuration. 
Finally, the \textbf{volumetric growth condition}
\begin{equation}\label{eq:growth_mech}
    W(\bF)\rightarrow\infty \quad \text{as} \quad J \rightarrow 0^+\,,
\end{equation}
captures the physical consideration that the material body cannot be compressed to a zero volume. 

\medskip

Further constitutive conditions are grounded in the concept of convexity. From an engineering perspective, the most appealing convexity condition is the \textbf{ellipticity} (or \textbf{rank-one convexity}) condition \parencite{Zee1983}
\begin{equation}\label{eq:ellip_mech_comp}
\begin{aligned}
(\ba\otimes\bb):\partial^2 _{\bF\bF}W:(\ba\otimes\bb)\geq 0
\qquad \forall \ba,\bb\in\bbR^3,\,\bF+\ba\otimes\bb\in\GL^+(3)\,.
    \end{aligned}
\end{equation}
Ellipticity ensures material stability \parencite{Schroeder_Neff_Balzani_2005}, meaning a stable and robust behavior when applying the constitutive model in numerical applications such as the finite element method. From a physical perspective, ellipticity guarantees the existence of real-valued wave speeds for solutions of the governing equations of finite elasticity theory \parencite{Zee1983}. 

While ellipticity is a very attractive constitutive condition, it is practically impossible to directly formulate hyperelastic potentials that are elliptic by construction \parencite{neff2015}. A sufficient condition for ellipticity is the \textbf{polyconvexity} condition introduced by \textcite{Ball1976,Ball1977}, which can be practically employed in constitutive models \parencite{Schroeder2003}. Polyconvex potentials allow for a (non-unique) representation of the hyperelastic potential as
\begin{equation}\label{eq:pc_mech_comp}
    \cP:\cL_2\times\cL_2\times\bbR\rightarrow\bbR\,,\qquad (\bF,\,\bH,\,J)\mapsto\cP(\bF,\,\bH,\,J)\,,
\end{equation}
with the convex function $\cP$ and $W(\bF)=\cP(\bF,\,\bH,\,J)$. Polyconvexity is linked to existence theorems in finite elasticity theory \parencite{Ball1976,Ball1977}. When the potential is polyconvex and the coercivity condition in~\cref{eq:mech_comp_coercivity} is fulfilled with suitable exponents, the existence of minimizers of the underlying variational functionals in finite elasticity is guaranteed \parencite{Ball1976,Ball1977,Muller_Qi_Yan_1994}. The practical relevance of the existence theorems associated with polyconvexity can be questioned from an engineering perspective. For instance, they regard material behavior that lies outside a practically relevant deformation range. Besides that, polyconvexity has no immediate physical meaning \parencite{Ball1977}, and polyconvex constitutive models can produce stress predictions that contradict mathematical considerations for idealized elastic materials \parencite{WOLLNER2026106465}. However, polyconvexity has been recognized as the most straightforward way of ensuring ellipticity by construction \parencite{neff2015,Schroeder_Neff_Balzani_2005}. Furthermore, polyconvexity provides the constitutive model with a pronounced mathematical structure which can improve its generalization properties \parencite{Klein2026a,kalina2024a}. Overall, this makes polyconvexity a constitutive condition of practical relevance in constitutive modeling.

 \begin{remark}[\textbf{Energy normalization and coercivity conditions}]\label{rem:energy_norm}
 An energy normalization condition similar to \cref{eq:stress_norm_comp} can be formulated as \parencite{Holzapfel2000}
 \begin{equation}\label{eq:energy_norm}
     W(\bF)\big\rvert_{\bF=\bI}={0}\,.
 \end{equation}
 However, in most applications, the gradient of the potential, i.e., the stress, is of interest rather than the value of the potential itself.
 Thus, we do not consider \cref{eq:energy_norm} in the following constitutive model formulation.
  In cases where the values of the hyperelastic potential are relevant, \cref{eq:energy_norm} can be fulfilled trivially by a suitable projection term \parencite{linden2023}. While the value of the potential itself is often not considered, its existence is of fundamental significance. In particular, in hyperelasticity, the relation of the stress to a potential according to~\cref{eq:PK1} is equivalent to thermodynamic consistency. Furthermore, for mathematical convenience and following the standard literature procedure, most constitutive conditions are formulated in terms of the hyperelastic potential.

Different {coercivity} conditions have been postulated for finite elasticity theory. For a coercive function, $f(\bx)\rightarrow\infty$ as $\norm{\bx}\rightarrow\infty$ holds. For instance,  
\begin{equation}\label{eq:growth_mech_2}
    W(\bF)\rightarrow\infty \quad \text{as} \quad J \rightarrow \infty\,,
\end{equation}
has been applied in addition to the volumetric growth condition \cref{eq:growth_mech} and captures the physical observation that a body cannot be expanded to an infinite volume \parencite{linden2023}, while
\begin{equation}\label{eq:mech_comp_coercivity}
    W(\bF)\geq \alpha\big(\norm{\bF}^{\beta_1}+\norm{\bH}^{\beta_2}\big)+\gamma\,,\quad \alpha>0
\end{equation}
is an important coercivity condition for existence theorems in finite elasticity theory \parencite{Ball_2002,Muller_Qi_Yan_1994}. However, the practical relevance of coercivity conditions can be questioned, as they regard material behavior that lies outside a relevant deformation range \parencite{klein2022a}. Thus, we do not consider coercivity conditions in the following constitutive model formulation. Note that this is in contrast to the volumetric growth condition (cf.~\cref{eq:growth_mech}), which indeed can be relevant for highly compressible materials \parencite{klein2022a}.
\end{remark}

\section{Polyconvex invariant-based constitutive modeling}\label{chap:PANN_inv}

In this section, we discuss polyconvex constitutive modeling based on invariants. For this, the general formulation and convexity requirements are introduced in \cref{sec:inv_basics}. This is followed by a discussion of anisotropic hyperelasticity based on triclinic invariants and group symmetrization in~\cref{sec:tric_coord}. We introduce polyconvex integrity and functional bases for different symmetry groups based on structural tensors in~\cref{sec:inv_bases}. In~\cref{rem:complete_pc_bases}, we discuss the challenge of formulating polyconvex integrity or functional bases for anisotropic material behavior.

\subsection{General formulation}\label{sec:inv_basics}

By formulating the hyperelastic potential introduced in \cref{sec:const_cond} in terms of invariants, material symmetry and objectivity can be fulfilled by construction \parencite{Holzapfel2000}.\footnote{For a comprehensive introduction to literature on invariant theory, the reader is referred to~\parencite{riemer2025}.} In hyperelasticity, eleven crystal symmetry groups, two transversely isotropic symmetry groups, and one isotropic symmetry group are of interest. In this work, we only consider the isotropic $\cK$ (iso), transversely isotropic $\cD_{\infty}$ (ti), triclinic $\cC_1$ (tri),  monoclinic $\bC_{2}$ (mon), rhombic $\cD_{2}$ (rho), tetragonal $\cD_4$ (tet), and cubic $\cO$ (cub) groups, which we abbreviate as indicated in the brackets.\footnote{In the numerical examples, we only consider cubic anisotropy. However, we want to provide polyconvex integrity or functional bases for as many symmetry groups as possible.}

Generally, there are two ways of constructing invariants \parencite{riemer2025}. The first approach is in coordinate-dependent form, where invariants are constructed for a specific choice of reference frame. The second approach is based on structural tensors and isotropic tensor functions. In line with most modern works, we mainly adopt the latter approach and provide coordinate-dependent invariants only for triclinic anisotropy. For structural tensor-based formulations, we denote general invariants that might be polyconvex or not by $J^{\square}_i$ and polyconvex invariants by $I^{\square}_i$, where $\square\in\{\text{tri},\,\text{iso},\,\text{ti},\,\text{mon},\,\text{rho},\,\text{tet},\,\text{cub}\}$ denotes the considered symmetry group. For coordinate-dependent formulations, we denote general invariants by $k_i$ and polyconvex invariants by $c_i$. 

\subsubsection{Structural tensors and isotropic tensor functions}\label{sec:inv_basics_struct}

The formulation of invariants for anisotropic materials is commonly based on structural tensors \parencite{Zheng1993}.  For the symmetry groups considered in this work, second-order structural tensors $\bM^{\square}\in\cL_2$ and fourth-order structural tensors $\bbM^{\square}\in\cL_4$ are sufficient to characterize the material behavior. The tuple of structural tensors $\cS^{\square}=\big(\bM^{\square}_{(1)},\,\dotsc\bM^{\square}_{(m)},\,\bbM^{\square}_{(1)},\,\dotsc\bbM^{\square}_{(n)}\big)$ belongs to the symmetry group $\cG^{\square}\subseteq\operatorname{O}(3)$ if it satisfies the conditions
\begin{equation}\label{eq:def_struct}
\begin{aligned}
    &\bQ\star\cS^{\square}=\cS^{\square}\,\,\,
    \forall\, \bQ\in \cG^{\square}
\,\,\,\land\,\,\,
\bQ\star\cS^{\square}\neq\cS^{\square}
 \,\,\,   \forall\, \bQ\not\in \cG^{\square}\quad \Leftrightarrow\quad\bQ\star\cS^{\square}=\cS^{\square}\implies\bQ\in \cG^{\square}
 \\
 \text{with}\quad &\bQ\star\cS^{\square}=\big(\bQ\star\bM^{\square}_{(1)},\,\dotsc,\,\bQ\star\bM^{\square}_{(m)},\,\bQ\star\bbM^{\square}_{(1)},\,\dotsc,\,\bQ\star\bbM^{\square}_{(n)}\big)
 \\
 \text{and}\quad\big(&\bQ\star\bM\big)_{ij}=Q_{ia}Q_{jb} M_{ab},\quad
 \left(\bQ\star\bbM\right)_{ijkl}=Q_{ia}Q_{jb}Q_{kc}Q_{ld} M_{abcd}\,.
\end{aligned}
\end{equation}
The isotropicization theorem states that the material symmetry condition (cf.~\cref{eq:symm}) for the symmetry group $\cG$ can be alternatively expressed as 
\begin{equation}\label{eq:isotropicization}
\begin{aligned}
W\big(\bF,\,\cS^{\square}\big)=W\big(\bF\cdot\bQ^T,\,\bQ\star\cS^{\square}\big) 
\qquad
\forall(\bF,\,\bQ)\in\GL^+(3)\times\operatorname{O}(3)\,.
\end{aligned}
\end{equation}
This means that by including structural tensors in the arguments of the hyperelastic potential, it can be expressed as an isotropic tensor function even if the considered material behavior is anisotropic \parencite{itskov2015,Haupt2002}. Representation theorems for scalar-valued tensor functions provide expressions for invariants of the right Cauchy-Green tensor $\bC$ and the tuple of structural tensors $\cS^{\square}$ \parencite{boehler1977,itskov2015,Xiao1996OnIE}. Based on this, different tuples of invariants can be derived. Let $\bcI^{\square}:=(J_1^{\square},\,\dotsc,\,J_m^{\square})\in\bbR^m$ be an $m$-tuple containing invariants. We call the tuple of invariants an \emph{integrity basis} when any invariant of the considered symmetry group can be represented as a polynomial in the invariants contained in $\bcI^{\square}$. We call an integrity basis \emph{minimal}\footnote{Traditionally, the mechanics community employed the terminology ``irreducible'' instead of ``minimal'', see e.g. \parencite{linden2023,Ebbing2010}. However, from the mathematical theory of invariants, ``minimal'' is the correct terminology.}  when none of the invariants contained in $\bcI^{\square}$ can be represented as a polynomial in the other invariants. Further, we call the tuple of invariants a \emph{functional basis} when any invariant of the considered symmetry group can be represented as a function of the invariants contained in $\bcI^{\square}$. Every integrity basis is a functional basis but not vice versa. For a comprehensive introduction to these concepts, the reader is referred to~\textcite{riemer2025}.  

\subsubsection{Convexity requirements}\label{sec:inv_conv}

In invariant-based modeling, the hyperelastic potential does not directly depend on the deformation gradient $\bF$. Instead, the potential is a function of a tuple of invariants, i.e.,\footnote{Strictly speaking, $\psi$ also depends on the considered set of structural tensors $\cS^{\square}$, i.e., $\psi=\psi(\bcI^{\square},\,\cS^{\square})$, which we omit for brevity.}
\begin{equation}\label{eq:pot_inv_basics}
    \psi:\bbR^m\rightarrow\bbR\,,\quad \bcI^{\square}\mapsto \psi(\bcI^{\square})\,,
\end{equation}
with $W(\bF)=\psi(\bcI^{\square})$. Thus, the first Piola-Kirchhoff stress can be expressed as
\begin{equation}\label{eq:PK_chain_rule}
\bP=\partial_{\bF}W(\bF)=\sum_{\alpha=1}^m\partial_{\cI^{\square}_{\alpha}}\psi\,d_{\bF}\cI^{\square}_{\alpha}\,,
\end{equation}
where the derivatives of the potential w.r.t. the invariants are referred to as \emph{stress coefficients}, while the derivatives of the invariants w.r.t.~the deformation gradient are referred to as \emph{tensor generators} \parencite{kalina2022}. 

Next, we study {polyconvexity of invariant-based potentials}. Let $\bxi:=(\bF,\,\bH,\,J)\in\bbR^{19}$ be the arguments of the polyconvexity condition, cf.~\cref{eq:pc_mech_comp}. Polyconvexity of the potential in~\cref{eq:pot_inv_basics} is equivalent to the positive semi-definiteness (p.s.d.) of the Hessian
\begin{equation}\label{eq:hess_inv}
d^2_{\bxi\bxi}\psi(\bcI^{\square})=[\mathbb{H}_{\psi}]=\underbrace{\left(\partial_{\bxi}\bcI^{\square}\right)^T\cdot\partial^2_{\bcI^{\square}\bcI^{\square}}\psi\cdot\left(\partial_{\bxi}\bcI^{\square}\right)}_{\text{constitutive type term}}\,\,+\underbrace{\partial_{\bcI^{\square}}\psi\cdot\partial^2_{\bxi\bxi}\bcI^{\square}}_{\text{geometric type term}}\succeq 0\,.
\end{equation}
Since, in general, the invariants contained in $\bcI^{\square}$ are nonlinear functions of $\bF,\,\bH,$ and  $J$, the Hessian operator $[\mathbb{H}_{\psi}]$ consists of two terms. The first term includes second derivatives of the potential w.r.t.\ the invariants, which suggests to phrase it as a ``constitutive type term''. The second term includes first derivatives of the potential w.r.t.\ the invariants, which suggests to phrase it as a ``geometric type term'' \parencite{poya2024}. 
When considering p.s.d.\ of \cref{eq:hess_inv}, in theory, negative eigenvalues of the constitutive type term can be compensated by positive eigenvalues of the geometric type term and vice versa, resulting in an overall p.s.d.\ Hessian operator $[\mathbb{H}_{\psi}]$ \parencite{Klein2026a}. In constitutive modeling practice, however, it is infeasible to allow for negative eigenvalues in one of the two terms and still ensure {by construction} that $[\mathbb{H}_{\psi}]$ is p.s.d.\ for all deformation scenarios. Rather, p.s.d.\ of both the constitutive type term and the geometric type term individually is applied \parencite{klein2022a,Schroeder2003}, which is a sufficient but not necessary condition for $[\mathbb{H}_{\psi}]$ to be p.s.d. In particular, we employ the following conditions:
\begin{itemize}
\setlength\itemsep{0.2em}
\item When the potential $\psi$ is convex in the invariants $\bcI^{\square}$, the constitutive type term in~\cref{eq:hess_inv} is p.s.d.\ \parencite[Observation~7.1.8]{horn2013}.
\item When the potential $\psi$ is monotonic\footnote{In this work, if not stated otherwise, ``monotonic'' refers to component-wise monotonically increasing functions, i.e., $\partial_{x_i}f(\bx)\geq 0$.} in the invariants $\bcI^{\square}$ and all invariants contained in $\bcI^{\square}$ are polyconvex, the geometric type term in~\cref{eq:hess_inv} is p.s.d. 
\end{itemize}
We call an invariant polyconvex if it fulfills the polyconvexity condition (cf.~\cref{eq:pc_mech_comp}). For instance, the isotropic invariant $\JoneISO = \norm{\bF}^2$ is convex in $\bF$ and thus polyconvex. However, not all invariants are polyconvex, e.g., the transversely isotropic invariant $\JtwoTI=\tr\bC^2\cdot\bMti$ is not elliptic and thus not polyconvex \parencite{merodio2006}. For unrestricted, non-polyconvex models, also non-polyconvex invariants can be applied, and the potential $\psi$ is not subject to convexity or monotonicity constraints.

\medskip

When constructing invariant tuples based on representation theorems (cf.~\cref{sec:inv_basics_struct}), the polyconvexity condition is not considered, and the obtained invariants are generally not polyconvex. Thus, the integrity bases obtained from representation theorems have to be adapted to obtain polyconvex integrity or functional bases applicable for polyconvex constitutive modeling. In the following, we adopt the following scheme to construct polyconvex bases for different symmetry groups:
\begin{enumerate}[label=(\Alph*)]
    \item Find the (generally non-polyconvex) \textbf{minimal integrity basis} for the considered symmetry group. Minimal integrity bases constructed via structural tensors for all commonly considered symmetry groups are given in~\textcite{riemer2025}.\footnote{Note that corresponding integrity bases can also be formulated using coordinate-based invariants \parencite{smith1962}.}
    \item Try to formulate a \textbf{polyconvex integrity or functional basis}. For this, a polyconvex representation must be found for all of the invariants obtained in~(A). To anticipate~\cref{rem:complete_pc_bases}, formulating polyconvex bases remains an open challenge for most anisotropies. 
    \item If required, \textbf{extend the polyconvex integrity or functional basis}. This can either facilitate the formulation of polyconvex constitutive models, or it can improve the model's flexibility.
\end{enumerate}
In~\cref{sec:tric_coord}, we discuss triclinic integrity bases in coordinate-dependent form, followed by integrity and functional bases based on structural tensors for a variety of symmetry groups in~\cref{sec:inv_bases}.

\begin{remark}
For all common symmetry groups, (generally non-polyconvex) integrity and in turn functional bases are available. For polyconvexity, however, it may not always be possible to formulate integrity bases. The additional flexibility provided by functional bases, i.e., that they admit functional relationships beyond polynomials, can then become necessary, cf.~Appendix~\ref{app:inv_relations_cub}. In practical constitutive modeling, however, the distinction between the two is often of small relevance, as the constitutive model equations are not restricted to polynomials.
\end{remark}

\subsection{Anisotropic hyperelasticity based on triclinic invariants and group symmetrization}\label{sec:tric_coord}

We now provide integrity bases for triclinic anisotropy in coordinate-dependent form. Together with group symmetrization, this provides a constitutive modeling framework for finite symmetry groups.


\subsubsection{Coordinate-dependent triclinic invariants}\label{sec:invs_tri_coord}

\paragraph{(A) Minimal integrity basis} The coordinate-dependent invariants
\begin{equation}\label{eq:K_basis_unr}
\begin{aligned}
 \Kone=C_{11}\,,\qquad \Ktwo=C_{22}\,,\qquad\Kthree=C_{33}\,,\qquad\Kfour= C_{12}\,,\qquad\Kfive=C_{13}\,,\qquad\Ksix=C_{23}\,,
    \end{aligned}
\end{equation}
proposed by~\parencite{smith1962} form an minimal integrity basis for the triclinic $\cC_1$ group. The triclinic group is the most general material symmetry group, as reflected by the fact that~\cref{eq:K_basis_unr} contains all six independent coordinates of the right Cauchy-Green tensor. Thus, with triclinic anisotropy, arbitrary (an-)isotropic material behavior can be represented. 

\paragraph{(B) Polyconvex integrity basis}

Not all coordinates of $\bC$ are polyconvex. In particular, while the main diagonal components of $\bC$ are polyconvex, the remaining components are not elliptic and thus not polyconvex. The difference between main and off-diagonal elements of $\bC$ becomes evident for the examples
\begin{equation}\label{eq:C12}
\begin{aligned}
           \bC:(\be^{(1)}\otimes\be^{(1)})&=  C_{11}=F_{11}^2+F_{21}^2+F_{31}^2\,,
    \\
     \bC:(\be^{(1)}\otimes\be^{(2)})&=     C_{12}=F_{11}F_{12}+F_{21}F_{22}+F_{31}F_{32}\,,
\end{aligned}
\end{equation}
where $\be^{(i)}\in\bbR^3$ denote the basis vectors of the employed Cartesian coordinate system. The component $C_{11}$ is convex in the deformation gradient $\bF$. In contrast to that, $C_{12}$ is not elliptic and thus not polyconvex, which can be shown by evaluating the ellipticity condition (cf.~\cref{eq:ellip_mech_comp}) for, e.g., $[\ba]=(0,0,1)^T,\,[\bb]=(1,-1,0)^T$. The difference is that $C_{11}$ only contains square functions of individual $F_{ij}$, which is a convex operation, whereas $C_{12}$ contains multiplicative terms of different $F_{ij}$, which generally does not preserve convexity \parencite{Schroeder2003}. Thus, the triclinic integrity basis in~\cref{eq:K_basis_unr} is not polyconvex. To construct a polyconvex triclinic integrity basis, we consider the polyconvex projection
\begin{equation}\label{eq:C_comp}
\bC:\bA^{(i)}\in\bbR\quad\text{with}\quad\bA^{(i)}=\bu^{(i)}\otimes\bu^{(i)}\quad \text{for}\quad\bu^{(i)}\in\bbR^3\,\normm{\bu^{(i)}}=1\,.
\end{equation}
By inserting the vectors
\begin{equation}\label{eq:cu}
    \begin{aligned}
\be^{(1)},\,\be^{(2)},\,\be^{(3)},\,(\be^{(1)}+\be^{(2)})/\sqrt{2},\,(\be^{(1)}+\be^{(3)})/\sqrt{2},\,(\be^{(2)}+\be^{(3)})/\sqrt{2}\,,
    \end{aligned}
\end{equation}
into~\cref{eq:C_comp}, we construct the coordinate-dependent invariants
\begin{equation}\label{eq:K_basis_C}
\begin{aligned}
& \Cone=C_{11}\,, \qquad\Ctwo=C_{22}\,,\qquad\Cthree=C_{33}\,,\qquad
&&\Cfour= (C_{11}+C_{22}+2C_{12}) /{2}\,,
\\
&\Cfive=(C_{11}+C_{33}+2C_{13})/{2}\,,&&\Csix=(C_{22}+C_{33}+2C_{23})/{2}\,,
    \end{aligned}
\end{equation}
which form a polyconvex integrity basis for the triclinic $\cC_1$ group, as it is possible to express $k_{1-6}$ as polynomials in $c_{1-6}$. The relation between both is linear, and we leave the calculations to the interested reader.

\paragraph{(C) Extension of the polyconvex integrity basis}

In a polyconvex modeling framework, it seems very natural to also consider the coordinates of the cofactor of the right Cauchy-Green tensor $\bG$ and the determinant of the deformation gradient $J$. This is motivated by the fact that $J$
 and the coordinates of $\bG$ are not convex functions of $\bC$, and therefore cannot be represented within a convex formulation based solely on $\bC$. Thus, including $\bG$ and $J$ can improve the flexibility of the constitutive model. Similar to~\cref{eq:K_basis_C}, we construct the coordinate-dependent invariants of $\bG$
\begin{equation}\label{eq:K_basis_G}
\begin{aligned}
& \Cseven=G_{11}\,, \qquad\Ceight=G_{22}\,,\qquad\Cnine=G_{33}\,,\qquad
&& \Cten= (G_{11}+G_{22}+2G_{12}) /{2}\,,
 \\
 &\Celeven=(G_{11}+G_{33}+2G_{13})/{2}\,,&&\Ctwelve=(G_{22}+G_{33}+2G_{23})/{2}\,,
    \end{aligned}
\end{equation}
where polyconvexity can be shown by replacing $\bC$ with $\bG$ in~\cref{eq:C_comp}. Moreover, we consider
\begin{equation}\label{eq:K_basis_J}
\Cthirteen=J\,,\qquad\Cfourteen=-J\,.
\end{equation}
Including both $J$ and $-J$ is motivated as follows. Invariants such as $C_{11}$ are nonlinear functions of the arguments of the polyconvexity condition, in particular, $\det\bC=J^2$ is a nonlinear function in $J$. Thus, the Hessian operator of the invariant-based potential consists of both a constitutive type term and a geometric type term (cf.~\cref{eq:hess_inv}), and the potential has to be both convex and monotonically increasing in such invariants to preserve their polyconvexity. In contrast to that, $J$ is the only invariant quantity directly included in the polyconvexity condition (cf.~\cref{sec:kinematics}). Being linear in $\bxi$, the constitutive type term vanishes for $J$ (cf.~\cref{eq:hess_inv}). Thus, the potential $\psi$ must not necessarily be monotonically increasing in this invariant. This is pragmatically taken into account by adding $-J$ in the set of arguments for polyconvex models, which allows to formulate potentials which are convex and monotonically increasing in all of their arguments without being overly restrictive in $J$.\footnote{It would also be possible to use an adapted NN architecture that is convex in $J$ and convex and monotonic in the remaining invariants \parencite{KALINA2026118892,jadoon2025inputspecificneuralnetworks}.} Overall, this simplifies the formulation of polyconvex NN constitutive models, as it enables to express the potential as a convex and monotonic function in \emph{all} of the invariants (cf.~\cref{sec:comp_PANN_pc}).
We consider coordinate-dependent triclinic invariants only for polyconvex constitutive models, for which we employ the invariants\footnote{For the modeling approach introduced in~\cref{sec:invs_tri_coord}, we deviate from the notation employed in~\cref{sec:inv_conv}. In particular, we denote the coordinate-dependent integrity basis in~\cref{eq:C_pc} by $\bcC$ instead of $\bcI^{\text{tri}}$.}
\begin{equation}\label{eq:C_pc}
\bcC=\big(\Cone,\,\Ctwo,\,\Cthree,\,\Cfour,\,\Cfive,\,\Csix,\,\Cseven,\,\Ceight,\,\Cnine,\,\Cten,\,\Celeven,\,\Ctwelve,\,\Cthirteen,\,\Cfourteen\big)\in\bbR^{14}\,.
\end{equation}
The elements of the tuple $\bcC$ form an integrity basis, but it is not minimal. However, since the amount of invariants remains moderate even if some additional invariants are added to the minimal set, non-minimality poses no problem for the constitutive modeling framework applied in this work.

\subsubsection{Group symmetrization}\label{sec:group_symm}

As discussed in~\cref{sec:invs_tri_coord}, the hyperelastic potential
\begin{equation}\label{eq:pot_c_0}
  \psi:\bbR^{14}\rightarrow\bbR\,, \quad   \bcC\mapsto \psi ( \bcC)\,,
\end{equation}
with $W(\bF)=\psi(\bcC)$ based on the triclinic integrity basis $\bcC$ (cf.~\cref{eq:C_pc}) can represent material behavior for arbitrary symmetry groups, making it a very general approach.\footnote{This can be beneficial in scenarios where no information about the material symmetry of the considered material is available, as the constitutive model can learn the material symmetry behavior through the calibration data. Note that several methods have been proposed to identify the unknown symmetry group of a material, e.g., a special PANN architecture~\parencite{kalina2025,Linka2020,Fuhg2022b} and a projection approach~\parencite{ellmer2025}.} In recent literature, this is sometimes referred to as \textit{coordinate-based modeling} \parencite{kleinDiss}. This is motivated by the non-polyconvex triclinic integrity basis in~\cref{eq:K_basis_unr}, which so far has been used for these models and includes all six independent coordinates of the right Cauchy-Green tensor. Notably, this terminology is not motivated by the formulation of invariants in coordinate-dependent form, cf.~\parencite{smith1962,riemer2025}. The latter would also be possible for other symmetry groups, e.g., transverse isotropy \parencite[Sec.~2.2]{schroeder2010b} or the eleven groups arising from the crystal systems \parencite{smith1962}. 

For finite symmetry groups $\cG\subset O(3)$, the group symmetrization
\begin{equation}\label{eq:group_symm}
\widetilde{\psi}(\bC)=\frac{1}{\lvert\mathcal{G}\rvert}\sum_{\bQ\in\mathcal{G}}{\psi}\big( \bcC(\bQ\star\bC)\big)\,,
\end{equation}
provides a potential $\widetilde{\psi}$ that fulfills the material symmetry condition \cref{eq:symm} a priori. In invariant theory, this concept is referred to as \emph{Reynolds operator} \parencite[Section~3.1.2]{DerksenKemper}, and it has already been employed in PANN constitutive modeling by \parencite{Fernandez2020,klein2022a}. 

In~\cref{eq:group_symm}, $\lvert\mathcal{G}\rvert$ is the number of elements in the symmetry group. For infinite symmetry groups such as transverse isotropy, to approximately fulfill the material symmetry condition, a finite subgroup of the infinite group may be employed for the group symmetrization \parencite{Fernandez2020,klein2022a}. As discussed in~\textcite[Section~3.2]{klein2022a}, the group symmetrization operation preserves the polyconvexity of $\psi$. For the group symmetrization, multiple evaluations of $\psi$ are required, leading to an increased computational cost. This can partly be mitigated, as it is sufficient to consider $\cG\subseteq\SO(3)$ instead of $\cG\subseteq\operatorname{O}(3)$ for the evaluation of~\cref{eq:group_symm}. This is because the invariants collected in $\bcC$ are functions of $\bC$, and for potentials expressed in $\bC$, a rotation $\bQ$ and the corresponding rotoinversion $-\bQ$ have the same effect, see e.g. \textcite[Section~3.7]{Ebbing2010} or \textcite[Section~2.3]{riemer2025}.

\subsection{Structural tensor-based polyconvex integrity and functional bases for various symmetry groups}\label{sec:inv_bases}

In the past, a variety of polyconvex invariants and integrity bases have been proposed for various symmetry groups, which we summarize in~\cref{tab:invs}. To the best of our knowledge, polyconvex invariants have only been proposed for 10 out of the 14 commonly considered symmetry groups. Notably, the transversely isotropic $\cC_{\infty}$ group and the hexagonal $\cC_6$ and $\cD_6$ groups are subgroups of the transversely isotropic $\cD_{\infty}$ group \parencite{Zheng1993}. Thus, they can be partly characterized by the latter \parencite{Schroeder2008}. However, the resulting tuples of invariants remain invariant under a larger set of rotations than those admissible for the respective symmetry groups, i.e., the tuples of employed structural tensors do not satisfy the second condition in \cref{eq:def_struct}. Thus, polyconvex invariants that exactly represent these groups are not available. Polyconvex integrity bases have been proposed only for the following symmetry groups, all of which can be fully characterized by structural tensors of up to order two \parencite{riemer2025}: the isotropic $\cK$, transversely isotropic $\cD_{\infty}$, triclinic $\cC_1$, monoclinic $\cC_{2}$, and rhombic $\cD_{2}$ groups. In particular, we are not aware of polyconvex integrity bases for symmetry groups that require structural tensors of order higher than two, such as the tetragonal $\cD_4$ or the cubic $\cO$ group, which both require structural tensors of third or fourth order \parencite{riemer2025,Xiao1996OnIE,Zheng1993}. 

In the following, we recall the polyconvex integrity bases available in the literature, and moreover propose a new polyconvex integrity bases for the tetragonal $\cD_4$ group and a new functional basis for the cubic $\cO$ group. Thereby, we denote the preferred direction(s) of a material in the reference configuration by orthonormal vectors
\begin{equation}\label{eq:pref_dir}
\bn_{0}^{(i)}\in\bbR^3\qquad
\text{with}\qquad\bn_{0}^{(i)}\cdot\bn_{0}^{(j)}=\delta_{ij}\,.
\end{equation} 
If the material has only a single preferred direction, the superscript $(i)$ is omitted. In~Appendix~\ref{app:inv_relations}, we provide the functional relations between the invariant bases introduced in the following. By this, we verify that the polyconvex invariants belong to the considered symmetry group, and we investigate whether they form an integrity or a functional basis.

 \renewcommand{\arraystretch}{1.4}
\begin{table}[t!]
 \centering
  \caption{Availability of polyconvex invariants and integrity or functional bases for the most relevant symmetry groups in hyperelasticity. For the $\cK,\,\cD_{\infty},\cC_1,\,\cC_2,\,\cD_2$, and $\cD_4$ groups, integrity bases are available, while for the $\cO$ group, only a functional basis is available. 
}
 \begin{tabular}{llp{0.0cm}lp{0.0cm}l}
\toprule
\multicolumn{2}{l}{Symmetry group with}& &Polyconvex invariants && Polyconvex integrity \\[-5pt]
\multicolumn{2}{l}{Schoenflies notation}& & && or functional basis \\
\midrule
\midrule
isotropic & $\cK$ && \textcite{Ball1976} && \textcite{Ball1976} \\
\midrule
transverse isotropy & $\cC_{\infty}$ && n/a && n/a \\
 & $\cD_{\infty}$ && \textcite{Schroeder2003}  && \textcite{Schroeder2003} \\
\midrule
triclinic & $\cC_1$ &&  \textcite{Schroeder2008} && \textcite{Schroeder2008} \\
\midrule
monoclinic & $\cC_2$ && \textcite{Schroeder2008} && \textcite{Schroeder2008} \\
\midrule 
rhombic & $\cD_2$ &&  \textcite{Schroeder2003} && \textcite{Schroeder2003} \\
\midrule
tetragonal & $\cC_4$ && \textcite{Schroeder2010a} && n/a \\
& $\cD_4$ &&  \textcite{Schroeder2010a} && our manuscript, \cref{sec:invs_TET} \\
\midrule
trigonal & $\cC_3$ &&  \textcite{Schroeder2010a} && n/a \\
& $\cD_3$ && \textcite{Schroeder2010a} && n/a \\
		 \midrule
hexagonal & $\cC_6$ && n/a && n/a \\
& $\cD_6$ &&  n/a && n/a \\
		 \midrule
cubic & $\cT$ &&  n/a && n/a \\
	& $\cO$ && \textcite{Kambouchev2007} && our manuscript, \cref{sec:invs_CUB}\\
\bottomrule
\end{tabular}
\label{tab:invs}
\end{table}

\subsubsection{Isotropic material behavior}\label{sec:invs_iso}

\paragraph{(A) Minimal integrity basis}

For the isotropic symmetry group $\cK$, no structural tensors are required, i.e., $\cS^{\text{iso}}=\emptyset$.\footnote{Note that the identity tensor $\bI$ could be employed as structural tensor for isotropic material behavior.} The three invariants
\begin{equation}\label{eq:invs_iso_standard}
    \JoneISO = \tr\bC\,,\qquad \JtwoISO=\tr\bC^2\,,\qquad \JthreeISO=\tr\bC^3\,,
\end{equation}
form a minimal integrity basis for the isotropic $\cK$ group.

\paragraph{(B) Polyconvex integrity basis}

While the invariants in~\cref{eq:invs_iso_standard} are polyconvex~\parencite[Sec.~3.2.1]{Schroeder2003}, the isotropic polyconvex invariants
\begin{equation}\label{eq:invs_iso}
    \IoneISO = \tr\bC=\norm{\bF}^2\,,\qquad \ItwoISO=\tr\bG=\norm{\bH}^2\,,\qquad \IthreeISO=\det\bC=J^2\,,
\end{equation}
are a more popular choice in constitutive modeling. The invariants in~\cref{eq:invs_iso} form a polyconvex integrity basis for the isotropic $\cK$ group as $I^{\text{iso}}_{1-3}$ can represent $J^{\text{iso}}_{1-3}$ as polynomials, cf.~\textcite[Eq.~(4.29)]{itskov2015}, \textcite[Eq.~(3.18)]{Schroeder2003}, and~App.~\ref{app:inv_relations_iso}. A polyconvex integrity basis for the isotropic $\cK$ group was first proposed by \textcite[Sec.~8]{Ball1976}.

\paragraph{(C) Extension of the polyconvex integrity basis} 

Instead of $\IthreeISO$, it can be convenient to employ $J=\sqrt{\IthreeISO}$ for NN-based constitutive modeling and furthermore include $\IfourISO=-J$, which we already motivated below~\cref{eq:K_basis_J}. 

\subsubsection{Transversely isotropic material behavior}\label{sec:invs_TI}

Two transversely isotropic symmetry groups are relevant in anisotropic hyperelasticity \parencite{riemer2025}. In this work, we only consider the transversely isotropic $\mathcal{D}_{\infty}$ group.

\paragraph{(A) Minimal integrity basis}

For the transversely isotropic symmetry group $\mathcal{D}_{\infty}$, a single second-order structural tensor is sufficient to characterize the materials anisotropy, e.g.,
\begin{equation}\label{eq:struct_ti}
\cS^{\text{ti}}=\big(\bMti\big)\qquad\text{with}\qquad\bMti =\bn_0\otimes\bn_0\,.
\end{equation} 
Together with the isotropic invariants $J_{1-3}^{\text{iso}}$ introduced in~\cref{sec:invs_iso}, the invariants
\begin{equation}\label{eq:invs_ti_unr}
\JoneTI=\tr\bC\cdot\bMti\,,\qquad \JtwoTI=\tr\bC^2\cdot\bMti\,,
\end{equation}
form an minimal integrity basis for the transversely isotropic $\cD_{\infty}$ group.

\paragraph{(B) Polyconvex integrity basis}

While $\JoneTI$ is already a polyconvex invariant, $\JtwoTI$ is not elliptic and thus not polyconvex \parencite{merodio2006}. This demonstrates that invariants obtained from representation theorems, which are based on powers of $\bC$, quickly lead to non-polyconvex formulations. To obtain a polyconvex invariant that includes the information of $\JtwoTI$, the Cayley-Hamilton theorem \parencite{Holzapfel2000,itskov2015} can be employed:
\begin{equation}\label{eq:G_cayley_hamilton}
\bG=\bC^2-\IoneISO\bC+\ItwoISO\bI\,.
\end{equation}
Notably, the cofactor $\bG=\cof\bC$ is a quadratic expression in $\bC$. Furthermore, the cofactor of the deformation gradient $\bH$ is included in the arguments of the polyconvexity condition, and $\bH$ is closely related to $\bG$ via $\bG=\bH^T\cdot\bH$. Based on this observation, a variety of polyconvex invariants can be formulated for anisotropic material behavior \parencite{Schroeder2003}. For transverse isotropy, the two polyconvex invariants
\begin{equation}\label{eq:invs_ti_pc}
\IoneTI=\tr\bC\cdot\bMti=\norm{\bF\cdot\bMti}^2\,,\qquad \ItwoTI=\tr\bG\cdot\bMti=\norm{\bG\cdot\bMti}^2\,,
\end{equation}
can be constructed \parencite{Schroeder2003}.\footnote{Formulating invariants in terms of the arguments of polyconvexity also provides a geometric interpretation of such invariants \parencite[Sec.~3.2.2]{Schroeder2003} (cf.~\cref{sec:kinematics}).} Together with the isotropic invariants $I_{1-3}^{\text{iso}}$ introduced in~\cref{sec:invs_iso}, the invariants in \cref{eq:invs_ti_pc} form a polyconvex integrity basis for the transversely isotropic $\cD_{\infty}$ group. This is because $I^{\text{ti}}_{1,2}$ and $I^{\text{iso}}_{1-3}$ can represent $J^{\text{iso}}_{1-3}$ and $J^{\text{ti}}_{1,2}$ as polynomials, cf.~\textcite[Eq.~(3.46)]{Schroeder2003}, App.~\ref{app:inv_relations_iso}, and~App.~\ref{app:inv_relations_ti}. A polyconvex integrity basis for the transversely isotropic $\cD_{\infty}$ group was first proposed by \textcite[Sec.~3.2.2]{Schroeder2003}.

\paragraph{(C) Extension of the polyconvex integrity basis}

As proposed by \textcite{Schroeder2003}, the additional polyconvex invariants
\begin{equation}\label{eq:invs_ti_2}
    \IthreeTI=\IoneISO-\IoneTI\,,\qquad \IfourTI=\ItwoISO-\ItwoTI\,,
\end{equation}
can be constructed. The use of the additional invariants $\IthreeTI$ and $\IfourTI$ can diminish the restrictions that polyconvexity poses on the energy potential. In particular, a sufficient condition for polyconvexity is that the potential is convex and monotonic in polyconvex invariants, cf.~\cref{sec:inv_conv}. However, as the polyconvex invariants in \cref{eq:invs_ti_2} demonstrate, the monotonicity condition is not necessary. E.g., considering the potential $\psi(\IoneTI,\,\IthreeTI)$, its total derivative w.r.t.\ $\IoneTI$ yields
\begin{equation}
\begin{aligned}
    d_{\IoneTI}\psi(\IoneTI,\,\IthreeTI)&=\partial_{\IoneTI}\psi(\IoneTI,\,\IthreeTI)+\partial_{\IthreeTI}\psi(\IoneTI,\,\IthreeTI) d_{\IoneTI}\IthreeTI
=\partial_{\IoneTI}\psi(\IoneTI,\,\IthreeTI)-\partial_{\IthreeTI}\psi(\IoneTI,\,\IthreeTI) \,,
    \end{aligned}
\end{equation}
which demonstrates that including the additional invariant $\IthreeTI$ allows for decreasing potentials in $\IoneTI$ whenever the condition
\begin{equation}\label{eq:exa_ti_1}
    \partial_{\IoneTI}\psi(\IoneTI,\,\IthreeTI)<\partial_{\IthreeTI}\psi(\IoneTI,\,\IthreeTI) \,,
\end{equation}
holds. In particular, \cref{eq:exa_ti_1} can be fulfilled even if the potential is convex and monotonic in all invariants. Thus, using these additional invariants, at least in parts, allow for decreasing functional relationships of the potential $\psi$ in $\IoneTI$ and $\ItwoTI$, thus providing the constitutive model with more flexibility. Note that these additional invariants are not required when non-polyconvex models are considered, since, in this case, the potential is not subject to convexity or monotonicity conditions. 

\begin{remark}\label{rem:add_invs}
    The invariants in~\cref{eq:invs_ti_2} can be expressed as
    \begin{equation}
        \tr\bA-\tr\bA\cdot\left(\bn_0\otimes\bn_0\right)=\tr\bA\cdot\left(\bI-\bn_0\otimes\bn_0\right)\,,
    \end{equation}
    where $\bA$ represents either $\bC$ or $\bG$. Then, polyconvexity of those invariants immediately follows from the observation that $\bI-\bn_0\otimes\bn_0$ is p.s.d. \parencite[Lemma~A.12]{Schroeder2003}.
\end{remark}

\subsubsection{Triclinic material behavior}\label{sec:invs_tri}

A tuple of structural tensors for the triclinic group $\cC_1$ can be constructed through six second-order structural tensors given by $\bM^{\text{tri}}_{(i)}=\bu^{(i)}\otimes\bu^{(i)}$ (cf.~\cref{eq:C_comp}). The six $\bu^{(i)}$ are chosen by replacing $\be^{(i)}$ with $\bn_0^{(i)}$ in~\cref{eq:cu}. In this case, the choice of $\bn_0^{(i)}$ is generic given that $\bn_{0}^{(i)}\cdot\bn_{0}^{(j)}=\delta_{ij}$. Then, $\bn_0^{(1-3)}$ span the $\bbR^3$, and the tuple of structural tensors characterizes the triclinic group. The polyconvex integrity basis and its extension for triclinic anisotropy can then be obtained as discussed in~\cref{sec:invs_tri_coord}. A polyconvex integrity basis for the triclinic symmetry group $\cC_1$ was first proposed by \textcite[Sec.~3.2]{Schroeder2008}.


\subsubsection{Monoclinic material behavior}\label{sec:invs_CUB}

\paragraph{(A) Minimal integrity basis} The monoclinic symmetry group $\cC_{2}$ can be characterized with the two second-order structural tensors
\begin{equation}\label{eq:struct_mono_standard}
      \cS^{\text{mon}}=\big(\bMmonone,\,\bMmontwo\big)\qquad\text{with}\qquad  
      \begin{aligned}
          \bMmonone&=\bn_{0}^{(1)}\otimes\bn_{0}^{(1)}-\bn_{0}^{(2)}\otimes\bn_{0}^{(2)}\,,
        \\
        \bMmontwo&=\bn_{0}^{(1)}\otimes\bn_{0}^{(2)}-\bn_{0}^{(2)}\otimes\bn_{0}^{(1)}\,.
      \end{aligned}
\end{equation}
Together with the isotropic invariants $J_{1,2}^{\text{iso}}$ introduced in~\cref{sec:invs_iso}, the invariants
\begin{equation}\label{eq:invs_mono_unr}
\begin{aligned}
&{\mathrlap{\JoneMON=\tr\bC\cdot\big(\bMmontwo\big)^2\,,}\phantom{\JoneMON=\tr\bC\cdot\big(\bMmontwo\big)^2\,,}}\qquad 
    {\mathrlap{\JtwoMON=\tr\bC\cdot\bMmonone}{\phantom{ \JfourMON=\tr\bC^2\cdot\bMmonone}}}\,,
    \qquad
    \JthreeMON=\tr\bC\cdot\bMmonone\cdot\bMmontwo\,,
    \\ 
    &\phantom{\JoneMON=\tr\bC\cdot\big(\bMmontwo\big)^2\,,}\qquad   \JfourMON=\tr\bC^2\cdot\bMmonone\,,\qquad \JfiveMON=\tr\bC^2\cdot\bMmonone\cdot\bMmontwo\,, 
\end{aligned}
\end{equation}
form an minimal integrity basis for the monoclinic $\cC_2$ group.

\paragraph{(B) Polyconvex integrity basis} 

The monoclinic structural tensors in~\cref{eq:struct_mono_standard} are not p.s.d.\ and thus not suitable for the construction of polyconvex invariants. Instead, in the following, we consider 
\begin{equation}
      \widetilde{\cS}^{\text{mon}}=\big(\bMmononePC,\,\bMmontwoPC\big)\qquad\text{with}\qquad  
      \begin{aligned}
          \bMmononePC=&\bn_{0}^{(1)}\otimes\bn_{0}^{(1)}\,,
        \\
        \bMmontwoPC=&\big(\bn_{0}^{(1)}+\bn_{0}^{(2)}\big)\otimes\big(\bn_{0}^{(1)}+\bn_{0}^{(2)}\big)
        \\
        &+2\big(\bn_{0}^{(1)}\otimes\bn_{0}^{(1)}+\bn_{0}^{(2)}\otimes\bn_{0}^{(2)}\big)\,,
      \end{aligned}
\end{equation}
which is an alternative tuple of p.s.d.\ stuctural tensors for monoclinic anisotropy \parencite{olive2022}, see also \parencite[Sec.~A.3]{kalina2025}. This shows that, for some symmetry groups, identifying a suitable tuple of p.s.d.\ structural tensors is a first step towards the construction of a polyconvex integrity basis. Together with the isotropic invariants $I_{1-3}^{\text{iso}}$, the invariants
\begin{equation}\label{eq:invs_mon_pc}
\begin{aligned}
&{\mathrlap{\IoneMON=\tr\bC\cdot\bMmononePC\,,}\phantom{\IoneMON=\tr\bC\cdot\bMmononePC\,,}}\qquad 
    {\mathrlap{\ItwoMON=\tr\bC\cdot\bMmontwoPC}{\phantom{\IfourMON=\tr\bG\cdot\bMmononePC}}}\,,
    \qquad
   \IthreeMON=\tr\bC\cdot\big(\bMmontwoPC\big)^2\,,
    \\ 
    &\phantom{\IoneMON=\tr\bC\cdot\bMmononePC\,,}\qquad \IfourMON=\tr\bG\cdot\bMmononePC\,,\qquad \IfiveMON=\tr\bG\cdot\bMmontwoPC\,, 
\end{aligned}
\end{equation}
form a polyconvex integrity basis for the monoclinic $\cC_2$ group. This is because $I^{\text{mon}}_{1-5}$ and $I^{\text{iso}}_{1,2}$ can represent $J^{\text{iso}}_{1,2}$ and $J^{\text{mon}}_{1-5}$ as polynomials, cf.~App.~\ref{app:inv_relations_iso} and~App.~\ref{app:inv_relations_ti}. A polyconvex integrity basis for the monoclinic $\cC_2$ group was first proposed by \textcite[Sec.~3.2]{Schroeder2008}.

\paragraph{(C) Extension of the polyconvex integrity basis} Similar to~\cref{eq:invs_ti_2}, additional polyconvex invariants could be constructed based on the structural tensor $\bMmononePC$, cf.~\cref{rem:add_invs}.


\subsubsection{Rhombic material behavior}\label{sec:invs_CUB}

\paragraph{(A) Minimal integrity basis} 

For the rhombic symmetry group $\cD_{2}$, a single second-order structural tensor is sufficient to characterize the materials anisotropy, e.g.,
\begin{equation}\label{eq:struct_rhom_standard}
  \cS^{\text{rho}}=\big(\bMrho\big)\qquad\text{with}\qquad      \bMrho=\bn_{0}^{(1)}\otimes\bn_{0}^{(1)}-\bn_{0}^{(2)}\otimes\bn_{0}^{(2)}\,.
\end{equation}
Together with the isotropic invariants $J_{1-3}^{\text{iso}}$ introduced in~\cref{sec:invs_iso}, the invariants
\begin{equation}\label{eq:invs_rho_unr}
\begin{aligned}
&\JoneRHO=\tr\bC\cdot\bMrho\,,\qquad&&\JtwoRHO=\tr\bC\cdot\big(\bMrho\big)^2\,,
\\
&\JthreeRHO=\tr\bC^2\cdot\bMrho\,,&&\JfourRHO=\tr\bC^2\cdot\big(\bMrho\big)^2\,,
\end{aligned}    
\end{equation}
form an minimal integrity basis for the rhombic $\cD_2$ group.

\paragraph{(B) Polyconvex integrity basis}

The rhombic structural tensor in~\cref{eq:struct_rhom_standard} is not p.s.d.\ and thus not suitable for the construction of polyconvex invariants. Instead, in the following, we consider 
\begin{equation}
      \widetilde{\cS}^{\text{rho}}=\big(\bMrhoONE,\,\bMrhoTWO\big)\qquad\text{with}\qquad  
      \begin{aligned}
          \bMrhoONE=&\bn_{0}^{(1)}\otimes\bn_{0}^{(1)}\,,
        \\
        \bMrhoTWO=&\bn_{0}^{(2)}\otimes\bn_{0}^{(2)}\,,
      \end{aligned}
\end{equation}
which is an alternative tuple of p.s.d.\ structural tensors for rhombic anisotropy \parencite[Sec.~A.4]{kalina2025}. Together with the isotropic invariants $I_{1-3}^{\text{iso}}$, the invariants
\begin{equation}\label{eq:invs_rho_pc}
\begin{aligned}
\IoneRHO=\tr\bC\cdot\bMrhoONE\,,\qquad\ItwoRHO=\tr\bC\cdot\bMrhoTWO\,,
\\
\IthreeRHO=\tr\bG\cdot\bMrhoONE\,,\qquad\IfourRHO=\tr\bG\cdot\bMrhoTWO\,,
\end{aligned}
\end{equation}
form a polyconvex integrity basis for the rhombic $\cD_2$ group. This is because $I^{\text{rho}}_{1-4}$ and $I^{\text{iso}}_{1-3}$ can represent $J^{\text{rho}}_{1-4}$ and $J^{\text{iso}}_{1-3}$ as polynomials, cf.~\textcite[Sec.~5]{Schroeder2003}, App.~\ref{app:inv_relations_iso} and~App.~\ref{app:inv_relations_rho}. A polyconvex integrity basis for the rhombic $\cD_{2}$ group was first proposed by \textcite[Sec.~5]{Schroeder2003}.
 
\paragraph{(C) Extension of the polyconvex integrity basis} Similar to~\cref{eq:invs_ti_2}, additional polyconvex invariants could be constructed based on the structural tensor $\bMrhoONE$ and $\bMrhoTWO$~\parencite[Sec.~5]{Schroeder2003}, cf.~\cref{rem:add_invs}.


\subsubsection{Tetragonal material behavior}\label{sec:invs_TET}

Two tetragonal symmetry groups are relevant in anisotropic hyperelasticity \parencite{riemer2025}. In this work, we only consider the tetragonal $\cD_4$ group.

\paragraph{(A) Minimal integrity basis}

For the tetragonal symmetry group $\cD_4$, a single fourth-order structural tensor is sufficient to characterize the material's anisotropy, e.g.,
\begin{equation}\label{eq:tet_struct}
   \cS^{\text{tet}}=\big(\bMtet\big)\qquad\text{with}\qquad \bMtet=\sum_{i=1}^2 \bn_{0}^{(i)}\otimes\bn_{0}^{(i)}\otimes\bn_{0}^{(i)}\otimes\bn_{0}^{(i)}\,.
\end{equation}
Together with the isotropic invariants $J^{\text{iso}}_{1-3}$ introduced in~\cref{sec:invs_iso}, the invariants
\begin{equation}\label{eq:invs_tet_brain}
\begin{aligned}
&{\mathrlap{\JoneTET = \tr\bMtet:\bC\,}\phantom{\JoneTET = \tr\bMtet:\bC\,}}\qquad 
    {\mathrlap{\JtwoTET = \tr\left(\bMtet:\bC\right)^2}{\phantom{\JfourTET = \tr\bC^2\cdot\left(\bMtet:\bC\right)}}}\,,
    \qquad
 \JthreeTET = \tr\bMtet:\bC^2\,,
    \\ 
    &\phantom{\JoneTET = \tr\bMtet:\bC\,}\qquad \JfourTET = \tr\bC^2\cdot\left(\bMtet:\bC\right)\,,\qquad \JfiveTET = \tr\left(\bMtet:\bC^2\right)^2\,, 
\end{aligned}
\end{equation}
form an minimal integrity basis for the tetragonal $\cD_4$ group.

\paragraph{(B) Polyconvex integrity basis} Polyconvex invariants for the tetragonal symmetry group $\cD_{4}$ were first proposed by \textcite[Sec.~3]{Schroeder2010a}. However, therein, no integrity or functional basis is provided. We now propose a new polyconvex integrity basis for the tetragonal $\cD_4$ group. To achieve this, we make use of a new method for invariant construction that is not based on the classical procedure to construct invariants via structural tensors and isotropic extensions, but rather on a group symmetrization approach (cf.~\cref{sec:group_symm}):

\begin{theorem}[\textbf{Polyconvex tetragonal invariant function based on group symmetrization}]\label{theorem:new_tetragonal_invs}
The function
\begin{equation}\label{eq:new_tetragonal_invs}
    \begin{aligned}
f=\sum_{i=1}^2\left(a_1\left[\bn_{0}^{(i)}\otimes\bn_{0}^{(i)}:\bC\right]^{b_1}+a_2\left[ \bn_{0}^{(i)}\otimes\bn_{0}^{(i)}:\bG\right]^{b_2}\right)^{c}\quad \text{with}\quad a_1,\,a_2\geq 0,\, b_1,\,b_2,c\geq 1\,,
    \end{aligned}
\end{equation}
is a polyconvex invariant of the tetragonal symmetry group $\cD_4$.%
\end{theorem}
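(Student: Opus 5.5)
The proof has two parts: showing that $f$ is \emph{invariant} under $\cD_4$, and showing that $f$ is \emph{polyconvex}.

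\textbf{Invariance.} I would use the group symmetrization viewpoint of \cref{sec:group_symm}. Define the single-direction function
\begin{equation*}
g(\bC)=\left(a_1\left[\bn_{0}^{(1)}\otimes\bn_{0}^{(1)}:\bC\right]^{b_1}+a_2\left[\bn_{0}^{(1)}\otimes\bn_{0}^{(1)}:\bG\right]^{b_2}\right)^{c}.
\end{equation*}
Recall that $\cD_4$ is generated by the rotation by $\pi/2$ about $\bn_0^{(3)}$ and the rotation by $\pi$ about $\bn_0^{(1)}$. Every $\bQ\in\cD_4$ maps $\bn_0^{(1)}$ to one of $\pm\bn_0^{(1)},\pm\bn_0^{(2)}$. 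Under $\bC\mapsto\bQ\star\bC$ the cofactor transforms the same way, $\cof(\bQ\star\bC)=\bQ\star\bG$. Hence each term $\bn\otimes\bn:\bC$ and $\bn\otimes\bn:\bG$ is insensitive to the sign of $\bn$. It follows that the orbit sum $\frac{1}{8}\sum_{\bQ\in\cD_4} g(\bQ\star\bC)$ is proportional to the $i=1,2$ sum in \cref{eq:new_tetragonal_invs}; each direction $\bn_0^{(1)}$, $\bn_0^{(2)}$ is hit four times. The Reynolds operator \cref{eq:group_symm} then gives invariance. Alternatively, one can check directly that every generator permutes the two summands of $f$. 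Objectivity is immediate, since $f$ depends on $\bF$ only through $\bC$ and $\bG$.

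\textbf{Polyconvexity.} I would write each inner term in the arguments $\bxi=(\bF,\bH,J)$:
\begin{equation*}
\bn\otimes\bn:\bC=\norm{\bF\cdot\bn}^2,\qquad \bn\otimes\bn:\bG=\norm{\bH\cdot\bn}^2.
\end{equation*}
Both are convex and nonnegative, being compositions of a linear map with the squared norm.
\begin{itemize}
\item For $b\geq1$, the map $t\mapsto t^{b}$ is convex and nondecreasing on $[0,\infty)$. Therefore $\norm{\bF\bn}^{2b_1}$ is convex in $\bF$ and $\norm{\bH\bn}^{2b_2}$ is convex in $\bH$.
\item With $a_1,a_2\geq0$, the weighted sum $h_i$ is convex in $(\bF,\bH)$ and nonnegative.
\item Since $t\mapsto t^{c}$ is convex and nondecreasing on $[0,\infty)$ for $c\geq1$, the composition $h_i^{c}$ is convex. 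This uses the standard composition rule of a convex nondecreasing outer function with a convex inner one (the same monotonicity argument as in \cref{sec:inv_conv}).
\item Summing over $i$ preserves convexity.
\end{itemize}
Thus $f=\cP(\bF,\bH,J)$ with $\cP$ convex (and independent of $J$), which is \cref{eq:pc_mech_comp}.

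\textbf{Main obstacle.} The delicate step is the invariance bookkeeping. One must verify that $\cD_4$ maps the set $\{\pm\bn_0^{(1)},\pm\bn_0^{(2)}\}$ to itself, and that $\cof$ commutes with the $\star$-action, so that the $\bG$-terms transform consistently. A further concern is that the group must act on $\bC$ as $\bQ\star\bC$ rather than through $\bF\cdot\bQ^T$ in a way that changes $\bH$ differently. Here $\bH\mapsto\bH\cdot\cof(\bQ^T)=\pm\bH\cdot\bQ^T$, and the sign is harmless because only squared norms appear.
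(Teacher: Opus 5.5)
Your proof is correct. For invariance it is the paper's argument. Both apply the Reynolds operator over the eight elements of $\cD_4$ to the single-direction function, using that the orbit of $\bn_{0}^{(1)}$ is $\{\pm\bn_{0}^{(1)},\pm\bn_{0}^{(2)}\}$ and that $\cof(\bQ\cdot\bC\cdot\bQ^T)=\bQ\cdot\bG\cdot\bQ^T$. The paper absorbs your factor $\tfrac12$ by symmetrizing $\tilde f=2(\dots)^c$ rather than $g$, and your generator check is an equivalent shortcut.

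The genuine difference is in the polyconvexity step. The paper calls $\tilde f$ polyconvex without argument. It then invokes the cited fact that group symmetrization preserves polyconvexity: each map $\bF\mapsto\bF\cdot\bQ^T$, $\bH\mapsto\bH\cdot\cof\bQ^T$ is linear, so convexity survives composition and averaging. You instead prove convexity of the final $f$ in $(\bF,\bH)$ directly, via $\bn\otimes\bn:\bC=\norm{\bF\cdot\bn}^2$ and $\bn\otimes\bn:\bG=\norm{\bH\cdot\bn}^2$ together with the convex-nondecreasing composition rule.

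Your route is more self-contained. It supplies exactly the step the paper leaves implicit and removes the need for the preservation lemma, so symmetrization is used only to establish invariance. The paper's route instead highlights the general construction principle the authors want to promote: any polyconvex seed function becomes a polyconvex $\cG$-invariant after symmetrization. That principle applies to seeds that are not of the explicit composite form treated here, and it carries over verbatim to the cubic case and other finite groups.
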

\begin{proof}
Let us consider the polyconvex function
\begin{equation}
\Tilde{f}=2\left(a_1\left[\bn_{0}^{(1)}\otimes\bn_{0}^{(1)}:\bC\right]^{b_1}+a_2\left[\bn_{0}^{(1)}\otimes\bn_{0}^{(1)}:\bG\right]^{b_2}\right)^{c}\,.
\end{equation}
Application of the group symmetrization for the tetragonal $\cD_4$ group on $\Tilde{f}$ yields
\begin{equation}\label{eq:ifour_GS_tet}
\begin{aligned}
    f&=\frac{1}{8}\sum_{\bQ\in\cD_4}2\left(a_1\left[\bn_{0}^{(1)}\otimes\bn_{0}^{(1)}:\left(\bQ\cdot\bC\cdot\bQ^T\right)\right]^{b_1}+a_2\left[\bn_{0}^{(1)}\otimes\bn_{0}^{(1)}:\left(\bQ\cdot\bG\cdot\bQ^T\right)\right]^{b_2}\right)^{c}
    \\
   & =\frac{1}{8}\sum_{\bQ\in\cD_4}2\left(a_1\left[\left(\bQ^T\cdot\bn_{0}^{(1)}\otimes\bQ^T\cdot\bn_{0}^{(1)}\right):\bC\right]^{b_1}+a_2\left[\left(\bQ^T\cdot\bn_{0}^{(1)}\otimes\bQ^T\cdot\bn_{0}^{(1)}\right):\bG\right]^{b_2}\right)^{c}
   \\
&=\sum_{i=1}^2\left(a_1\left[\bn_{0}^{(i)}\otimes\bn_{0}^{(i)}:\bC\right]^{b_1}+a_2\left[ \bn_{0}^{(i)}\otimes\bn_{0}^{(i)}:\bG\right]^{b_2}\right)^{c}\,.
\end{aligned}
\end{equation}
In~\cref{eq:ifour_GS_tet}, the last equality follows by evaluating $\bQ^T\cdot\bn_{0}^{(1)}$ for the 8 elements of the tetragonal symmetry group \parencite[Table~1]{spencer1971}. Since this symmetry group has a finite number of elements, application of the group symmetrization ensures exact fulfillment of the material symmetry condition. Furthermore, the group symmetrization operation preserves the polyconvexity of the function, cf.~\parencite[Section~3.2]{klein2022a}. Thus, the function $f$ in~\cref{eq:new_tetragonal_invs} is a polyconvex invariant of the tetragonal $\cD_4$ group .
\end{proof}
\begin{corollary}[\textbf{Polyconvex tetragonal invariants based on group symmetrization}]\label{corr:new_tetragonal_invs}
Based on~\cref{theorem:new_tetragonal_invs}, we construct the polyconvex invariants\footnote{We would like to point out that the polyconvex bases determined by group symmetrization, unlike the integrity bases known from the literature, contain some inhomogeneous polynomials.}
\begin{equation}\label{eq:tet_invs_pc_basis}
\begin{aligned}
     &  \IoneTET= \sum_{i=1}^2\bn_{0}^{(i)}\otimes\bn_{0}^{(i)}:\bC\,,
\qquad\ItwoTET=\sum_{i=1}^2\left[\bn_{0}^{(i)}\otimes\bn_{0}^{(i)}:\bC\right]^2\,,
\qquad \IthreeTET =\sum_{i=1}^2\bn_{0}^{(i)}\otimes\bn_{0}^{(i)}:\bG\,,
\\
   & \IfourTET=\sum_{i=1}^2\left[\bn_{0}^{(i)}\otimes\bn_{0}^{(i)}:\bG\right]^2\,,
\qquad\IfiveTET=\sum_{i=1}^2\left(\left[\bn_{0}^{(i)}\otimes\bn_{0}^{(i)}:\bC\right]+\left[\bn_{0}^{(i)}\otimes\bn_{0}^{(i)}:\bG\right]\right)^2\,.
\end{aligned}
\end{equation}
The invariants $I^{\text{tet}}_{1-5}$ and $I^{\text{iso}}_{1-3}$ form an integrity basis for the tetragonal $\cD_4$ group. This is because $I^{\text{tet}}_{1-5}$ and $I^{\text{iso}}_{1,2}$ can represent $J^{\text{iso}}_{1,2}$ and $J^{\text{tet}}_{1-5}$ as polynomials, cf.~App.~\ref{app:inv_relations_iso} and~App.~\ref{app:inv_relations_tet}. Furthermore, the representation of $I^{\text{tet}}_{1-5}$ in terms of $J^{\text{tet}}_{1-5}$ in~App.~\ref{app:inv_relations_tet} verifies that the former are invariants of the tetragonal $\cD_4$ group.
\end{corollary}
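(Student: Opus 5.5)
The corollary makes three claims, and I would handle them in order: polyconvexity of each $I^{\text{tet}}_{1-5}$, tetragonal invariance of each $I^{\text{tet}}_{1-5}$, and the integrity-basis property of $I^{\text{tet}}_{1-5}$ together with $I^{\text{iso}}_{1-3}$.

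\textbf{Polyconvexity and invariance.} Each invariant is a special case of \cref{theorem:new_tetragonal_invs}, so both properties follow at once:
\begin{itemize}
\item $\IoneTET$ corresponds to $a_1=1,\,a_2=0,\,b_1=c=1$;
\item $\ItwoTET$ corresponds to $a_1=1,\,a_2=0,\,b_1=1,\,c=2$, or equivalently $b_1=2,\,c=1$;
\item $\IthreeTET$ and $\IfourTET$ are the analogous choices with $a_1=0,\,a_2=1$;
\item $\IfiveTET$ corresponds to $a_1=a_2=1,\,b_1=b_2=1,\,c=2$.
\end{itemize}
Writing $C_{ii}:=\bn_0^{(i)}\otimes\bn_0^{(i)}:\bC$ and $G_{ii}:=\bn_0^{(i)}\otimes\bn_0^{(i)}:\bG$ in the frame spanned by $\bn_0^{(1-3)}$, the invariants read
\begin{equation*}
\IoneTET=C_{11}+C_{22}\,,\quad \ItwoTET=C_{11}^2+C_{22}^2\,,\quad \IthreeTET=G_{11}+G_{22}\,,\quad \IfourTET=G_{11}^2+G_{22}^2\,,\quad \IfiveTET=\sum_{i=1}^2\left(C_{ii}+G_{ii}\right)^2\,.
\end{equation*}
The theorem also yields $\cD_4$-invariance. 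As an independent check, I would express $I^{\text{tet}}_{1-5}$ polynomially in $J^{\text{tet}}_{1-5}$ and $J^{\text{iso}}_{1-3}$:
\begin{itemize}
\item $\IoneTET=\JoneTET$ and $\ItwoTET=\JtwoTET$, using $\bMtet:\bC=\sum_i C_{ii}\,\bn_0^{(i)}\otimes\bn_0^{(i)}$;
\item $\IthreeTET$ follows from Cayley--Hamilton, $\bG=\bC^2-\IoneISO\bC+\ItwoISO\bI$, giving $\IthreeTET=\JthreeTET-\IoneISO\JoneTET+2\ItwoISO$;
\item $\IfourTET$ and $\IfiveTET$ follow by expanding $G_{ii}^2$ and $C_{ii}G_{ii}$ with the same identity.
\end{itemize}

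\textbf{Integrity basis: strategy.} I must show that $J^{\text{iso}}_{1-3}$ and $J^{\text{tet}}_{1-5}$ are polynomials in $I^{\text{tet}}_{1-5}$ and $I^{\text{iso}}_{1-3}$. The isotropic part is App.~\ref{app:inv_relations_iso}. For the tetragonal part, $\JoneTET$ and $\JtwoTET$ are direct, since $\JoneTET=\IoneTET$ and $\JtwoTET=\ItwoTET$. For the remaining three I would solve the Cayley--Hamilton identity for $\bC^2=\bG+\IoneISO\bC-\ItwoISO\bI$, working with the diagonal quantities $(\bC^2)_{ii}=G_{ii}+\IoneISO C_{ii}-\ItwoISO$.

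\textbf{Integrity basis: each remaining invariant.}
\begin{itemize}
\item $\JthreeTET=\sum_i(\bC^2)_{ii}=\IthreeTET+\IoneISO\IoneTET-2\ItwoISO$, which is polynomial.
\item $\JfourTET=\tr\bC^2\cdot(\bMtet:\bC)=\sum_i C_{ii}(\bC^2)_{ii}=\sum_i C_{ii}G_{ii}+\IoneISO\ItwoTET-\ItwoISO\IoneTET$.
\item $\JfiveTET=\sum_i(\bC^2)_{ii}^2$. Expanding this requires $\sum_i G_{ii}^2=\IfourTET$, $\sum_i C_{ii}G_{ii}$, $\sum_i C_{ii}^2$, $\sum_i C_{ii}$ and $\sum_i G_{ii}$.
\end{itemize}
In both $\JfourTET$ and $\JfiveTET$ the only term not directly in the basis is the mixed sum $\sum_i C_{ii}G_{ii}$. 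This is recovered from
\begin{equation*}
\IfiveTET=\ItwoTET+\IfourTET+2\sum_i C_{ii}G_{ii}\,,
\end{equation*}
so $\sum_i C_{ii}G_{ii}=\tfrac12(\IfiveTET-\ItwoTET-\IfourTET)$. This is exactly the reason $\IfiveTET$ is included in the basis.

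\textbf{Main obstacle.} The factor $\tfrac12$ above makes the representation polynomial over $\bbR$ (rational coefficients), which is the usual setting for integrity bases. The harder part is bookkeeping the scalar factors in the Cayley--Hamilton substitution. In particular, $\ItwoISO$ enters as $\ItwoISO\bI$ projected onto $\bn_0^{(i)}\otimes\bn_0^{(i)}$, which contributes $\ItwoISO$ for each $i$ and hence $2\ItwoISO$ in the sums over $i=1,2$. I would carry out this expansion carefully for $\JfiveTET$, the only quartic invariant.
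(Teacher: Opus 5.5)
Your proposal is correct and follows essentially the same route as the paper. Polyconvexity and $\cD_4$-invariance come from special parameter choices in \cref{theorem:new_tetragonal_invs}, with invariance cross-checked via the polynomial expressions of $I^{\text{tet}}_{1-5}$ in $J^{\text{tet}}_{1-5}$ and $J^{\text{iso}}_{1-3}$. The integrity-basis property comes from the Cayley--Hamilton substitution $(\bC^2)_{ii}=G_{ii}+\IoneISO C_{ii}-\ItwoISO$, with the mixed term $\bC:\bMtet:\bG=\sum_i C_{ii}G_{ii}=\tfrac12(\IfiveTET-\ItwoTET-\IfourTET)$ recovered from $\IfiveTET$; carrying this out reproduces exactly the formulas of App.~\ref{app:inv_relations_tet}.
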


\paragraph{(C) Extension of the polyconvex integrity basis} Based on~\cref{eq:new_tetragonal_invs}, additional polyconvex tetragonal invariants could be constructed that might improve the flexibility of the constitutive model. Moreover, \cref{eq:new_tetragonal_invs} could be employed as part of the constitutive model with trainable material parameters $\{a_1,a_2,b_1,b_2,c\}$.


\subsubsection{Cubic material behavior}\label{sec:invs_CUB}

Two cubic symmetry groups are relevant in anisotropic hyperelasticity \parencite{riemer2025}. In this work, we only consider the cubic $\cO$ group.

\paragraph{(A) Minimal integrity basis}

For the cubic symmetry group $\cO$, a single fourth-order structural tensor is sufficient to characterize the material's anisotropy, e.g.,
\begin{equation}\label{eq:cub_struct}
   \cS^{\text{cub}}=\big(\bMcub\big)\qquad\text{with}\qquad \bMcub=\sum_{i=1}^3 \bn_{0}^{(i)}\otimes\bn_{0}^{(i)}\otimes\bn_{0}^{(i)}\otimes\bn_{0}^{(i)}\,.
\end{equation}
Together with the isotropic invariants $J^{\text{iso}}_{1-3}$ introduced in~\cref{sec:invs_iso}, the invariants
\begin{equation}\label{eq:invs_cub_karl}
\begin{aligned}
&{\mathrlap{\JoneCUB=\tr\left(\bMcub:\bC\right)^2\,,}\phantom{\JfourCUB=\tr\left(\bMcub:\bC^2\right)^2\,,}}\qquad {\mathrlap{ \JtwoCUB=\tr\left(\bMcub:\bC\right)^3\,,}\phantom{\JfiveCUB=\tr\bC^2\cdot\left(\bMcub:\bC\right)^2\,,}}\qquad
{\mathrlap{\JthreeCUB=\tr\bC^2\cdot\left(\bMcub:\bC\right)\,, }\phantom{\JsixCUB=\tr\bC\cdot\left(\bMcub:\bC^2\right)^2\,,}}
\\
&{\mathrlap{\JfourCUB=\tr\left(\bMcub:\bC^2\right)^2\,,}\phantom{\JfourCUB=\tr\left(\bMcub:\bC^2\right)^2\,,}}\qquad{\mathrlap{\JfiveCUB=\tr\bC^2\cdot\left(\bMcub:\bC\right)^2\,,}\phantom{\JfiveCUB=\tr\bC^2\cdot\left(\bMcub:\bC\right)^2\,,}}\qquad
{\mathrlap{\JsixCUB=\tr\bC\cdot\left(\bMcub:\bC^2\right)^2\,,}\phantom{\JsixCUB=\tr\bC\cdot\left(\bMcub:\bC^2\right)^2\,,}}
\end{aligned}
\end{equation}
form an minimal integrity basis for the cubic $\cO$ group.

\paragraph{(B) Polyconvex functional basis} Polyconvex invariants for the cubic symmetry group $\cO$ were first proposed by \textcite[Sec.~3]{Kambouchev2007}. However, therein, no integrity or functional basis is provided. We now propose a new polyconvex functional basis for the cubic $\cO$ group. To achieve this, we make again use of the new method for invariant construction that is based on the group symmetrization approach (cf.~\cref{sec:group_symm}):

\begin{theorem}[\textbf{Polyconvex cubic invariant function based on group symmetrization}]\label{theorem:new_cub_invs}
The function
\begin{equation}\label{eq:new_cubic_invs}
    \begin{aligned}
f=\sum_{i=1}^3\left(a_1\left[\bn_{0}^{(i)}\otimes\bn_{0}^{(i)}:\bC\right]^{b_1}+a_2\left[ \bn_{0}^{(i)}\otimes\bn_{0}^{(i)}:\bG\right]^{b_2}\right)^{c}\quad \text{with}\quad a_1,\,a_2\geq 0,\, b_1,\,b_2,c\geq 1\,,
    \end{aligned}
\end{equation}
is a polyconvex invariant of the cubic symmetry group $\cO$.
\end{theorem}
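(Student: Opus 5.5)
The proof follows the argument of \cref{theorem:new_tetragonal_invs}, with $\cD_4$ replaced by the cubic group $\cO$. I would start from the seed function
\begin{equation*}
\tilde f=3\left(a_1\left[\bn_{0}^{(1)}\otimes\bn_{0}^{(1)}:\bC\right]^{b_1}+a_2\left[\bn_{0}^{(1)}\otimes\bn_{0}^{(1)}:\bG\right]^{b_2}\right)^{c}
\end{equation*}
and show first that it is polyconvex. Each bracket is a nonnegative polyconvex quantity, since $\bn_0^{(1)}\otimes\bn_0^{(1)}:\bC=\norm{\bF\cdot\bn_0^{(1)}}^2$ is convex in $\bF$ and $\bn_0^{(1)}\otimes\bn_0^{(1)}:\bG=\norm{\bH\cdot\bn_0^{(1)}}^2$ is convex in $\bH$. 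The map $t\mapsto t^{b}$ with $b\geq1$ is convex and nondecreasing on $[0,\infty)$, so each power remains convex in $(\bF,\bH)$. A nonnegative combination with $a_1,a_2\geq0$ is convex and nonnegative, and composing once more with $t\mapsto t^{c}$, $c\geq1$, preserves convexity. Hence $\tilde f$ is a convex function of $(\bF,\bH,J)$, i.e.\ polyconvex.

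Next I would apply the group symmetrization \cref{eq:group_symm} over $\cO$. As noted in \cref{sec:group_symm}, it suffices to use the $24$ proper rotations of $\cO$ because the invariants depend only on $\bC$ and $\bG$. I would use $\bQ\star\bC=\bQ\cdot\bC\cdot\bQ^T$ and the fact that $\cof(\bQ\cdot\bC\cdot\bQ^T)=\bQ\cdot\bG\cdot\bQ^T$ for $\bQ\in\SO(3)$. This turns $\bn_0^{(1)}\otimes\bn_0^{(1)}:(\bQ\cdot\bC\cdot\bQ^T)$ into $(\bQ^T\cdot\bn_0^{(1)})\otimes(\bQ^T\cdot\bn_0^{(1)}):\bC$, and likewise for $\bG$.

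The key combinatorial step is the orbit count. Under the octahedral rotation group, $\bQ^T\cdot\bn_0^{(1)}$ ranges over the six vectors $\pm\bn_0^{(i)}$, $i=1,2,3$, and each is attained by exactly $24/6=4$ rotations, namely the stabilizer of an axis, which is $\cC_4$. This can be checked either from the orbit–stabilizer theorem or by listing the group elements, e.g.\ from \textcite[Table~1]{spencer1971}. The dyad $\bu\otimes\bu$ is insensitive to the sign of $\bu$, so each $\bn_0^{(i)}\otimes\bn_0^{(i)}$ occurs $8$ times. The symmetrized function is therefore
\begin{equation*}
\frac{1}{24}\cdot 8\cdot 3\sum_{i=1}^3\left(a_1\left[\bn_{0}^{(i)}\otimes\bn_{0}^{(i)}:\bC\right]^{b_1}+a_2\left[\bn_{0}^{(i)}\otimes\bn_{0}^{(i)}:\bG\right]^{b_2}\right)^{c},
\end{equation*}
which is exactly $f$. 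The prefactor $3$ in $\tilde f$ was chosen so that the normalization comes out to one.

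Finally, the result follows from two facts. The Reynolds average over the finite group $\cO$ satisfies the symmetry condition \cref{eq:symm} exactly, and it is a convex combination of compositions of $\tilde f$ with maps that are linear in $(\bF,\bH,J)$, namely $\bF\mapsto\bF\cdot\bQ^T$, $\bH\mapsto\bH\cdot\bQ^T$, $J\mapsto J$. It therefore preserves polyconvexity, cf.\ \textcite[Section~3.2]{klein2022a}. Objectivity is immediate because $f$ depends only on $\bC$ and $\bG$. I expect the main obstacle to be the orbit computation, specifically confirming that the full group (including rotoinversions, if $\cO$ is taken inside $\operatorname{O}(3)$) produces each axis dyad with equal multiplicity. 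Sign issues disappear because of the dyads, so equal multiplicity reduces to the transitivity of $\cO$ on the three coordinate axes.
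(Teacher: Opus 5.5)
Your proposal is correct and follows the paper's proof. It uses the same seed $\tilde f$ with prefactor $3$, the same rewriting $\bn_0^{(1)}\otimes\bn_0^{(1)}:(\bQ\cdot\bC\cdot\bQ^T)=(\bQ^T\cdot\bn_0^{(1)})\otimes(\bQ^T\cdot\bn_0^{(1)}):\bC$ under symmetrization over $\cO$, and the same appeal to symmetrization preserving polyconvexity, while also making explicit two steps the paper only asserts: the seed's convexity via composition with nondecreasing convex powers, and the orbit--stabilizer count ($4$ rotations per signed axis, $8$ per dyad, $\tfrac{1}{24}\cdot 8\cdot 3=1$), which the paper reads off from the table of the $24$ group elements.
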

\begin{proof}
Let us consider the polyconvex function
\begin{equation}
\Tilde{f}=3\left(a_1\left[\bn_{0}^{(1)}\otimes\bn_{0}^{(1)}:\bC\right]^{b_1}+a_2\left[\bn_{0}^{(1)}\otimes\bn_{0}^{(1)}:\bG\right]^{b_2}\right)^{c}\,.
\end{equation}
Application of the group symmetrization for the cubic $\cO$ group on $\Tilde{f}$ provides
\begin{equation}\label{eq:ifour_GS}
\begin{aligned}
    f&=\frac{1}{24}\sum_{\bQ\in\cO}3\left(a_1\left[\bn_{0}^{(1)}\otimes\bn_{0}^{(1)}:\left(\bQ\cdot\bC\cdot\bQ^T\right)\right]^{b_1}+a_2\left[\bn_{0}^{(1)}\otimes\bn_{0}^{(1)}:\left(\bQ\cdot\bG\cdot\bQ^T\right)\right]^{b_2}\right)^{c}
    \\
   & =\frac{1}{24}\sum_{\bQ\in\cO}3\left(a_1\left[\left(\bQ^T\cdot\bn_{0}^{(1)}\otimes\bQ^T\cdot\bn_{0}^{(1)}\right):\bC\right]^{b_1}+a_2\left[\left(\bQ^T\cdot\bn_{0}^{(1)}\otimes\bQ^T\cdot\bn_{0}^{(1)}\right):\bG\right]^{b_2}\right)^{c}
   \\
&=\sum_{i=1}^3\left(a_1\left[\bn_{0}^{(i)}\otimes\bn_{0}^{(i)}:\bC\right]^{b_1}+a_2\left[ \bn_{0}^{(i)}\otimes\bn_{0}^{(i)}:\bG\right]^{b_2}\right)^{c}\,.
\end{aligned}
\end{equation}
In~\cref{eq:ifour_GS}, the last equality follows by evaluating $\bQ^T\cdot\bn_{0}^{(1)}$ for the 24 elements of the cubic symmetry group \parencite[Table~1]{spencer1971}. Since this symmetry group has a finite number of elements, application of the group symmetrization ensures exact fulfillment of the material symmetry condition. Furthermore, the group symmetrization operation preserves the polyconvexity of the function, cf.~\parencite[Section~3.2]{klein2022a}. Thus, the function $f$ in~\cref{eq:new_cubic_invs} is a polyconvex invariant of the cubic $\cO$ group.
\end{proof}
\begin{corollary}[\textbf{Polyconvex cubic invariants based on group symmetrization}]\label{corr:new_cubic_invs}
Based on~\cref{theorem:new_cub_invs}, we construct the polyconvex invariants
\begin{equation}\label{eq:cub_invs_pc_basis}
\begin{aligned}
     &  \IoneCUB= \sum_{i=1}^3\left[\bn_{0}^{(i)}\otimes\bn_{0}^{(i)}:\bC\right]^2\,,
&&\ItwoCUB=\sum_{i=1}^3\left[\bn_{0}^{(i)}\otimes\bn_{0}^{(i)}:\bC\right]^3\,,
\\
& \IthreeCUB =\sum_{i=1}^3\left[\bn_{0}^{(i)}\otimes\bn_{0}^{(i)}:\bC+\bn_{0}^{(i)}\otimes\bn_{0}^{(i)}:\bG\right]^2\,,
   && \IfourCUB=\sum_{i=1}^3\left[\bn_{0}^{(i)}\otimes\bn_{0}^{(i)}:\bG\right]^2\,,
\\
&\IfiveCUB=\sum_{i=1}^3\left(\left[\bn_{0}^{(i)}\otimes\bn_{0}^{(i)}:\bC\right]^2+\left[\bn_{0}^{(i)}\otimes\bn_{0}^{(i)}:\bG\right]\right)^2\,,
  &&\IsixCUB=\sum_{i=1}^3\left[\bn_{0}^{(i)}\otimes\bn_{0}^{(i)}:\bG\right]^3\,.
    %
\end{aligned}
\end{equation}
The invariants $I^{\text{cub}}_{1-6}$ and $I^{\text{iso}}_{1-3}$ can represent $J^{\text{cub}}_{1-6}$ and $J^{\text{iso}}_{1-3}$, see App.~\ref{app:inv_relations_iso} and \ref{app:inv_relations_cub}. Thereby, $I^{\text{cub}}_{1-5}$ can be formulated as polynomials of $I^{\text{cub}}_{1-5}$ and $I^{\text{iso}}_{1-3}$. However, the representation of $\JsixCUB$ in terms of $I^{\text{cub}}_{1-6}$ and $I^{\text{iso}}_{1-3}$ is not a polynomial, as it includes a division by $\IoneISO$, cf.~\cref{eq:Jsix_representation}. Thus, $I^{\text{cub}}_{1-6}$ and $I^{\text{iso}}_{1-3}$ form only a functional basis but not an integrity basis for cubic anisotropy. Furthermore, the representation of $I^{\text{cub}}_{1-6}$ in terms of $J^{\text{cub}}_{1-6}$ in~App.~\ref{app:inv_relations_cub} verifies that the former are invariants of the cubic $\cO$ group.
\end{corollary}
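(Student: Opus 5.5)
The plan has three parts: polyconvexity and cubic invariance of each $I^{\text{cub}}_{1-6}$, expression of the classical basis $J^{\text{cub}}_{1-6}$, $J^{\text{iso}}_{1-3}$ in terms of $I^{\text{cub}}_{1-6}$, $I^{\text{iso}}_{1-3}$, and the converse expression. Throughout I work in the orthonormal frame $\bn_0^{(1-3)}$ and write $x_i=\bn_0^{(i)}\otimes\bn_0^{(i)}:\bC=C_{ii}$ and $y_i=\bn_0^{(i)}\otimes\bn_0^{(i)}:\bG=G_{ii}$.

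\textbf{Step 1: polyconvexity and cubic invariance.} Both properties of each $I^{\text{cub}}_k$ follow directly from \cref{theorem:new_cub_invs} by choosing admissible parameters:
\begin{itemize}
\item $I^{\text{cub}}_1$: $a_1=1$, $a_2=0$, $b_1=1$, $c=2$.
\item $I^{\text{cub}}_2$: $a_1=1$, $a_2=0$, $b_1=1$, $c=3$.
\item $I^{\text{cub}}_3$: $a_1=a_2=1$, $b_1=b_2=1$, $c=2$.
\item $I^{\text{cub}}_4$: $a_1=0$, $a_2=1$, $b_2=1$, $c=2$.
\item $I^{\text{cub}}_5$: $a_1=a_2=1$, $b_1=2$, $b_2=1$, $c=2$.
\item $I^{\text{cub}}_6$: $a_1=0$, $a_2=1$, $b_2=1$, $c=3$.
\end{itemize}
All of these satisfy $a_1,a_2\ge0$ and $b_1,b_2,c\ge1$, so nothing further is needed here.

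\textbf{Step 2: reduce to diagonal data.} In these coordinates $\bMcub:\bC=\sum_i x_i\,\bn_0^{(i)}\otimes\bn_0^{(i)}$ and $(\bC^2)_{ii}=x_i^2+\sum_{j\neq i}C_{ij}^2$. The key observation is that the squared off-diagonal entries are recovered from the cofactor via
\[
C_{jk}^2=x_jx_k-y_i\qquad\text{for }\{i,j,k\}=\{1,2,3\}.
\]
The cubic invariants then become
\[
\JoneCUB=\textstyle\sum x_i^2,\quad \JtwoCUB=\sum x_i^3,\quad \JthreeCUB=\sum x_i(\bC^2)_{ii},\quad \JfourCUB=\sum(\bC^2)_{ii}^2,\quad \JfiveCUB=\sum x_i^2(\bC^2)_{ii},\quad \JsixCUB=\sum x_i(\bC^2)_{ii}^2 .
\]
Each is therefore a polynomial in $(x_i,y_i)$ invariant under simultaneous permutation of the index $i$, and the task reduces to generating such polynomials from the given data. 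The available data are:
\begin{itemize}
\item the power sums $p_1=\sum x_i=\IoneISO$, $p_2=\IoneCUB$, $p_3=\ItwoCUB$, which via Newton's identities give all symmetric functions of $x$;
\item $\sum y_i=\ItwoISO$, $\sum y_i^2=\IfourCUB$, $\sum y_i^3=\IsixCUB$, which give all symmetric functions of $y$;
\item $\sum x_iy_i=\tfrac12(\IthreeCUB-\IoneCUB-\IfourCUB)$;
\item $\sum x_i^2y_i$, obtained by expanding $\IfiveCUB$ once $\sum x_i^4$ is known;
\item $\IthreeISO=\det\bC$, if needed.
\end{itemize}

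\textbf{Step 3: the polynomial cases.} Using the identities above, I expect $\JoneCUB$ to $\JfiveCUB$ (and the isotropic $J^{\text{iso}}_{1-3}$, via Cayley--Hamilton as in App.~\ref{app:inv_relations_iso}) to reduce to polynomials. As a check, $\JthreeCUB$ becomes
\[
\textstyle\sum x_i^3+\sum_{i\neq j}x_i^2x_j-\bigl(p_1\sum y_i-\sum x_iy_i\bigr),
\]
which is polynomial in the listed quantities.

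\textbf{Step 4: the main obstacle, $\JsixCUB$.} I expect $\JsixCUB$ to be the hard case, since expanding it produces mixed terms such as $\sum x_iy_i^2$ and $\sum x_i y_j y_k$ that are not directly among the generators. My first attempt is to find a polynomial identity among the generators in which the unknown mixed sum appears with coefficient $p_1=\IoneISO$. One candidate is the product $\IoneISO\cdot\IfourCUB=\sum_i x_i\sum_j y_j^2$. Solving that identity for the unknown introduces a division by $\IoneISO$, which is admissible for a functional basis because $\IoneISO>0$. This is exactly why the set is only a functional basis and not an integrity basis.

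\textbf{Step 5: the converse.} Finally I would invert the same coordinate relations to write each $I^{\text{cub}}_k$ in terms of the $J$'s, which independently confirms their membership in the cubic invariant algebra.
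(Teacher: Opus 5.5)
Your Steps 1--3 are sound. Reducing everything to the diagonal pairs $x_i=C_{ii}$, $y_i=G_{ii}$, with $(\bC^2)_{ii}=y_i+\IoneISO x_i-\ItwoISO$ from Cayley--Hamilton, is more transparent than the paper, which simply lists symbolically verified relations in App.~\ref{app:inv_relations_cub}.

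\textbf{The gap is Step~4.} Since $\JsixCUB=\sum_i x_iy_i^2+(\text{polynomial in the generators})$, everything hinges on $\sum_i x_iy_i^2$, and your candidate identity says nothing about it. You have $\IoneISO\IfourCUB=\sum_i x_iy_i^2+\sum_{i\neq j}x_iy_j^2$, but $\sum_{i\neq j}x_iy_j^2=\IoneISO\IfourCUB-\sum_i x_iy_i^2$ is just as unknown. So the identity is a tautology, and the unknown enters with coefficient $1$, not $\IoneISO$.

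The missing ingredient is $\IthreeISO=\det\bC$. You list it ``if needed'' but never use it, and it is the only generator that carries information beyond the diagonal data.
\begin{itemize}
\item With $\pi=C_{12}C_{13}C_{23}$ one has $\det\bC=-2x_1x_2x_3+\sum_i x_iy_i+2\pi$, so $\pi$ is a polynomial in the generators.
\item Your own relation $C_{jk}^2=x_jx_k-y_i$ gives $\pi^2=\prod_i(x_jx_k-y_i)=e_3^2-e_3\sum_i x_iy_i+\tfrac12\big[(\sum_i x_iy_i)^2-\sum_i x_i^2y_i^2\big]-y_1y_2y_3$, with $e_3=x_1x_2x_3$. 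Hence $\sum_i x_i^2y_i^2$ is a polynomial in the generators.
\item The factor $\IoneISO$ then comes from a genuine syzygy. Apply the three-variable Newton identity $P_4=\tfrac16P_1^4-P_1^2P_2+\tfrac43P_1P_3+\tfrac12P_2^2$ to $z_i=x_i+t\,y_i$ and read off order $t^2$. This gives $6\sum_i x_i^2y_i^2=4\,\IoneISO\sum_i x_iy_i^2+(\text{polynomial in }\sum_i x_i,\sum_i y_i,\sum_i x_i^2,\sum_i x_iy_i,\sum_i y_i^2,\sum_i x_i^2y_i)$.
\end{itemize}
This matches the structure of the paper's formula for $\JsixCUB$, whose $1/\IoneISO$ bracket contains $\IthreeISO$-terms.

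\textbf{A second, smaller gap.} Even with such an identity, one representation with a denominator does not show that no polynomial representation exists. So ``this is exactly why it is only a functional basis'' is a non sequitur; the paper draws its conclusion the same way. An actual argument runs as follows.
\begin{itemize}
\item By homogeneity, a polynomial representation of the degree-$5$ invariant $\JsixCUB$ can be taken of the form $\alpha+\pi\beta$. Here $\alpha,\beta$ are polynomials in the eight diagonal power sums other than $\sum_i x_iy_i^2$, and $\pi$ enters at most linearly because it has degree $3$.
\item Flipping the sign of $C_{12}$ fixes all $x_i,y_i$ but flips $\pi$, which forces $\beta=0$.
\item Since $(x,y)$ ranges over an open set, comparing bidegree-$(1,2)$ parts would then require $\sum_i x_iy_i^2\in\operatorname{span}\{\IoneISO\IfourCUB,\,\IoneISO(\ItwoISO)^2,\,\ItwoISO\sum_i x_iy_i\}$.
\item This fails at $x=(1,0,0)$ with $y=(0,1,0),\,(0,1,1),\,(1,0,0),\,(1,1,0)$: the first two force the first two coefficients to vanish, and the last two demand the third coefficient be both $1$ and $\tfrac12$.
\end{itemize}
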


\paragraph{(C) Extension of the polyconvex integrity basis} Based on~\cref{eq:new_cubic_invs}, additional polyconvex cubic invariants could be constructed that might improve the flexibility of the constitutive model. Moreover, \cref{eq:new_cubic_invs} could be employed as part of the constitutive model with trainable material parameters $\{a_1,a_2,b_1,b_2,c\}$.

\subsection{On (in-)complete polyconvex integrity or functional bases for anisotropy}\label{rem:complete_pc_bases}

Above, we have presented polyconvex integrity and functional bases for several material symmetry groups, including five crystal groups, one transversely isotropic group, and the isotropic group. However, six more crystal groups and one more transversely isotropic group are also relevant in hyperelastic constitutive modeling. To the best of our knowledge, no polyconvex integrity or functional bases have been proposed for these remaining symmetry groups. The formulation of polyconvex bases for \emph{all} symmetry groups entails challenges such as invariants that include higher-order powers of $\bC$, structural tensors that are not p.s.d., and complex fourth- and sixth-order structural tensors. As we demonstrated for tetragonal and cubic anisotropy, group symmetrization provides an alternative approach to the formulation of anisotropic invariants and may present an opportunity for the construction of further polyconvex integrity or functional bases. To conclude, for a large share of symmetry groups, no polyconvex integrity or functional basis is available. As a consequence, polyconvex constitutive models for such symmetry groups are based on incomplete sets of invariants. This leads to a loss of information, thus limiting the flexibility of the constitutive models. We provide a comprehensive discussion on such limitations of polyconvex constitutive modeling in~\textcite{klein2026b}.

\section{Polyconvex PANN constitutive modeling}\label{sec:PANNs}

In this section, we discuss different polyconvex constitutive models based on physics-augmented neural networks (PANNs). The PANN constitutive models are based on monotonic and convex neural networks which are introduced in~\cref{sec:CMNNs}. Subsequently, in~\cref{sec:PANN_comp,sec:PANN_C}, we discuss polyconvex PANN modeling approaches based on invariants constructed via structural tensors, as well as an approach based on triclinic invariants and group symmetrization.

\subsection{Monotonic and convex neural networks}\label{sec:CMNNs}

In this work, we consider constitutive models based on feed-forward neural networks (FFNNs), which are a common type of artificial neural network \parencite{herrmann2025,aggarwal2018}. In a nutshell, FFNNs consist of multiple compositions of vector-valued layers, where each layer alternates between vector-valued affine transformations and element-wise applied nonlinear functions. The components in the layers are referred to as nodes or neurons, and the function acting in each node is called the activation function. The number of nodes in a layer is referred to as width and the number of layers is referred to as depth of the NN. We consider scalar-valued FFNNs given as the mapping
\begin{equation}\label{eq:FFNN_0}
\begin{aligned}
\cF:\bbR^{n}\rightarrow \bbR\,,\qquad \bx\mapsto \cF(\bx)\,,
\end{aligned}
\end{equation}
where $\cF$ is recursively defined as
\begin{equation}\label{eq:FFNN_1}
\begin{aligned}
{\mathrlap{\bx^{(h)}}\phantom{\bx^{(h)}}}&=\sigma^{(h)}\big(\bW^{(h)}\cdot\bx^{(h-1)}+\bb^{(h)}\big)&&\in\bbR^{n_h}\,,\quad h=1,\,\dotsc H \,,
\\
{\mathrlap{\cF}\phantom{\bx^{(h)}}}&=\bw^{(H+1)}\cdot\bx^{(H)}&&\in\bbR\,,
\end{aligned}
\end{equation}
with $\bx^{(0)}=\bx$ and $n_0=n$. The layers from 1 to $H$ are commonly referred to as hidden layers, while the layer $H+1$ is referred to as output layer.
The weight matrices $\bW^{(h)}\in\bbR^{n_{h}\times n_{h-1}},\bw^{(H+1)}\in\bbR^{n_{H}}$ and the bias vectors $\bb^{(h)}\in\bbR^{n_h}$ form the set of parameters $\boldsymbol{\mathcal{P}}$ that are optimized to fit the NN to a given dataset. The component-wise applied activation function is denoted as $\sigma^{(h)}$. Note that we employ a linear activation function and no bias in the output layer. As originally proposed by \parencite{Amos2017}, the simple structure and recursive definition of \cref{eq:FFNN_1} makes FFNNs a very natural choice for the construction of monotonic and convex functions. We consider the following special FFNN architectures:
\begin{itemize}
\setlength\itemsep{0.2em}
\item \textbf{Input-convex neural network (ICNN)}: $\cF$ is called an ICNN if it is convex in $\bx$ \parencite{Amos2017}. This can be ensured by the following conditions: (1) The activation function in the first layer $\sigma^{(1)}$ is convex and the activation functions in every subsequent layer $\sigma^{(h)}$ for $h>1$ are convex and monotonic. (2) The weights in every layer except the first one must be non-negative, i.e., $(\bw^{(H+1)})_{i},\,(\bW^{(h)})_{ij}\geq 0$ for $h>1$.
\item \textbf{Convex-monotonic neural network (CMNN)}: $\cF$ is called a CMNN if it is convex and monotonic in $\bx$. This can be ensured by the following conditions: (1) All activation functions $\sigma^{(h)}$ are convex and monotonic. (2) All weights are non-negative.
\end{itemize}
Due to their linearity, for both ICNNs and CMNNs, the biases can be chosen arbitrarily. For an explicit proof of convexity and monotonicity, the reader is referred to \parencite[Appendix~A]{klein2022a}. 

\subsection{Polyconvex PANN model based on invariants constructed via structural tensors}\label{sec:PANN_comp}

In hyperelasticity, a large share of classical constitutive models is formulated in terms of invariants \parencite{Steinmann_Hossain_Possart_2012,Ricker_Wriggers_2023,Schroeder2003}. In this way, the models fulfill objectivity and material symmetry by construction (cf.~\cref{eq:obj,eq:symm}). While classical constitutive models are based on a sound mechanical basis, in many cases, their flexibility and thus applicability to represent complex material behaviors remains limited. To address this lack of flexibility, constitutive models based on PANNs can be employed.

\subsubsection{Neural network potential}\label{sec:NN_pot}

Arguably, the neural network is one of the most crucial components of a PANN material model. In our framework, FFNNs are employed to represent hyperelastic potentials, meaning we consider FFNNs that take the invariants $\bcI^{\square}\in\bbR^m$ as input, and have a scalar-valued output that is used as a part of the hyperelastic potential. This NN potential provides the constitutive model with flexibility \parencite{Hornik1991}. 
In general, FFNNs with an arbitrary amount of nodes and layers can be employed. However, very small FFNNs have proven to be sufficiently flexible for representing energy potentials across various scenarios \parencite{klein2024a,Linka2020,kalina2025}. Thus, without loss of generality, we start by considering an exemplary single-layered FFNN
\begin{equation}\label{eq:pot_NN}
    \psi^{\text{NN}}_{\square}=\bw^{(2)}\cdot\SP\!\left(\bW^{(1)}\cdot\bcI^{\square}+\bb^{(1)}\right)\,,
\end{equation}
where $\psi^{\text{NN}}_{\square}$ is the NN potential. Furthermore, $\square$ denotes the considered symmetry group for which the model is formulated. In~\cref{eq:pot_NN}, the convex and monotonic {softplus} activation function $\SP=\log(1+e^x)$ is applied component-wise. The trainable parameters of $\psi^{\text{NN}}_{\square}$ are denoted as $\boldsymbol{\mathcal{P}}=\{\bw^{(2)},\,\bW^{(1)},\,\bb^{(1)}\}$ and consist of the weights $\bw^{(2)}\in\bbR^n,\,\bW^{(1)}\in\bbR^{n\times m}$ and the bias $\bb^{(1)}\in\bbR^n$. The choice of invariants $\bcI^{\square}$  depends on the symmetry group being examined and whether the model is constructed as polyconvex, where different sets of invariants are introduced in \cref{sec:inv_bases}. For this PANN modeling approach, we only employ invariants based on structural tensors and isotropic extension. Lastly, for a single-layered NN architecture, the hyperparameter $n$ controls the size of the weight and bias matrices. As $n$ increases, both the number of trainable parameters and the model's flexibility grow. Note that to this point, the NN potential only fulfills thermodynamic consistency, objectivity, and the material symmetry condition by construction, and that~\cref{eq:pot_NN} could be replaced by multi-layered FFNN architectures introduced in~\cref{sec:CMNNs}. For the fully physics-augmented neural network, the NN potential is complemented by additional growth and normalisation terms, both introduced in~\cref{sec:PANN_comp_growth_norm}.

\subsubsection{Polyconvexity}\label{sec:comp_PANN_pc}

Hyperelastic potentials based on FFNNs can be formulated to be polyconvex, which was originally proposed by~\textcite{klein2022a}. Recall the theoretical foundations laid so far: 
\begin{itemize}
\setlength\itemsep{0.2em}
\item Sufficient conditions for an invariant-based potential to be polyconvex are that (i) the potential is convex and monotonic in the invariants and (ii) only polyconvex invariants are employed (cf.~\cref{sec:inv_conv}).
\item A FFNN can be constructed so that its output is convex and monotonic in its input. We call this architecture CMNN. This is achieved by a special choice of FFNN architecture, i.e., by using suitable activation functions and restricting the weights to be non-negative (cf.~\cref{sec:CMNNs}).
\end{itemize}
Thus, \textbf{by employing a CMNN that takes polyconvex invariants as inputs, a polyconvex potential can be constructed.}
For the exemplary NN potential in~\cref{eq:pot_NN}, this translates to the following conditions. The softplus activation function is convex and monotonic. Thus, when all weights are non-negative, the potential in~\cref{eq:pot_NN} becomes a CMNN (cf.~\cref{sec:CMNNs}). Furthermore, when~\cref{eq:pot_NN} includes only polyconvex invariants, the overall NN potential is polyconvex. Sets of polyconvex invariants for different symmetry groups are introduced in~\cref{sec:inv_bases}. Note that for non-polyconvex modeling, also non-polyconvex invariants can be employed and the NN potential is not subject to convexity or monotonicity constraints.

\begin{remark}
Setting aside the nomenclature of machine learning, one could consider \cref{eq:pot_NN} as a classical constitutive model that uses linear transformations and the nonlinear softplus function as an ansatz for hyperelastic potentials, with the weights $\bW^{(1)},\,\bw^{(2)}$ and the bias $\bb^{(1)}$ serving as material parameters. One key benefit lies in the immediate increase in flexibility for the NN potential, achievable by increasing the size of the weight and biases matrices, i.e., using more nodes in the hidden layer. Furthermore, multiple hidden layers can be employed. Thereby, the NN approach in~\cref{sec:NN_pot} has demonstrated high stability during calibration, even when involving a large number of parameters \parencite{kalina2024a,gaertner2021,Linka2020}. On this matter, NN potentials that are restricted by suitable constitutive constraints are known to be very robust against phenomena such as overfitting, which are known to occur for other NN applications \parencite{herrmann2025}. This is due to the fact that the incorporation of physical principles provides the model with a pronounced mathematical structure \parencite{linden2023,Klein2026a}.

Furthermore, the NN potential in~\cref{eq:pot_NN} enables strong interdependencies among the different invariants. This is a particular distinction from conventional polyconvex models. There, the potential is usually additively decomposed in the invariants \parencite{Ebbing2010}. The reason is that the multiplication of polyconvex invariants generally does not preserve polyconvexity \parencite{Schroeder2003}. In contrast, NN-based potentials enable a pronounced coupling between the invariants while simultaneously maintaining polyconvexity. Note that the discussion from above equally transfers to multi-layered NN architectures.
\end{remark}

\subsubsection{Growth and normalization conditions}\label{sec:PANN_comp_growth_norm} 
 
To fulfil \emph{all} common constitutive conditions of hyperelasticity, the NN potential is complemented by additional growth and normalization terms. This results in the overall PANN model given by
\begin{equation}\label{eq:PANN}
    \psi^{\text{PANN}}_{\square}(\bcI^{\square})=\psi^{\text{NN}}_{\square}(\bcI^{\square})+\psi^{\text{stress}}_{\square}(\bcI^{\square})+\psi^{\text{growth}}(J)\,.
\end{equation}
Here, $\psi^{\text{stress}}_{\square}$ and $\psi^{\text{growth}}$ denote the normalization and growth terms, which ensure fulfillment of the stress normalization condition and the volumetric growth condition (cf.~\cref{eq:stress_norm_comp,eq:growth_mech}). The overall flow and structure of the PANN model is visualized in~\cref{fig:PANNs:I}. From \cref{eq:PANN}, the first Piola-Kirchhoff stress of the PANN model $ \bP^{\text{PANN}}_{\square}$ follows as (cf.~\cref{eq:PK1})
\begin{equation}\label{eq:PANN_stress}
 \bP^{\text{PANN}}_{\square}=  d_{\bF} \psi^{\text{PANN}}_{\square}=d_{\bF}\psi^{\text{NN}}_{\square}+d_{\bF}\psi^{\text{stress}}_{\square}+d_{\bF}\psi^{\text{growth}}\,.
\end{equation}
Following \textcite{klein2022a}, the growth term $\psi^\text{growth}$ is chosen as 
\begin{equation}\label{eq:PANN_growth}
    \psi^\text{growth}(J):=\alpha\left(J+J^{-1}-2\right)^2\,,\quad \alpha>0\,,
\end{equation}
which is convex in $J$ and thus polyconvex \parencite{Hartmann2003}. Since $\psi^\text{growth}$ only depends on the isotropic invariant $J$, it is compatible with every symmetry group. Note that $\psi^\text{growth}$ also happens to fulfill the volumetric coercivity condition in~\cref{eq:growth_mech_2}.

Furthermore, we employ the stress normalization approach proposed by \textcite{linden2023}, which is compatible with all discussed constitutive conditions of compressible hyperelasticity. Based on invariants, it preserves objectivity and material symmetry of the PANN model. Furthermore, polyconvexity of the stress normalization terms is ensured by carefully choosing their functional form. In~\textcite{linden2023}, polyconvex normalization terms are introduced for the isotropic $\cK$ and the transversely isotropic $\cD_{\infty}$ groups. We recall the normalization approach for transverse isotropy and exemplary extend the approach for cubic anisotropy. Note that corresponding terms could be formulated for the other anisotropies discussed in~\cref{chap:PANN_inv}.

\paragraph{Transversely isotropic normalization term}

In \textcite{linden2023}, the transversely isotropic stress normalization term
\begin{equation}\label{eq:stress_norm_comp_TI}
    \psi^{\text{stress}}_{\text{ti}}(J,\,\IoneTI,\,\ItwoTI)=-2(\fo+\fq)\,J+ \fp\,\IoneTI+ \fq\,\ItwoTI\,,
\end{equation}
is proposed, with the constants
\begin{equation}\label{eq:stress_norm_TI_coeff}
\begin{aligned}
    \fo&=\Bigg[
\partial_{\IoneISO}\psi^{\text{NN}}_{\text{ti}}+2\partial_{\ItwoISO}\psi^{\text{NN}}_{\text{ti}}+\frac{1}{2}\left(\partial_{\IthreeISO}\psi^{\text{NN}}_{\text{ti}}-\partial_{\IfourISO}\psi^{\text{NN}}_{\text{ti}}\right)
\\
&\phantom{=2\Big[}
+\partial_{\ItwoTI}\psi^{\text{NN}}_{\text{ti}}+\partial_{\IthreeTI}\psi^{\text{NN}}_{\text{ti}}+\partial_{\IfourTI}\psi^{\text{NN}}_{\text{ti}}
\Bigg]\Bigg\rvert_{\bF=\bI}\in\bbR\,,
    \end{aligned}
\end{equation}
\begin{equation}\label{eq:stress_norm_TI_coeff_2}
\fp={\RL(-x)}\in\bbR^+\,,\qquad
 \fq={\RL(x)}\in\bbR^+\,,
\end{equation}
and
\begin{equation}
    x=\Bigg[\partial_{\IoneTI}\psi^{\text{NN}}_{\text{ti}} - \partial_{\IthreeTI}\psi^{\text{NN}}_{\text{ti}} - \partial_{\ItwoTI}\psi^{\text{NN}}_{\text{ti}} + \partial_{\IfourTI}\psi^{\text{NN}}_{\text{ti}}    \Bigg]\Bigg\rvert_{\bF=\bI}
    \in\bbR\,,
\end{equation}
where $\RL(x)=\max(x,\,0)$. This stress normalization term is polyconvex as it is linear in $J$ and the constants $\fp,\,\fq$ in~\cref{eq:stress_norm_comp_TI} are non-negative. Note that~\cref{eq:stress_norm_TI_coeff} is a slightly adapted version of the formulation in \textcite{linden2023}, as we consider the additional transversely isotropic invariants in~\cref{eq:invs_ti_2}. After calibration, $\fp,\,\fq$ remain constant, and the non-differentiability of the $\RL$ function in~\cref{eq:stress_norm_TI_coeff_2} poses no problem.


\paragraph{Cubic normalization term}

Based on the method proposed by \parencite{linden2023}, we introduce the new cubic stress normalization term
\begin{equation}\label{eq:stress_norm_comp_CUB}
    \psi^{\text{stress}}_{\text{cub}}(J,\,\IoneCUB,\,\ItwoCUB)=-2\left(\fr+ 3\ft\right)\,J+ \frac{1}{2}\fs\,\IoneCUB+ \frac{1}{2}\ft\,\IfourCUB\,,
\end{equation}
with the constants
\begin{equation}\label{eq:stress_norm_CUB_coeff}
\begin{aligned}
    \fr&=\Bigg[
\partial_{\IoneISO}\psi^{\text{NN}}_{\text{cub}}+2\partial_{\ItwoISO}\psi^{\text{NN}}_{\text{cub}}+\frac{1}{2}\left(\partial_{\IthreeISO}\psi^{\text{NN}}_{\text{cub}}-\partial_{\IfourISO}\psi^{\text{NN}}_{\text{cub}}\right)
\\
&\phantom{=2\Bigg[}
+6\partial_{\IfourCUB}\psi^{\text{NN}}_{\text{cub}}+12\left(\partial_{\IthreeCUB}\psi^{\text{NN}}_{\text{cub}}+ \partial_{\IfiveCUB}\psi^{\text{NN}}_{\text{cub}}\right)+9\partial_{\IsixCUB}\psi^{\text{NN}}_{\text{cub}}
\Bigg]\Bigg\rvert_{\bF=\bI}\in\bbR\,,
    \end{aligned}
\end{equation}
\begin{equation}\label{eq:stress_norm_CUB_coeff_2}
 \fs={\RL(-x)}\in\bbR^+\,,\qquad
  \ft={\RL(x)}\in\bbR^+\,,
\end{equation}
and
\begin{equation}
\begin{aligned}
x&=\Bigg[2\left(\partial_{\IoneCUB}\psi^{\text{NN}}_{\text{cub}}-\partial_{\IfourCUB}\psi^{\text{NN}}_{\text{cub}}
    \right)    
    +3\left(\partial_{\ItwoCUB}\psi^{\text{NN}}_{\text{cub}}-\partial_{\IsixCUB}\psi^{\text{NN}}_{\text{cub}} \right)
+4 \partial_{\IfiveCUB}\psi^{\text{NN}}_{\text{cub}}\Bigg]\Bigg\rvert_{\bF=\bI}
    \in\bbR\,.
\end{aligned}
\end{equation}
This stress normalization term is polyconvex as it is linear in $J$ and the constants $\fs,\,\ft$ in~\cref{eq:stress_norm_comp_CUB} are non-negative. Note that~\cref{eq:stress_norm_comp_CUB} is only one possible choice for the construction of polyconvex stress normalization terms for cubic anisotropy, and corresponding approaches could be formulated making use of different cubic invariants. However, in preliminary studies, the choice of the cubic stress normalization term had no influence on the performance of the PANN model. 

 \begin{figure}
         \centering
    \begin{subfigure}[t]{.49\textwidth}
        \centering
        \includegraphics[width=\textwidth]{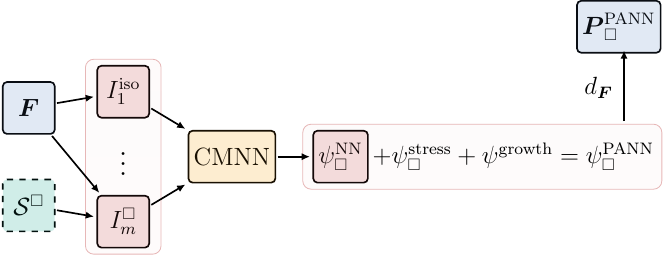}
         \caption{PANN--$\cI$ model using structural tensor-based invariants.}
         \label{fig:PANNs:I}
    \end{subfigure}%
    \hfill
    \begin{subfigure}[t]{.49\textwidth}
        \centering
    \includegraphics[width=\textwidth]{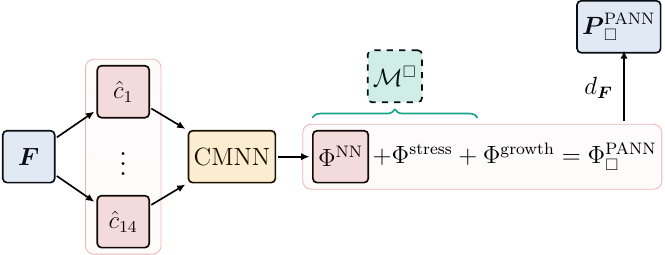}
         \caption{PANN--$\cC$ model based on triclinic invariants and group symmetrization.}     
         \label{fig:PANNs:C}
    \end{subfigure}%
    \caption{Illustration of the different polyconvex PANN constitutive models. The models include material symmetry by either structural tensors $\cS^{\square}$ or a group symmetrization operation $\cM^{\square}$, where $\square$ denotes the considered symmetry group.}
    \label{fig:PANNs}
 \end{figure}
 
\subsection{Polyconvex PANN model based on triclinic invariants and group symmetrization}\label{sec:PANN_C}

Constitutive models are often formulated in terms of structural tensor-based invariants (cf.~\cref{sec:PANN_comp}). However, employing these invariants in polyconvex constitutive modeling can significantly limit the flexibility of such models. In particular, formulating polyconvex integrity or functional bases is highly nontrivial and remains an open challenge for most symmetry groups (cf.~\cref{rem:complete_pc_bases}). Thus, a potential based on these invariants cannot represent every function associated with the considered anisotropy, and such models may fail to represent certain material behaviors, even if, in principle, a suitable polyconvex potential might exist.
Moreover, to anticipate~\cref{chap:application_micro}, even if a polyconvex integrity or functional basis is available for the considered symmetry group, in practice, it can be advantageous to employ an alternative polyconvex constitutive model formulation. 
Overall, this motivates the development of an alternative polyconvex PANN approach based on triclinic invariants and group symmetrization (cf.~\cref{sec:group_symm}).

While in~\textcite{klein2022a}, a polyconvex PANN model based on triclinic invariants and group symmetrization is introduced, this approach employs invariants of the deformation gradient $\bF$ instead of $\bC$, and is thus not objective by construction. Clearly, a model formulation based on triclinic invariants of the right Cauchy-Green tensor $\bC$ would be more appealing, as this deformation measure is objective. While several triclinic PANN constitutive models have been proposed \parencite{Zheng_Kochmann_Kumar_2024,asad2022,Fernandez2020}, to the best of our knowledge, none of them is polyconvex. In the following, a polyconvex PANN constitutive model based triclinic invariants of the right Cauchy-Green tensor $\bC$ and group symmetrization is introduced.

\subsubsection{Neural network potential and polyconvexity}\label{sec:PANN_C_conv}

Let us consider the potential
\begin{equation}\label{eq:pot_c_recall}
  \psi:\bbR^{14}\rightarrow\bbR\,, \quad   \bcC\mapsto \psi ( \bcC)\,,
\end{equation}
with $W(\bF)= \psi(\bcC)$ that is formulated in terms of the $14$-tuple 
\begin{equation}\label{eq:K_basis_recall}
\begin{aligned}
  \bcC=\big(&
  {\mathrlap{C_{11},}\phantom{G_{11},}}\,
  {\mathrlap{C_{22},}\phantom{G_{22},}}\,
  {\mathrlap{C_{33},}\phantom{G_{33},}}\,
  {\mathrlap{(C_{11}+C_{22}+2C_{12}) /{2},}\phantom{(G_{11}+G_{22}+2G_{12})/{2},}}\,
  {\mathrlap{(C_{11}+C_{33}+2C_{13})/{2},}\phantom{(G_{11}+G_{33}+2G_{13})/{2},}}\,
  {\mathrlap{(C_{22}+C_{33}+2C_{23})/{2},}\phantom{(G_{22}+G_{33}+2G_{23})/{2},}}\,
        \\
&{\mathrlap{G_{11},}\phantom{G_{11},}}\,
  {\mathrlap{G_{22},}\phantom{G_{22},}}\,
  {\mathrlap{G_{33},}\phantom{G_{33},}}\,
  {\mathrlap{(G_{11}+G_{22}+2G_{12})/{2},}\phantom{(G_{11}+G_{22}+2G_{12})/{2},}}\,
  {\mathrlap{(G_{11}+G_{33}+2G_{13})/{2},}\phantom{(G_{11}+G_{33}+2G_{13})/{2},}}\,
  {\mathrlap{(G_{22}+G_{33}+2G_{23})/{2},}\phantom{(G_{22}+G_{33}+2G_{23})/{2},}}\,
   \\
    &J,\,-J\big)\in\mathbb{R}^{14}\,.
    \end{aligned}
\end{equation}
We derived the tuple $\bcC$ in~\cref{sec:invs_tri_coord} and now recall its properties. The first row of~\cref{eq:K_basis_recall} is a polyconvex integrity basis for triclinic anisotropy. We extended this polyconvex integrity basis by polyconvex triclinic invariants of $\bG$, as well as $J$, aiming to improve the expressiveness of the potential in~\cref{eq:pot_c_recall}. Including invariants of $\bC$, $\bG$, and $J$, every element of~\cref{eq:K_basis_recall} is objective. Thus, the potential in~\cref{eq:pot_c_recall} is also objective. Since $\bcC$ is an integrity basis for triclinic anisotropy and triclinic anisotropy is the most general form of anisotropy, the potential in~\cref{eq:pot_c_recall} can represent arbitrary (an-)isotropic material behavior.

All elements of $\bcC$ are polyconvex, and consequently, polyconvexity of $\psi$ can be ensured by formulating it as a convex and monotonic function (cf.~\cref{sec:inv_conv}). Thus, a polyconvex potential can be constructed by representing the potential $ \psi$ by a CMNN:
\begin{equation}\label{eq:pot_c_NN}
\psi ( \bcC)=\psi^{\text{NN}}(\bcC)\,.
\end{equation}

\begin{remark}[\textbf{Convexity in $\bC$ and polyconvexity}]\label{rem:conv_C}
Some works employ a similar convexity condition, where the hyperelastic potential is convex in the non-polyconvex triclinic integrity basis (cf.~\cref{eq:invs_ti_unr}), i.e., in the six independent coordinates of $\bC$ \parencite{Zheng_Kochmann_Kumar_2024,asad2022}. This does not ensure polyconvexity, which becomes evident from~\cref{eq:hess_inv}. While for such models, the constitutive type term is p.s.d., the geometric type term can take negative eigenvalues as (i) the potential is not a monotonically increasing function and (ii) not all coordinates of $\bC$ are polyconvex. Still, convexity of the potential in $\bC$ can be seen as a \emph{relaxed} polyconvexity condition. Importantly, convexity of the potential in $\bC$ does not imply ellipticity \parencite{gao2017}.
\end{remark}

\subsubsection{Normalization, material symmetry, and growth conditions}\label{sec:PANN_C_other}
 
To this point, the NN potential $\psi^{\text{NN}}$ introduced in~\cref{eq:pot_c_NN} is thermodynamically consistent, objective, and polyconvex, but does not fulfill the remaining conditions of hyperelasticity by construction. To ensure stress normalization, the potential $\psi^{\text{NN}}$ is complemented by the normalization term $\psi^{\text{stress}}$, i.e.,
\begin{equation}
\widetilde{\psi}^{\text{NN}}( \bcC)=\psi^{\text{NN}}( \bcC)+\psi^{\text{stress}}( \bcC)\,.
\end{equation}
For the construction of the stress normalization term, we employ the fact that the projections contained in $\bcC$ (cf.~\cref{eq:C_comp}) have the same structure as the transversely isotropic invariants introduced in~\cref{sec:invs_TI}. Thus, the polyconvex invariant-based stress normalization terms proposed by \textcite{linden2023} for transverse isotropy can immediately be employed (cf.~\cref{sec:PANN_comp_growth_norm}). For each of the elements in $\bcC$, one such normalization term can be formulated, and the overall stress normalization term $\psi^{\text{stress}}$ is their sum. Then, by complementing the potential by a growth term $\psi^\text{growth}(J)$, the volumetric growth condition can be fulfilled. Here, we employ the growth term introduced in~\cref{eq:PANN_growth}.
 
In case of a finite symmetry group $\cG$, we employ a group symmetrization \parencite{Fernandez2020}, and the overall PANN constitutive model is given by
\begin{equation}\label{eq:PANN_C}
\psi^{\text{PANN}}_{\square}=\frac{1}{\lvert\mathcal{G}\rvert}\sum_{\bQ\in\mathcal{G}}\Big[\widetilde{\psi}^{\text{NN}}\big( \bcC(\bQ\star\bC)\big)\Big]+ \psi^\text{growth}(J)\,,
\end{equation}
where $\lvert\cG\rvert$ is the number of elements in the symmetry group. We provide further discussions on the group symmetrization in~\cref{sec:group_symm}. To recall its most important properties, for finite symmetry groups such as the cubic one, group symmetrization ensures exact fulfillment of the material symmetry condition. Furthermore, it preserves polyconvexity of a function. Thus, for finite symmetry groups, the PANN model introduced in~\cref{eq:PANN_C} fulfills all relevant constitutive conditions of hyperelasticity by construction. 
The overall flow and structure of the model is visualized in~\cref{fig:PANNs:C}. From \cref{eq:PANN_C}, the first Piola-Kirchhoff stress of the PANN model $ \bP^{\text{PANN}}_{\square}$ follows as (cf.~\cref{eq:PK1})
\begin{equation}\label{eq:PANN_stress_C}
\begin{aligned}
 \bP^{\text{PANN}}_{\square}= & d_{\bF}\psi^{\text{PANN}}_{\square}=\frac{1}{\lvert\mathcal{G}\rvert}\sum_{\bQ\in\mathcal{G}}\Big[
 d_{\bF}\widetilde{\psi}^{\text{NN}}\big( \bcC(\bQ\star\bC)\big)\Big]+  d_{\bF}\psi^\text{growth}(J)\,.
 \end{aligned}
 \end{equation}
Notably, the multiple evaluations of $\widetilde{\psi}^{\text{NN}}$ required for the group symmetrisation increase the computational cost of the approach compared to the PANN models using structural tensor-based invariants introduced in~\cref{chap:PANN_inv}. The increase in computational effort can be partly mitigated with an efficient batch-wise evaluation of the NN potential \parencite{aggarwal2018}. Alternatively, instead of fulfilling material symmetry in an exact fashion through group symmetrization, it can be approximately fulfilled by augmenting the calibration data with the known symmetry relations \parencite{gaertner2021}. Furthermore, this model could also be employed without incorporating material symmetry at all, which could be beneficial in scenarios where no information about the underlying symmetry of the material is available.

\section[Application to homogenization data of microstructured materials]{Application to homogenization data of cubic metamaterials}\label{chap:application_micro}

In this section, we apply different polyconvex PANN constitutive models to synthetic homogenization data of cubic metamaterials. In~\cref{sec:data_models}, we provide information on the considered datasets, PANN models, and calibration strategy. This is followed by an evaluation of the calibrated PANN models in~\cref{sec:evaluation}. 

\subsection{Data and PANN model preparation}\label{sec:data_models}

We now provide technical details on the considered datasets and PANN constitutive models.

\subsubsection{Considered homogenization data}\label{sec:data_details}

\begin{figure}[t]
	\centering
    \begin{subfigure}[t]{.3\textwidth}
        \centering
         \includegraphics[height=2.8cm]{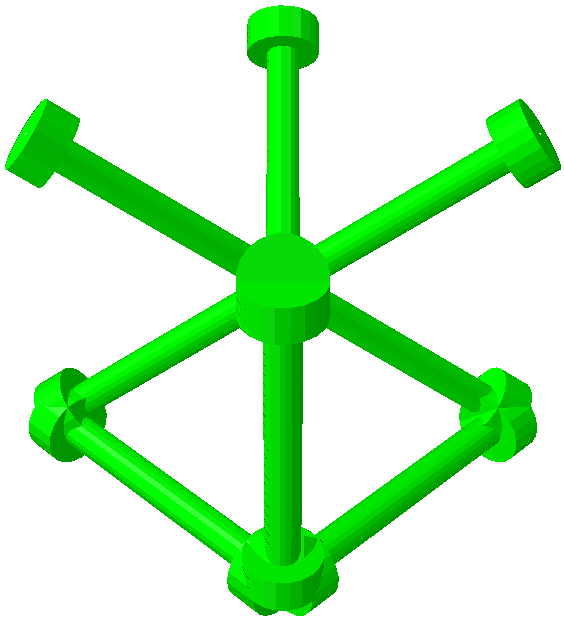}
         \caption{BCC lattice metamaterial.\\ Image from \parencite{Fernandez2020}.}
         \label{fig:microstructures:BCC}
    \end{subfigure}%
    ~~~~~
    \begin{subfigure}[t]{.35\textwidth}
        \centering
         \includegraphics[height=2.8cm]{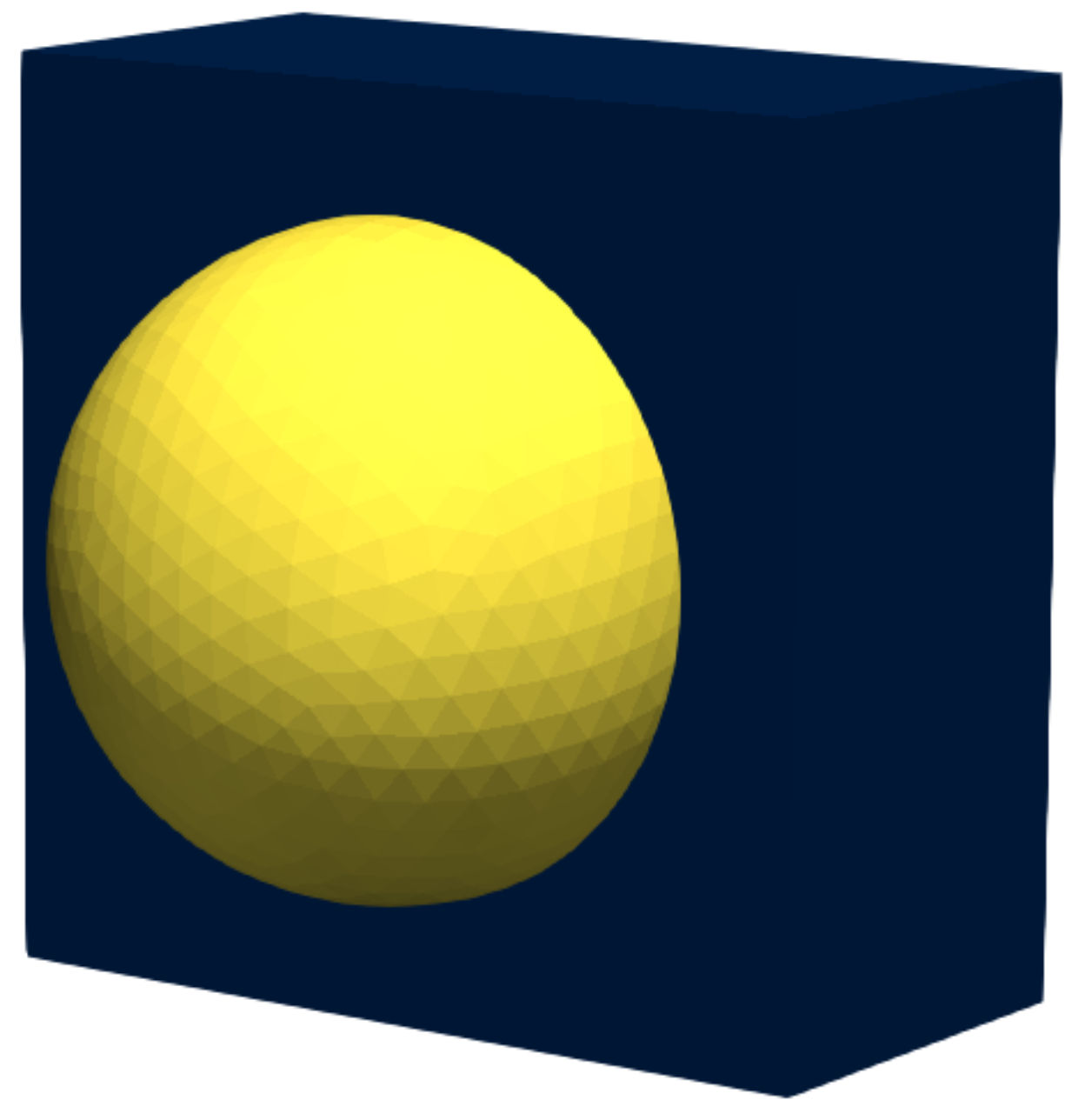}
         \caption{Matrix with spherical inclusion (SPH).\\ Image from \parencite{kalina2025}.}
         \label{fig:microstructures:SPH}
    \end{subfigure}%
	\caption{Microstructures considered in this work.
	}
	\label{fig:microstructures}
\end{figure}

We consider synthetic homogenization data of two different materials with heterogeneous microstructures with cubic material symmetry, which are visualized in~\cref{fig:microstructures}. In both cases, the effective material behavior can be described with the cubic $\cO$ group \parencite{Fernandez2020,kalina2025}. The considered metamaterials are briefly introduced in the following. 

\paragraph{Body-centered cubic lattice metamaterial (BCC)} A BCC structure consists of  body-centered beams with additional beams along all edges, see \cref{fig:microstructures:BCC}. The smallest repeating unit, referred to as the BCC unit cell, contains beams only at three edges due to the periodicity of the structure. The effective material behavior has a cubic anisotropy. We consider data from \textcite{Fernandez2020}, where a BCC cell is numerically homogenized. In this study, the beams are modeled as Timoshenko beam elements composed of a linear elastic, isotropic material. Nevertheless, the effective material behavior of the BCC unit cell is highly nonlinear due to geometrically nonlinear deformations.

For the data generation, load scenarios inspired by physical experiments are applied, such as uniaxial deformation and shear. To evaluate the model performance under general deformation states, more complex load scenarios such as a mixed shear-tension test (``test 3'' in \textcite{Fernandez2020}) are also considered. We only consider a part of the overall dataset provided by \textcite{Fernandez2020}, in particular, the uniaxial load case, the shear case, and the mixed shear-tension test. The different load paths have 175, 101, and 172 datapoints, respectively. We employ uniaxial tension and shear for calibration and the mixed case for testing.
In the invariant space, the considered load scenarios are within $\IoneISO\in[2.5,3.3]$ and $\ItwoISO\in[1.9,3.3]$, thus covering a finite deformation regime. From the homogenization, stress--strain datasets of the form
\begin{equation}\label{eq:dataset_mech_2}
        \cD=\Big\{\left({^1\bF},\, {^1\bP}\right),\dotsc,
        \left({^m\bF},\,{^m\bP}\right)\Big\}\,,
\end{equation}
are obtained. 

\paragraph{Matrix with a single spherical inclusion (SPH)} We consider a soft matrix material with a stiff spherical inclusion, where the inclusion is centered and has a volume fraction of 20\%, see \cref{fig:microstructures:SPH}. We consider data from \textcite{kalina2025}, where a SPH is numerically homogenized. Therein, both material phases are described by the isotropic Neo-Hooke model 
\begin{equation}\label{eq:micro_behavior_FRE}
\begin{aligned}
\psi_{\circ}=\frac{1}{2}\Big[\mu_{\circ}(\IoneISO-\log\IthreeISO)+\frac{\lambda_{\circ}}{2}\big(\IthreeISO-\log\IthreeISO\big)\Big]\,,\\ 
\text{with}\quad
\mu_{\circ}=\frac{E_{\circ}}{2(1+\nu_{\circ})}\,,
\quad\lambda_{\circ}=\frac{E_{\circ}\nu_{\circ}}{(1+\nu_{\circ})(1-2\nu_{\circ})}\,,\quad \circ\in\{\text{matrix},\text{incl}\}\,.
\end{aligned}
\end{equation}
For the matrix material, the parameters $E_{\text{matrix}}=1\text{ MPa}$ and $\nu_{\text{matrix}}=0.4$ are employed. For the inclusion, the parameters $E_{\text{incl}}=1000\text{ MPa}$ and $\nu_{\text{incl}}=0.3$ are employed. 

For the data generation, the deformation gradient $\bF$ is sampled in a wide range of physically admissible deformation modes using a method described in \textcite[Appendix~D]{kalina2025}. In the invariant space, the sampled deformation states are within $\IoneISO\in[2.99,\,4.52]$ and $ \ItwoISO\in[2.9,\,4.9]$, thus covering a finite deformation regime. For this material, we again consider a stress--strain dataset of the form in~\cref{eq:dataset_mech_2}. We employ $800$ datapoints for calibration and roughly $3000$ datapoints for testing. 

\paragraph{Ellipticity of the homogenization data}
Polyconvex constitutive models can represent only elliptic material behavior. Thus, we want to consider the effective material behavior of the metamaterials only within their elliptic regime. It is important to note that even if the microstructure of the material is described by polyconvex (and thus elliptic) material models, the homogenized behavior might still be non-elliptic \parencite{Rudykh_deBotton_2012,KBertoldi_11_01}. To assess ellipticity of the homogenization dataset, the effective tangent $\bbA=d_{\bF}\bP$ is required. When $\bbA$ is available, the ellipticity condition~\cref{eq:ellip_mech_comp} can be numerically evaluated by checking p.s.d. of the acoustic tensor $Q_{ij}=\bbA_{i\alpha j\beta}a_{\alpha}a_{\beta}$, cf.\ \parencite[Sec.~5]{Schroeder_Neff_Balzani_2005}. While for the BCC material, no values for the tangent $\bbA$ are provided in the dataset, the excellent performance of a polyconvex PANN model applied to this dataset in \textcite{klein2022a}, along with the numerical results presented in this work, suggest that the homogenized BCC cell behaves elliptic within the considered deformation range. For the SPH material, the effective tangent $\bbA$ is available, and less than 1\% of the datapoints are non-elliptic and excluded in the further investigations.

\subsubsection{Considered PANN models}\label{sec:micro_PANNs}

For the metamaterials considered in this work, the preferred directions are known and the structural tensors and elements of the cubic symmetry group $\cO$ are chosen accordingly. The hyperparameters are chosen based on preliminary studies and in accordance to literature standards \parencite{kalina2025,klein2022b,klein2022a,Fernandez2020}. We employ the following PANN models, which were introduced in detail in~\cref{sec:PANNs}:

\paragraph{PANN models using structural tensor-based invariants} We employ both polyconvex and non-polyconvex PANN models for cubic anisotropy (cf.~\cref{sec:PANN_comp}). For the NN potentials, FFNNs with a two hidden layer and 16 nodes are employed ($H=2,\,n_1=16$). In the following, such models that are polyconvex are denoted by PANN--$\cI$ and non-polyconvex models by PANN--$\cI^*$. Following~\cref{sec:inv_bases}, the invariants are chosen as
\begin{equation}\label{eq:cub_invar_set}
\bcI^{\text{cub}} = 
\left\{\begin{aligned}
&\big(\IoneISO,\,\ItwoISO,\,\IthreeISO,\,\JoneCUB,\,\JtwoCUB,\,\JthreeCUB,\,\JfourCUB,\,\JfiveCUB,\,\JsixCUB\big)&&\in\bbR^{9}\,, &&& (\text{PANN--}\cI^*)\\
&\big(\IoneISO,\,\ItwoISO,\,\sqrt{\IthreeISO},\,\IfourISO,\,\IoneCUB,\,\ItwoCUB,\,\IthreeCUB,\,\IfourCUB,\,\IfiveCUB,\,\IsixCUB
\big)&&\in\bbR^{10}\,, &&& (\text{PANN--}\cI)\\
\end{aligned}\right. \,.
\end{equation} 

\paragraph{PANN models based on triclinic invariants and group symmetrization} We employ the polyconvex PANN model introduced in~\cref{sec:PANN_C}, which is denoted by PANN-$\cC$. In addition, we consider a non-polyconvex PANN model that takes the non-polyconvex triclinic integrity basis introduced in~\cref{eq:K_basis_unr} as input for the NN potential. For this model, an ICNN is employed, i.e., its potential is convex in all coordinates of the right Cauchy-Green tensor. As elaborated in~\cref{rem:conv_C}, this model is not polyconvex by construction. To ensure stress normalization, a projection approach is employed \parencite{Fernandez2020}. In the following, this model is denoted by PANN-$\cC^*$. For the NN potentials, FFNNs with three hidden layers and 8 nodes are employed ($H=3,\,n_{1-3}=8$). For these PANN models, the group symmetrization (cf.~\cref{sec:group_symm}) is carried out with the 24 elements of the cubic symmetry group $\cO$ \parencite{spencer1971}. 

\subsubsection{Model calibration}\label{sec:application_micro_calib}

We now introduce details on the model calibration. Recall that the weights and biases of the NNs form the set of parameters $\bcP$ of the PANN model. Overall, the PANN--$\cI/\cI^*/\cC/\cC^*$ models have $464\,/\,448\,/\,272\,/\,\allowbreak 208$ trainable parameters.

\paragraph{Formulation of the loss function} The overall dataset is split into a calibration dataset consisting of $m_{\text{c}}$ datapoints and a test dataset consisting of $m_{\text{t}}=m-m_{\text{c}}$ datapoints. To fit the PANN model parameters $\bcP$ to a given dataset $\cD$, a loss function $\mathfrak{L}$ is minimized. We consider the loss function given as the mean squared error (MSE) of the stress predictions 
\begin{equation}\label{eq:stress_loss_micro_P}
\begin{aligned}
        \mathfrak{L}(\bcP)=\frac{1}{m_{\text{c}}}\sum_{i=1}^{m_{\text{c}}}\frac{1}{9\star}\left\|{^i\bP}-{^i\bP}^{\text{model}}({^i\bF},\,\bcP)\right\|^2\,,
\end{aligned}
\end{equation}
where $\star$ is introduced to make the loss function dimensionless. For the BCC material, we set $\star=\text{hPa}$, while for the SPH material, we set $\star=\text{MPa}$. All PANN models are solely calibrated on first derivatives of the strain energy, even if data for the second derivative is available. The calibration of the PANN models through their gradients is referred to as Sobolev training \parencite{vlassis2021,Czarnecki2017SobolevTF}. Since we focus on the prediction of the derivatives of the energy potentials in this work, we do not consider energy values in the calibration process. This decision is supported by the observation that including energy data in the calibration process does not enhance the prediction quality of the PANN model's derivatives \parencite{klein2022a}. 

\begin{figure}[t!]
\centering
\includegraphics[width=\textwidth]{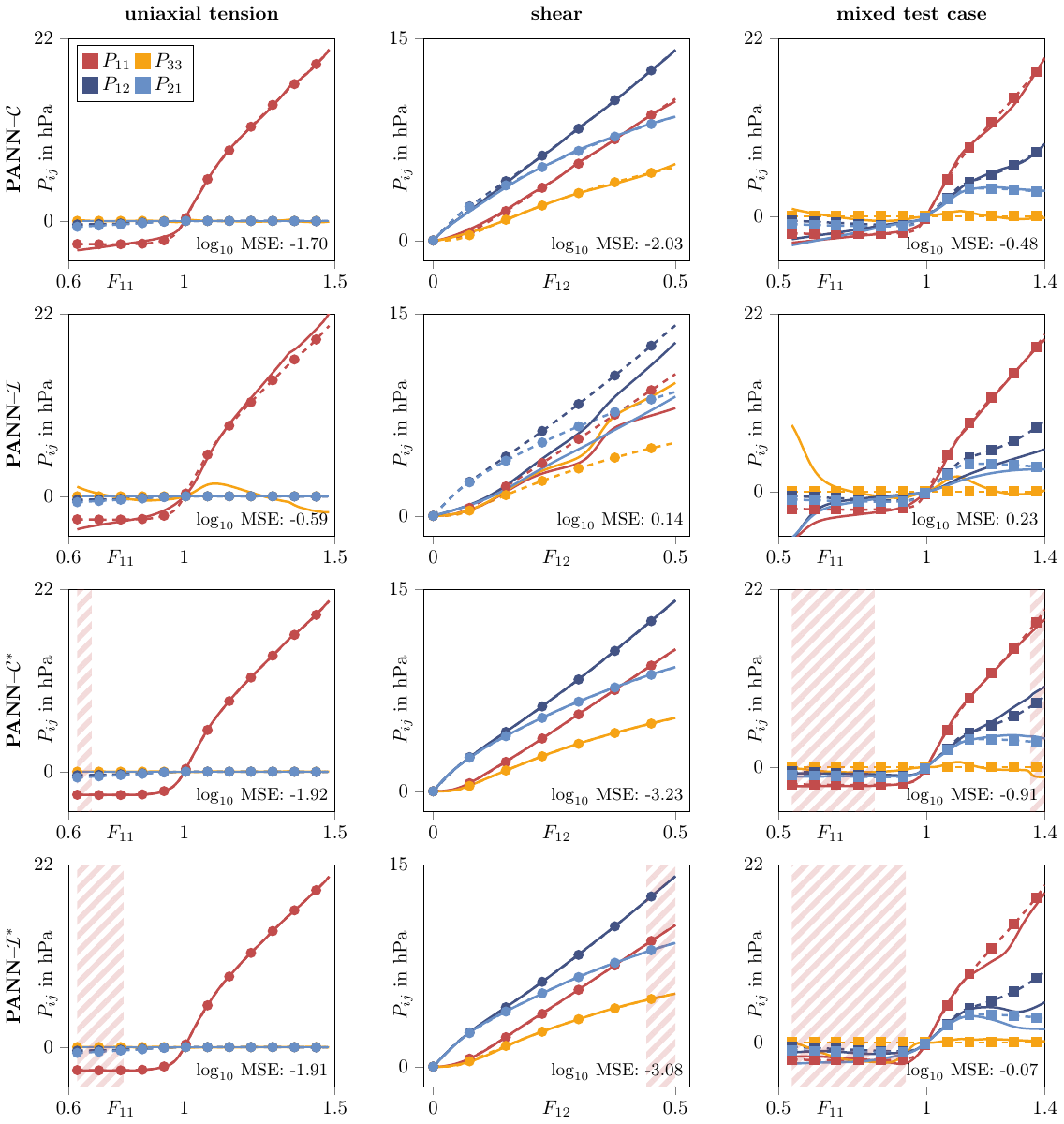}
\caption{Performance of different PANN models for the BCC data. Circles denote the calibration data, squares the test data, and solid lines the model predictions. Shaded areas indicate a loss of ellipticity.
} 
\label{fig:BCC_pc}
\end{figure}

\paragraph{Implementation and calibration details} The PANN models are implemented and calibrated using the KLAX library (\url{https://drenderer.github.io/klax/}). The parameter optimization is performed with the ADAM optimizer, employing the full calibration dataset, a mini-batch size of 32 with batch shuffeling, without loss weighting, for $250\,000$ steps, and with a learning rate of $5 \cdot 10^{-3}$. For the SPH material, this is followed by a re-calibration with the SLSQP optimizer, employing the full calibration dataset, a mini-batch size of 32 with batch shuffeling, without loss weighting, for a maximum of 750 iterations, and with a solver tolerance of $5\cdot 10^{-32}$. In each investigation, the PANN models are calibrated five times, and the model with the lowest test loss is used for evaluation. By that, we account for the random initialization of the NN parameters and stochastic effects in the optimization process.
For the volumetric growth term (cf.~\cref{eq:PANN_growth}), the parameter $\alpha$ is set to $1 \cdot 10^{-5}$ hPa (MPa) for the BCC (SPH) material. In this way, it is ensured that the influence of the growth term within the deformations of the available calibration and test data is negligible, and that it gets active only once large volumetric compression occurs. The stress normalization terms are included in the model calibration process.

\subsection{PANN model evaluation}\label{sec:evaluation}

In the following, we investigate the performance of different PANN constitutive models for the homogenization data. Note that the BCC data is very sparse and only covers a few deformation modes, while the SPH data contains significantly more datapoints and includes more general deformation modes.

\begin{table}[t]
    \centering
    \caption{$\log_{10}$MSE for PANN models applied to the BCC and the SPH data. Each model is calibrated five times, and the results for the model instance with the lowest calibration MSE are provided.}
    \begin{tabular}{llp{0.075cm}lp{0.075cm}lp{0.075cm}lp{0.075cm}l}
    \toprule
  &&  & \textbf{PANN}--$\cC$ && \textbf{PANN}--$\cI$ && \textbf{PANN}--$\cC^*$ && \textbf{PANN}--$\cI^*$ \\
    \midrule
    \midrule
    \textbf{BCC} & calibration && -1.8 && -0.18 && -2.1 && -2.1 \\
     & test && -0.48 && 0.23 && -0.91&& -0.07 \\
     \midrule
    \textbf{SPH} & calibration && -2.45 && -1.75 && -3.53 && -3.24 \\
     & test && -2.38 && -1.7 && -3.38 && -3.12 \\    \bottomrule
    \end{tabular}
    \label{tab: loss}
\end{table}

\paragraph{Application to the BCC data} We apply all PANN models to the BCC data. In~\cref{tab: loss}, the loss values of the best calibrated model instances are provided, while in~\cref{fig:BCC_pc}, the stress predictions for the different calibrated models are visualized. Due to buckling phenomena in the BCC structure's beams, its stress response is highly nonlinear. In particular, the stress curve shows a distinct change of slope around the undeformed state for both uniaxial tension and the mixed test case, indicating a highly different behavior for tension and compression.
The polyconvex PANN--$\cC$ model has an excellent performance on both calibration cases and a decent prediction of the mixed test case. In contrast, the polyconvex PANN--$\cI$ model shows a moderate performance for the uniaxial load case and fails to predict both the shear calibration case and the mixed test case. While for the BCC material, the ellipticity of the effective material behavior cannot be evaluated directly, the excellent performance of the PANN--$\cC$ suggests that the BCC material remains elliptic within the considered deformation modes. 
Both non-polyconvex PANN models show an excellent stress prediction for the calibration cases. For the test case, the PANN--$\cC^*$ model also has an excellent performance, while the PANN--$\cI^*$ model has only a moderate performance. Notably, these models did not adopt the ellipticity of the BCC material, but show a pronounced loss of ellipticity across the investigated deformation scenarios. This is because the employed calibration data is very sparse. This loss of ellipticity would cause convergence issues in numerical simulations and therefore represents a significant limitation of the models.

\paragraph{Application to the SPH data} We also apply all PANN models to the SPH data. In~\cref{tab: loss}, the loss values of the best calibrated model instances are provided, while in~\cref{fig:SPH}, three representative load cases are visualized. Overall, the PANN--$\cI$ has exhibits the lowest performance, while the PANN--$\cC$ achieves a better performance, but still shows visible deviations from the ground truth material behavior. Both non-polyconvex models show an excellent performance in predicting the stress values. Recall that we considered only elliptic datapoints for the SPH material. Since we employed sufficient calibration data, the non-polyconvex PANN models could adopt the ellipticity of the SPH material, and are elliptic for all datapoints of the test and calibration datasets. 

\begin{figure}[t!]
\centering
\includegraphics[width=\textwidth]{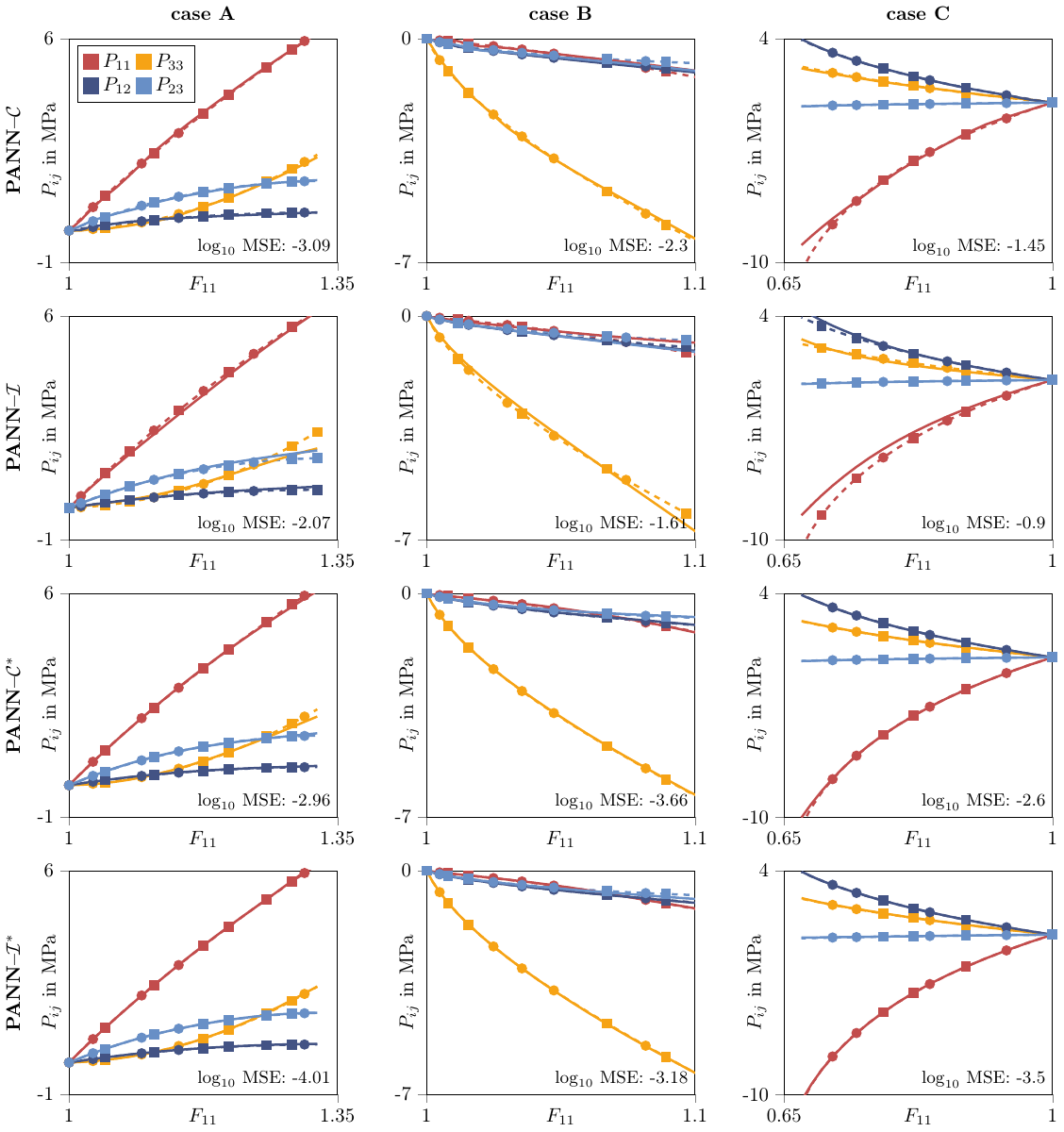}
\caption{Performance of different PANN models for the SPH data. Circles denote the calibration data, squares the test data, and solid lines the model predictions.} 
\label{fig:SPH}
\end{figure}

\paragraph{Discussion} For cubic anisotropy, both considered polyconvex PANN models fulfill all common constitutive conditions a priori. However, the PANN--$\cC$ outperforms the PANN--$\cI$ for the considered materials. This demonstrates that the specific choice of polyconvex model formulation can largely influence the model performance. For the BCC material with limited data, the PANN--$\cC$ further outperforms its non-polyconvex counterparts by providing a stable, i.e., elliptic, representation, whereas the latter exhibit a pronounced loss of ellipticity. In contrast, for the SPH material, the polyconvex models achieve only moderate performance, while the non-polyconvex models accurately captures the material response and, at least for the available data, learns an elliptic representation. This demonstrates that, given suitable calibration data, also non-polyconvex models can be employed when more flexibility is required. When only sparse calibration data is available, convexity-promoting loss terms could be included in the calibration process to improve the performance of non-polyconvex PANN models \parencite{kalina2024a}. However, non-polyconvex models come at the cost of losing ellipticity guarantees and can reduce the model generalization \parencite{Klein2026a,kalina2024a}. Consequently, non-polyconvex models should be employed with caution and only when strictly necessary.

\section{Conclusion}\label{sec:conc}

In this work, we addressed some open challenges in constitutive modeling for polyconvex anisotropic hyperelasticity and physics-augmented neural network (PANN) constitutive modeling. We proposed a new polyconvex anisotropic PANN constitutive model which is based on triclinic invariants and group symmetrization. For finite symmetry groups such as the cubic one, this model fulfills all common mechanical conditions of hyperelasticity by construction. Moreover, we contributed to the theory of polyconvex invariant theory, which is the basis of many polyconvex constitutive models. First, we provided a structured introduction into the construction of polyconvex integrity or functional bases. We proposed a group symmetrization-based method for the construction of polyconvex invariants for finite symmetry groups. Based on this, we derived a polyconvex integrity basis for the tetragonal $\cD_4$ group and a polyconvex functional basis for the cubic $\cO$ symmetry group. We furthermore discussed polyconvex integrity bases from the literature for the triclinic $\cC_1$, isotropic $\cK$, transversely isotropic $\cD_{\infty}$, monoclinic $\cC_2$, and rhombic $\cD_2$ symmetry groups. 

We compared the performance of a polyconvex PANN model based on invariants constructed via structural tensors and a polyconvex PANN model based on triclinic invariants and group symmetrization, by applying them to homogenization data of cubic metamaterials. Although, for this symmetry group, both approaches fulfill all common constitutive conditions of hyperelasticity a priori, the latter model outperforms the former. This demonstrates that the specific choice of polyconvex model formulation can have a pronounced influence on the model performance. 

In future works, the polyconvex PANN model based on triclinic invariants and group symmetrization should be extended to multiphysical material behavior, for which currently only approaches using structural tensor-based invariants are available \parencite{klein2022b,Fuhg_Jadoon_Weeger_Seidl_Jones_2024,kalina2024a,ORTIGOSA2025117741}. While for isotropy, a polyconvex constitutive model with universal approximation properties was recently proposed \parencite{wiedemann2023characterizationpolyconvexisotropicfunctions,Geuken_Kurzeja_Wiedemann_Mosler_2025}, the formulation of such models for anisotropy remains an open challenge. Closely related to this, polyconvex integrity or functional bases need to be established for all symmetry groups. The group symmetrization-based approach proposed in this work may provide a promising pathway to achieve this. Despite the advances presented in this work, many challenges remain, and polyconvex anisotropic hyperelasticity will continue to be a demanding and fascinating field of research.

\vspace*{3ex}

\noindent
\textbf{CRediT authorship contribution statement.} 
\textbf{D.K.\ Klein:} Conceptualization, Methodology, Software, Formal analysis, Investigation, Writing -- original draft, Writing -- review and editing, Visualization. 
\textbf{K.A.\ Kalina:} Conceptualization, Methodology, Formal analysis, Writing -- review and editing. 
\textbf{R. Ortigosa:} Conceptualization, Funding acquisition, Writing -- review and editing. 
\textbf{J. Mart\'inez-Frutos:} Conceptualization, Funding acquisition, Writing -- review and editing.   
\textbf{M. Kästner:} Funding acquisition, Writing -- review and editing. 
\textbf{O. Weeger:} Conceptualization, Funding acquisition, Writing -- review and editing. 
\vspace*{1ex}

\noindent
\textbf{Conflict of interest.} The authors declare that they have no conflict of interest.
\vspace*{1ex}

\noindent
\textbf{Acknowledgment.} 
The authors would like to thank Brain M. Riemer for helpful discussions and explanations on the theory of invariants. D.K.\ Klein and O.\ Weeger acknowledge the financial support provided by the Deutsche Forschungsgemeinschaft (DFG, German Research Foundation, project number 492770117) and the Graduate School of Computational Engineering at TU Darmstadt. K.A.\ Kalina and M.\ Kästner want to thank the DFG for the financial support within the Research Training Group
GRK 2868 D${}^3$–Project Number 493401063. R.\ Ortigosa and R.\ Mart\'inez-Frutos acknowledge the support of grant PID2022-141957OA-C22 funded by MICIU/AEI/10.13039/501100011033 and by ``ERDF A way of making Europe''. They also acknowledge the support provided by the Autonomous Community of the Region of Murcia, Spain through the programme for the development of scientific and technical research by competitive groups (21996/PI/22), included in the Regional Program for the Promotion of Scientific and Technical Research of Fundacion Seneca - Agencia de Ciencia y Tecnologia de la Region de Murcia.
\vspace*{1ex}

\noindent
\textbf{Data availability.}
The authors have no data to share.
\vspace*{1ex}

\noindent
\textbf{AI declaration.} This manuscript was written independently and linguistically revised with the help of ChatGPT, Grammarly, and DeepL.


\appendix
\numberwithin{equation}{section} 

\section{Notation}\label{app:notation}

Throughout this work, tensor spaces of rank greater than zero are denoted by
\begin{equation}
\cL_n:=\underbrace{\bbR^3\otimes\cdots\otimes\bbR^3}_{n-\text{times}}\,\,\forall n\in\bbN^+\,,
\end{equation}
where $\bbR^3$, $\bbN^+$, and $\otimes$ denote the Euclidean vector space in $\bbR^3$, the set of natural numbers excluding zero, and the dyadic product, respectively. First, second, and fourth order tensors are denoted by $\ba\in\cL_1$, $\bA\in\cL_2$, and $\bbA\in\cL_4$, respectively. 

Tensor compositions and contractions are denoted by $(\bA\cdot\bB)_{ij}=A_{ik}B_{kj}$, $(\bA\cdot\bb)_{i}=A_{ik}b_{k}$, $\ba\cdot\bb=a_ib_i$, $\bA:\bB=A_{ij}B_{ij}$, and $(\bbA:\bA)_{ij}=\bbA_{ijkl}A_{kl}$, respectively, where Einstein's summation convention is applied. 
The tensor cross product operator $\Cross$ is defined as $(\bA\Cross\bB)_{iI}=\mathcal{E}_{ijk}\mathcal{E}_{IJK}A_{jJ}B_{kK}$, where $\mathcal{E}_{IJK}$ is the permutation symbol. 
The transpose and inverse of a second order tensor $\bA$ are denoted by $\bA^T$ and $\bA^{-1}$, while trace, determinant, and cofactor are denoted by $\tr\bA$, $\det\bA$, and $\cof\bA:=\det(\bA)\bA^{-T}$, respectively. Norms of tensors of order one and two are given by $\norm{\ba}=\sqrt{a_ia_i}$ and $\norm{\bA}=\sqrt{A_{ij}A_{ij}}$, respectively.
The  first Fr\'echet derivative of a function $f$ w.r.t.\ $\bA$ is denoted by $\partial_{\bA}f$, the second Fr\'echet derivative w.r.t.\ $\bA$ and $\bB$ is denoted by $\partial^2_{\bA\bB}f$. Partial derivatives are denoted by $\partial_{\bA}f$, total derivatives are denoted by $d_{\bA}f$.
The gradient operator is denoted by $\nabla=\partial_{x_k}(\bullet) \be_k$ and the divergence operator by $\nabla\cdot$. A zero as a subscript next to the nabla symbol, i.e., $\nabla_0$, indicates that the operation is carried out in the reference configuration.

As common in the literature, we do not explicitly distinguish between the algebraic structure of a group and the associated set. For example, $\cG$ can stand for the group given by the tuple $(\cG,\,\cdot)$ or the set $\cG$ of tensors involved therein. We consider in the $\bbR^3$ the general linear group $\GL^+(3):=\big\{\bA \in\allowbreak \;\cL_2\,\rvert\,\allowbreak \det \bA >0\big\}$, the space of symmetric positive definite tensors $\SYM^+(3):=\big\{\bA \in\allowbreak \;\cL_2\,\rvert\,\allowbreak \bA=\bA^T,\,\ba\cdot\bA\cdot\ba>0\,\forall\ba\in\cL_1\backslash{\{\bnull}\}\big\}$, the orthogonal group $\operatorname{O}(3):=\big\{\bA \in\allowbreak \cL_2\;\rvert\allowbreak \;\bA^T\cdot\bA=\bI\big\}$, and the special orthogonal group $\SO(3):=\big\{\bA \in\allowbreak \cL_2\;\rvert\allowbreak \;\bA^T\cdot\bA=\bI,\;\det \bA =1\big\}$, where $\bI\in\cL_2$ denotes the second order identity tensor. 
To enhance readability, function arguments are omitted throughout this work unless their inclusion is required for clarity.


\section{Relations between different invariant bases}\label{app:inv_relations}

We now provide functional relations between different invariant bases introduced in~\cref{sec:inv_bases}. For each respective symmetry group, we provide 
\begin{enumerate}[label=(\roman*)]
\setlength\itemsep{0.2em}
\item a representation of the polyconvex invariants $I_i^{\square}$ in terms of the general invariants $J_i^{\square}$, to verify that the invariants $I_i^{\square}$ belong to the considered symmetry group, and
\item a representation of the general invariants $J_i^{\square}$ in terms of the polyconvex invariants $I_i^{\square}$, to investigate whether the latter form an integrity or functional basis for the considered symmetry group. 
\end{enumerate}
For (ii), we make use of the fact that we consider sets of general invariants $\bcI^{\square}:=(J_1^{\square},\,\dotsc,\,J_m^{\square})\in\bbR^m$ that form minimal integrity bases for each respective symmetry group. 
For some of the symmetry groups, these results are well-established, but we still provide them for the convenience of the reader. For triclinic anisotropy, both (i) and (ii) are given by linear relationships. Thus, we do not consider triclinic anisotropy in the following, and leave the calculations to the interested reader. After acceptance of the manuscript, we will provide symbolic MATLAB code to verify the following calculations on a public GitHub repository. 


\subsection{Isotropic material behavior}\label{app:inv_relations_iso}

We consider the isotropic $\cK$ group, with the general invariants $J_{1-3}^{\text{iso}}$ as defined in~\cref{eq:invs_iso_standard} and the polyconvex invariants $I_{1-3}^{\text{iso}}$ as defined in~\cref{eq:invs_iso}. The following results can also be found in, e.g., \textcite[Eq.~(4.29)]{itskov2015} and \textcite[Eq.~(3.18)]{Schroeder2003}.

\paragraph{(i) Representation of polyconvex invariants in terms of general invariants}
\begin{itemize}
\setlength\itemsep{0.2em}
\item We have the trivial relation $\IoneISO=\JoneISO$.
\item The invariant $\ItwoISO$ can be expressed as
\begin{equation}
\begin{aligned}
\ItwoISO =\frac{1}{2}\Big(\big(\JoneISO\big)^2-\JtwoISO\Big)\,.
\end{aligned}
\end{equation}
%
\item The invariant $\IthreeISO$ can be expressed as
\begin{equation}
\begin{aligned}
  \IthreeISO=\frac{1}{6}\big(\JoneISO\big)^3-\frac{1}{2} \JoneISO\JtwoISO+\frac{1}{3}\JthreeISO \,.  
\end{aligned}
\end{equation}
\end{itemize}
This verifies that $I_{1-3}^{\text{iso}}$ are invariants of the isotropic $\cK$ group.

\paragraph{(ii) Representation of general invariants in terms of polyconvex invariants}
\begin{itemize}
\setlength\itemsep{0.2em}
%
\item We have the trivial relation $\JoneISO=\IoneISO$.
%
\item The invariant $\JtwoISO$ can be expressed as
\begin{equation}
\begin{aligned}
\JtwoISO =\big(\IoneISO\big)^2-2\ItwoISO\,.
\end{aligned}
\end{equation}
%
\item The invariant $\JthreeISO$ can be expressed as
\begin{equation}
\begin{aligned}
  \JthreeISO=\big(\IthreeISO\big)^3-3\IoneISO\ItwoISO+3\IthreeISO\,.
  \end{aligned}
\end{equation}
\end{itemize}
The invariants $J_{1-3}^{\text{iso}}$ can be expressed as polynomials of the invariants $I_{1-3}^{\text{iso}}$. Thus, the invariants $I_{1-3}^{\text{iso}}$ form an integrity basis for the isotropic $\cK$ group.


\subsection{Transversely isotropic material behavior}\label{app:inv_relations_ti}

We consider the transversely isotropic $\cD_{\infty}$ group, with the general invariants $J_{1,2}^{\text{ti}}$ as defined in~\cref{eq:invs_ti_unr} and the polyconvex invariants $I_{1,2}^{\text{ti}}$ as defined in~\cref{eq:invs_ti_pc}. The following results can also be found in, e.g., \textcite[Eq.~(3.46)]{Schroeder2003}.

\paragraph{(i) Representation of polyconvex invariants in terms of general invariants}
\begin{itemize}
\setlength\itemsep{0.2em}
\item We have the trivial relation $\IoneTI=\JoneTI$.
%
\item The invariant $\ItwoTI$ can be expressed as
\begin{equation}
\begin{aligned}
\ItwoTI =\frac{1}{2}\Big(\big(\JoneISO\big)^2-\JtwoISO\Big)-\JoneISO\JoneTI+\JtwoTI\,.
\end{aligned}
\end{equation}
\end{itemize}
This verifies that $I_{1,2}^{\text{ti}}$ are invariants of the transversely isotropic $\cD_{\infty}$ group.

\paragraph{(ii) Representation of general invariants in terms of polyconvex invariants}
\begin{itemize}
\setlength\itemsep{0.2em}
%
\item We have the trivial relation $\JoneTI=\IoneTI$.
%
\item The invariant $\JtwoTI$ can be expressed as
\begin{equation}
\begin{aligned}
\JtwoTI =\IoneISO\IoneTI-\ItwoISO+\ItwoTI\,.
\end{aligned}
\end{equation}
\end{itemize}
The invariants $I^{\text{ti}}_{1,2}$ and $I^{\text{iso}}_{1-3}$ can represent $J^{\text{iso}}_{1-3}$ and $J^{\text{ti}}_{1,2}$ as polynomials, see also~App.~\ref{app:inv_relations_iso}. Thus, $I^{\text{ti}}_{1,2}$ and $I^{\text{iso}}_{1-3}$ form an integrity basis for the transversely isotropic$\cD_{\infty}$ group.


\subsection{Monoclinic material behavior}\label{app:inv_relations_mon}

We consider the monoclinic $\cC_2$ group, with the general invariants $J_{1-5}^{\text{mon}}$ as defined in~\cref{eq:invs_mono_unr} and the polyconvex invariants $I_{1-5}^{\text{mon}}$ as defined in~\cref{eq:invs_mon_pc}.

\paragraph{(i) Representation of polyconvex invariants in terms of general invariants}
\begin{itemize}
\setlength\itemsep{0.2em}
\item The invariant $\IoneMON$ can be expressed as
\begin{equation}
\begin{aligned}
\IoneMON =\frac{1}{2}\Big(\JtwoMON-\JoneMON\Big)\,.
\end{aligned}
\end{equation}
\item The invariant $\ItwoMON$ can be expressed as
\begin{equation}
\begin{aligned}
\ItwoMON = \JthreeMON-3\JoneMON\,.
\end{aligned}
\end{equation}
%
\item The invariant $\IthreeMON$ can be expressed as
\begin{equation}
\begin{aligned}
\IthreeMON =6\JthreeMON-10\JoneMON\,.
\end{aligned}
\end{equation}
%
\item The invariant $\IfourMON$ can be expressed as
\begin{equation}
\begin{aligned}
\IfourMON =\frac{1}{4}\Big(\big(\JoneISO\big)^2-\JtwoISO\Big)+\frac{1}{2}\Big(\JfourMON-\JoneISO\JtwoMON\Big)+\frac{1}{8}\Big(\big(\JtwoMON\big)^2+\big(\JthreeMON\big)^2-\big(\JoneMON\big)^2\Big)\,.
\end{aligned}
\end{equation}
%
\item The invariant $\IfiveMON$ can be expressed as
\begin{equation}
\begin{aligned}
\IfiveMON =\JfiveMON-\JoneISO\JthreeMON+\frac{3}{2}\Big(\big(\JoneISO\big)^2-\JtwoISO\Big)
+\frac{3}{4}\Big(\big(\JthreeMON\big)^2-\big(\JoneMON\big)^2+\big(\JtwoMON\big)^2\Big)\,.
\end{aligned}
\end{equation}
\end{itemize}
This verifies that $I_{1-5}^{\text{mon}}$ are invariants of the monoclinic $\cC_2$ group.

\paragraph{(ii) Representation of general invariants in terms of polyconvex invariants}
\begin{itemize}
\setlength\itemsep{0.2em}
%
\item The invariant $\JoneMON$ can be expressed as
\begin{equation}
\begin{aligned}
\JoneMON =\frac{1}{8}\IthreeMON-\frac{3}{4}\ItwoMON\,.
\end{aligned}
\end{equation}
%
\item The invariant $\JtwoMON$ can be expressed as
\begin{equation}
\begin{aligned}
\JtwoMON =2\IoneMON-\frac{3}{4}\ItwoMON+\frac{1}{8}\IthreeMON\,.
\end{aligned}
\end{equation}
%
\item The invariant $\JthreeMON$ can be expressed as
\begin{equation}
\begin{aligned}
\JthreeMON =\frac{3}{8}\IthreeMON-\frac{5}{4}\ItwoMON\,.
\end{aligned}
\end{equation}
%
\item The invariant $\JfourMON$ can be expressed as
\begin{equation}
\begin{aligned}
\JfourMON =&-\ItwoISO-\big(\IoneMON\big)^2+2\Big(\IoneISO\IoneMON+\IfourMON\Big)+\frac{1}{8}\Big(\IoneISO\IthreeMON-\IoneMON\IthreeMON\Big)
\\
&+\frac{3}{4}\Big(\IoneMON\ItwoMON-\IoneISO\ItwoMON\Big)-\frac{9}{256}\big(\IthreeMON\big)^2+\frac{15}{64}\ItwoMON\IthreeMON-\frac{25}{64}\big(\ItwoMON\big)^2\,.
\end{aligned}
\end{equation}
%
\item The invariant $\JfiveMON$ can be expressed as
\begin{equation}
\begin{aligned}
\JfiveMON =&\IfiveMON-3\Big(\big(\IoneMON\big)^2+\ItwoISO\Big)+\frac{9}{4}\IoneMON\ItwoMON
-\frac{5}{4}\IoneISO\ItwoMON+\frac{3}{8}\Big(\IoneISO\IthreeMON-\IoneMON\IthreeMON\Big)
\\
&+\frac{45}{64}\ItwoMON\IthreeMON-\frac{75}{64}\big(\ItwoMON\big)^2
-\frac{27}{256}\big(\IthreeMON\big)^2
\end{aligned}
\end{equation}
\end{itemize}
The invariants $I^{\text{mon}}_{1-5}$ and $I^{\text{iso}}_{1,2}$ can represent $J^{\text{iso}}_{1,2}$ and $J^{\text{mon}}_{1-5}$ as polynomials, see also~App.~\ref{app:inv_relations_iso}. Thus, $I^{\text{mon}}_{1-5}$ and $I^{\text{iso}}_{1,2}$ form an integrity basis for the monoclinic $\cC_2$ group.


\subsection{Rhombic material behavior}\label{app:inv_relations_rho}

We consider the rhombic $\cD_2$ group, with the general invariants $J_{1-4}^{\text{rho}}$ as defined in~\cref{eq:invs_rho_unr} and the polyconvex invariants $I_{1-4}^{\text{rho}}$ as defined in~\cref{eq:invs_rho_pc}. Similar results can be found in \textcite[Eq.~(3.46)]{Schroeder2003}, where a different structural tensor for the general invariants is employed.

\paragraph{(i) Representation of polyconvex invariants in terms of general invariants}
\begin{itemize}
\setlength\itemsep{0.2em}
\item The invariant $\IoneRHO$ can be expressed as
\begin{equation}
\begin{aligned}
\IoneRHO =\frac{1}{2}\Big(\JoneRHO+\JtwoRHO\Big)\,.
\end{aligned}
\end{equation}
\item The invariant $\ItwoRHO$ can be expressed as
\begin{equation}
\begin{aligned}
\ItwoRHO =\frac{1}{2}\Big(\JtwoRHO-\JoneRHO\Big)\,.
\end{aligned}
\end{equation}
%
\item The invariant $\IthreeRHO$ can be expressed as
\begin{equation}
\begin{aligned}
\IthreeRHO =\frac{1}{2}\Big(\big(\JoneISO\big)^2-\JoneISO\JoneRHO-\JoneISO\JtwoRHO-\JtwoISO+\JthreeRHO+\JfourRHO\Big)\,.
\end{aligned}
\end{equation}
%
\item The invariant $\IfourRHO$ can be expressed as
\begin{equation}
\begin{aligned}
\IfourRHO =\frac{1}{2}\Big(\big(\JoneISO\big)^2+\JoneISO\JoneRHO-\JoneISO\JtwoRHO-\JtwoISO-\JthreeRHO+\JfourRHO\Big)\,.
\end{aligned}
\end{equation}
\end{itemize}
This verifies that $I_{1-4}^{\text{rho}}$ are invariants of the rhombic $\cD_2$ group.

\paragraph{(ii) Representation of general invariants in terms of polyconvex invariants}
\begin{itemize}
\setlength\itemsep{0.2em}
%
\item The invariant $\JoneRHO$ can be expressed as
\begin{equation}
\begin{aligned}
\JoneRHO =\IoneRHO-\ItwoRHO\,.
\end{aligned}
\end{equation}
%
\item The invariant $\JtwoRHO$ can be expressed as
\begin{equation}
\begin{aligned}
\JtwoRHO =\IoneRHO+\ItwoRHO\,.
\end{aligned}
\end{equation}
%
\item The invariant $\JthreeRHO$ can be expressed as
\begin{equation}
\begin{aligned}
\JthreeRHO =\IoneISO\IoneRHO-\IoneISO\ItwoRHO+\IthreeRHO-\IfourRHO\,.
\end{aligned}
\end{equation}
%
\item The invariant $\JfourRHO$ can be expressed as
\begin{equation}
\begin{aligned}
\JfourRHO =\IoneISO\IoneRHO+\IoneISO\ItwoRHO-2\ItwoISO+\IthreeRHO+\IfourRHO\,.
\end{aligned}
\end{equation}
\end{itemize}
The invariants $I^{\text{rho}}_{1-4}$ and $I^{\text{iso}}_{1-3}$ can represent $J^{\text{rho}}_{1-4}$ and $J^{\text{iso}}_{1-3}$ as polynomials, see also~App.~\ref{app:inv_relations_iso}. Thus, $I^{\text{mon}}_{1-5}$ and $I^{\text{iso}}_{1-3}$ form an integrity basis for the rhombic $\cD_2$ group.


\subsection{Tetragonal material behavior}\label{app:inv_relations_tet}

We consider the tetragonal $\cD_4$ group, with the general invariants $J_{1-5}^{\text{tet}}$ as defined in~\cref{eq:invs_tet_brain} and the polyconvex invariants $I_{1-5}^{\text{tet}}$ as defined in~\cref{eq:tet_invs_pc_basis}.

\paragraph{(i) Representation of polyconvex invariants in terms of general invariants}

\begin{itemize}
\setlength\itemsep{0.2em}
\item We have the trivial relations $\IoneTET=\JoneTET$ and $\ItwoTET=\JtwoTET$.
\item The invariant $\IthreeTET$ can be expressed as
\begin{equation}
\begin{aligned}
\IthreeTET =\JthreeTET + \big(\JoneISO\big)^2-\JoneTET\JoneISO-\JtwoISO\,.
\end{aligned}
\end{equation}
%
\item The invariant $\IfourTET$ can be expressed as
\begin{equation}
\begin{aligned}
  \IfourTET=&\frac{1}{2}\Big(\big(\JoneISO\big)^4+\big(\JtwoISO\big)^2\Big)
  -\big(\JoneISO\big)^3\JoneTET-\big(\JoneISO\big)^2\JtwoISO+\big(\JoneISO\big)^2\JtwoTET
  \\&+\big(\JoneISO\big)^2\JthreeTET+\JoneISO\JtwoISO\JoneTET-2\JoneISO\JfourTET-\JtwoISO\JthreeTET+\JfiveTET\,.
  \end{aligned}
\end{equation}
%
\item The invariant $\IfiveTET$ can be expressed as
\begin{equation}
\begin{aligned}
\IfiveTET = \ItwoTET+\IfourTET+2\bC:\bMtet:\bG\,,
\end{aligned}
\end{equation}
with
\begin{equation}
\begin{aligned}
\bC:\bMtet:\bG=\frac{1}{2}\Big(\big(\JoneISO\big)^2\JoneTET-\JtwoISO\JoneTET\Big)
-\JoneISO\JtwoTET+\JfourTET=\frac{1}{2}\big(\IfiveTET-\ItwoTET-\IfourTET\big)\,.
\end{aligned}
\end{equation}
\end{itemize}
This verifies that $I_{1-5}^{\text{tet}}$ are invariants of the tetragonal $\cD_4$ group.

\paragraph{(ii) Representation of general invariants in terms of polyconvex invariants}
\begin{itemize}
\setlength\itemsep{0.2em}
\item We have the trivial relations $\JoneTET=\IoneTET$ and $\JtwoTET=\ItwoTET$.
\item The invariant $\JthreeTET$ can be expressed as
\begin{equation}
\begin{aligned}
\JthreeTET =\IoneISO\IoneTET-2\ItwoISO+\IthreeTET\,.
\end{aligned}
\end{equation}
%
\item The invariant $\JfourTET$ can be expressed as
\begin{equation}
\begin{aligned}
  \JfourTET=\IoneISO\ItwoTET-\ItwoISO\IoneTET+\bC:\bMtet:\bG\,.
\end{aligned}
\end{equation}
%
\item The invariant $\JfiveTET$ can be expressed as
\begin{equation}
\begin{aligned}
\JfiveTET = \big(\IoneISO\big)^2\ItwoTET+\IfourTET+2\Big(\big(\ItwoISO\big)^2-\IoneISO\ItwoISO\IoneTET+\IoneISO\bC:\bMtet:\bG-\ItwoISO\IthreeTET\Big)\,.
\end{aligned}
\end{equation}
\end{itemize}
The invariants $I^{\text{tet}}_{1-5}$ and $I^{\text{iso}}_{1-3}$ can represent $J^{\text{tet}}_{1-5}$ and $J^{\text{iso}}_{1-3}$ as polynomials, see also~App.~\ref{app:inv_relations_iso}. Thus, $I^{\text{tet}}_{1-5}$ and $I^{\text{iso}}_{1-3}$ form an integrity basis for the tetragonal $\cD_4$ group.


\subsection{Cubic material behavior}\label{app:inv_relations_cub}

We consider the cubic $\cO$ group, with the general invariants $J_{1-6}^{\text{cub}}$ as defined in~\cref{eq:invs_cub_karl} and the polyconvex invariants $I_{1-6}^{\text{cub}}$ as defined in~\cref{eq:cub_invs_pc_basis}.

\paragraph{(i) Representation of polyconvex invariants in terms of general invariants}

\begin{itemize}
\setlength\itemsep{0.2em}
\item We have the trivial relations $\IoneCUB=\JoneCUB$ and $\ItwoCUB=\JtwoCUB$.
%
\item The invariant $\IthreeCUB$ can be expressed as
\begin{equation}
\begin{aligned}
& \IthreeCUB =\IoneCUB+\IfourCUB+2\bC:\bMcub:\bG\,,
\end{aligned}
\end{equation}
with
\begin{equation}
\begin{aligned}
\bC:\bMcub:\bG=\JthreeCUB-\JoneISO\JoneCUB+\frac{1}{2}\Big(\big(\JoneISO\big)^3-\JoneISO\JtwoISO\Big)=\frac{1}{2}\Big(\IthreeCUB-\IoneCUB-\IfourCUB\Big)\,.
\end{aligned}
\end{equation}
%
\item The invariant $\IfourCUB$ can be expressed as
\begin{equation}
\begin{aligned}
\IfourCUB=\JfourCUB+\big(\JoneISO\big)^2\JoneCUB-2\JoneISO\JthreeCUB+\frac{1}{2}\big(\JoneISO\big)^2\JtwoISO-\frac{1}{4}\Big(\big(\JtwoISO\big)^2+\big(\JoneISO\big)^4\Big)\,.
\end{aligned}
\end{equation}
%
\item The invariant $\IfiveCUB$ can be expressed as
\begin{equation}\label{eq:cub_invs_pc_basis_5}
\begin{aligned}
\IfiveCUB=\tr\left(\bMcub:\bC\right)^4+\IfourCUB+2\sum_{i=1}^3\left[\bn_{0}^{(i)}\otimes\bn_{0}^{(i)}:\bC\right]^2\left[\bn_{0}^{(i)}\otimes\bn_{0}^{(i)}:\bG\right]\,,
\end{aligned}
\end{equation}
with
\begin{equation}\label{eq:cub_invs_pc_basis_6}
\begin{aligned}
\tr\left(\bMcub:\bC\right)^4&=\frac{1}{2}\big(\JoneCUB\big)^2-\JoneCUB\big(\JoneISO\big)^2+\frac{1}{6}\big(\JoneISO\big)^4+\frac{4}{3}\JtwoCUB\JoneISO
\\
&=\frac{1}{6}\big(\IoneISO\big)^4-\big(\IoneISO\big)^2\IoneCUB+\frac{4}{3}\IoneISO\ItwoCUB+\frac{1}{2}\big(\IoneCUB\big)^2\,,
\end{aligned}
\end{equation}
and
\begin{equation}
\begin{aligned}
\sum_{i=1}^3\left[\bn_{0}^{(i)}\otimes\bn_{0}^{(i)}:\bC\right]^2\left[\bn_{0}^{(i)}\otimes\bn_{0}^{(i)}:\bG\right]&=\JfiveCUB-\JtwoCUB\JoneISO+\frac{1}{2}\Big(\JoneCUB\big(\JoneISO\big)^2-\JoneCUB\JtwoISO\Big)
\\
&=\frac{1}{2}\Big(\IfiveCUB-\tr\left(\bMcub:\bC\right)^4-\IfourCUB\Big)
\end{aligned}
\end{equation}
%
\item The invariant $\IsixCUB$ can be expressed as
\begin{equation}\label{eq:cub_invs_pc_basis_8}
\begin{aligned}
\IsixCUB=&-\JsixCUB\JoneISO+\frac{3}{4}\big(\JoneISO\big)^4\JtwoISO-\frac{7}{24}\big(\JoneISO\big)^6+\frac{9}{8}\big(\JoneISO\big)^4\JoneCUB-\frac{2}{3}\big(\JoneISO\big)^3\JtwoCUB-2\big(\JoneISO\big)^3\JthreeCUB
\\
&-\frac{5}{8}\big(\JoneISO\big)^2\big(\JtwoISO\big)^2-\frac{3}{4}\big(\JoneISO\big)^2\JtwoISO\JoneCUB+\big(\JoneISO\big)^2\JfourCUB+\frac{3}{2}\big(\JoneISO\big)^2\JfiveCUB+\JoneISO\JtwoISO\JthreeCUB
\\
&+\frac{1}{4}\big(\JtwoISO\big)^3-\frac{1}{8}\big(\JtwoISO\big)^2\JoneCUB-\frac{3}{4}\JtwoISO\JfourCUB+\frac{1}{2}\JtwoISO\JfiveCUB-\frac{1}{12}\big(\JthreeISO\big)^2
\\
&-\frac{1}{3}\JthreeISO\JtwoCUB+\frac{1}{2}\JthreeISO\JthreeCUB-\frac{1}{4}\JoneCUB\JfourCUB+\frac{1}{4}\big(\JthreeCUB\big)^2\,.
\\
\IsixCUB=&-\JsixCUB\JoneISO+\big(\JoneISO\big)^2\JfourCUB-2\big(\JoneISO\big)^3\JthreeCUB+\JoneISO\JtwoISO\JthreeCUB
\\
&+\frac{1}{2}\Big(\JtwoISO\JfiveCUB+\JthreeISO\JthreeCUB+3\big(\JoneISO\big)^2\JfiveCUB\Big)
-\frac{1}{3}\Big(\JthreeISO\JtwoCUB+2\big(\JoneISO\big)^3\JtwoCUB\Big)
\\
&+\frac{1}{4}\Big(\big(\JtwoISO\big)^3-\JoneCUB\JfourCUB+\big(\JthreeCUB\big)^2\Big)
+\frac{3}{4}\Big(\big(\JoneISO\big)^4\JtwoISO-\JtwoISO\JfourCUB-\big(\JoneISO\big)^2\JtwoISO\JoneCUB\Big)
\\
&-\frac{1}{12}\big(\JthreeISO\big)^2
-\frac{1}{8}\Big(\big(\JtwoISO\big)^2\JoneCUB+9\big(\JoneISO\big)^4\JoneCUB-5\big(\JoneISO\big)^2\big(\JtwoISO\big)^2\Big)
-\frac{7}{24}\big(\JoneISO\big)^6
\end{aligned}
\end{equation}
\end{itemize}
This verifies that $I_{1-6}^{\text{cub}}$ are invariants of the cubic $\cO$ group.

\paragraph{(ii) Representation of general invariants in terms of polyconvex invariants}

\begin{itemize}
\setlength\itemsep{0.2em}
\item We have the trivial relations $\JoneCUB=\IoneCUB$ and $\JtwoCUB=\ItwoCUB$.
%
\item The invariant $\JthreeCUB$ can be expressed as
\begin{equation}
\begin{aligned}
& \JthreeCUB =\IoneISO\IoneCUB-\IoneISO\ItwoISO+\bC:\bMcub:\bG\,.
\end{aligned}
\end{equation}
\item The invariant $\JfourCUB$ can be expressed as
\begin{equation}
\begin{aligned}
\JfourCUB=\Big(\IoneISO\Big)^2\IoneCUB-2\Big(\IoneISO\Big)^2\ItwoISO+2\IoneISO\bC:\bMcub:\bG+\Big(\ItwoISO\Big)^2+\IfourCUB\,.
\end{aligned}
\end{equation}
%
\item The invariant $\JfiveCUB$ can be expressed as
\begin{equation}
\begin{aligned}
\JfiveCUB=\IoneISO\ItwoCUB-\ItwoISO\IoneCUB+\sum_{i=1}^3\left[\bn_{0}^{(i)}\otimes\bn_{0}^{(i)}:\bC\right]^2\left[\bn_{0}^{(i)}\otimes\bn_{0}^{(i)}:\bG\right]\,.
\end{aligned}
\end{equation}
%
%
\item The invariant $\JsixCUB$ can be expressed as
\begin{equation}\label{eq:Jsix_representation}
\begin{aligned}
\JsixCUB&=\big(\IoneISO\big)^2\ItwoCUB-\ItwoISO\bC:\bMcub:\bG+2\Big(\IoneISO\sum_{i=1}^3\left[\bn_{0}^{(i)}\otimes\bn_{0}^{(i)}:\bC\right]^2\left[\bn_{0}^{(i)}\otimes\bn_{0}^{(i)}:\bG\right]
\\
&-\IoneISO\ItwoISO\IoneCUB\Big)+\frac{1}{2}\Big(3\IthreeISO\IoneCUB-\big(\IoneISO\big)^2\IthreeISO\Big)+\frac{1}{4}\Big(3\IoneISO\big(\ItwoISO\big)^2+\IoneISO\IfourCUB\Big)
\\
&+\frac{1}{\IoneISO}\Bigg[-\IthreeISO\ItwoCUB-\IsixCUB-\ItwoISO\sum_{i=1}^3\left[\bn_{0}^{(i)}\otimes\bn_{0}^{(i)}:\bC\right]^2\left[\bn_{0}^{(i)}\otimes\bn_{0}^{(i)}:\bG\right]
\\
&+\frac{1}{2}\Big(3\ItwoISO\IfourCUB-\Big(\ItwoISO\Big)^3+3\IthreeISO\bC:\bMcub:\bG\Big)+\frac{1}{4}\Big(\big(\bC:\bMcub:\bG\big)^2
\\
&+\big(\ItwoISO\big)^2\IoneCUB-3\big(\IthreeISO\big)^2-\IoneCUB\IfourCUB\Big)\Bigg]
\end{aligned}
\end{equation}
\end{itemize}
The invariants $I^{\text{cub}}_{1-6}$ and $I^{\text{iso}}_{1-3}$ can represent $J^{\text{cub}}_{1-6}$ and $J^{\text{iso}}_{1-3}$, see also~App.~\ref{app:inv_relations_iso}. Thereby, $I^{\text{cub}}_{1-5}$ can be formulated as polynomials of $I^{\text{cub}}_{1-5}$ and $I^{\text{iso}}_{1-3}$. However, the representation of $\JsixCUB$ in terms of $I^{\text{cub}}_{1-6}$ and $I^{\text{iso}}_{1-3}$ is not a polynomial, as it includes a division by $\IoneISO$, cf.~\cref{eq:Jsix_representation}. Thus, $I^{\text{cub}}_{1-6}$ and $I^{\text{iso}}_{1-3}$ form only a functional basis but not an integrity basis for the cubic $\cO$ group. Note that $\IoneISO>0$, which means that \cref{eq:Jsix_representation} is well-defined.

\renewcommand*{\bibfont}{\footnotesize}
\printbibliography
\end{document}